\documentclass{article}
\usepackage{wright}
\newcommand{\dtr}{\mathrm{D}_{\tr}}
\newcommand{\fidelity}{\mathrm{F}}
\newcommand{\spec}{\mathrm{spec}}
\newcommand{\swap}{\mathrm{SWAP}}

\title{Beating full state tomography \\ for unentangled spectrum estimation}
\author{Angelos Pelecanos\thanks{UC Berkeley. \texttt{apelecan,ewin,jswright@berkeley.edu}} 
\and Xinyu Tan\thanks{MIT. \texttt{norahtan@mit.edu}} 
\and Ewin Tang\footnotemark[1] 
\and John Wright\footnotemark[1]}

\date{}

\begin{document}

\maketitle

\begin{abstract}
How many copies of a mixed state $\rho \in \mathbb{C}^{d \times d}$ are needed to learn its spectrum?
To date, the best known algorithms for spectrum estimation require as many copies as full state tomography, 
suggesting the possibility that learning a state's spectrum might be as difficult as learning the entire state. 
We show that this is not the case in the setting of unentangled measurements, by giving a spectrum estimation algorithm that uses $n = O(d^3\cdot (\log\log(d) / \log(d))^4 )$ copies of $\rho$,
which is asymptotically fewer than the $n = \Omega(d^3)$ copies necessary for full state tomography.
Our algorithm is inspired by the technique of local moment matching from classical statistics, and shows how it can be applied in the quantum setting.

As an important subroutine in our spectrum estimation algorithm, we give an estimator of the $k$-th moment $\tr(\rho^k)$ which performs unentangled measurements and uses $O(d^{3-2/k})$ copies of $\rho$ in order to achieve a constant multiplicative error. This directly translates to an additive-error estimator of quantum R\'{e}nyi entropy of order $k$ with the same number of copies.

Finally, we present numerical evidence that the sample complexity of spectrum estimation can only improve over full state tomography by a sub-polynomial factor.
Specifically, for spectrum learning with fully entangled measurements, we run simulations which suggest a lower bound of $\Omega(d^{2 - \gamma})$ copies for any constant $\gamma > 0$.
From this, we conclude the current best lower bound of $\Omega(d)$ is likely not tight.
\end{abstract}

\newpage
\hypersetup{linktocpage}
\tableofcontents

\newpage
\section{Introduction}

We study the fundamental learning theoretic task of estimating a mixed state $\rho$'s spectrum given access to identical copies of $\rho$. If $\rho$ is $d$-dimensional, its spectrum can be written as $\alpha = (\alpha_1, \ldots, \alpha_d)$, where $\alpha_1 \geq \cdots \geq\alpha_d$.
In this case, the goal is to output an estimator $\widehat{\balpha}$ which is $\eps$-close in total variation distance to $\alpha$, $\dtv{\alpha}{\widehat{\balpha}}\leq \epsilon$,
with probability~99\%.

The spectrum captures all unitarily invariant statistics of a mixed state $\rho$, and so many important properties can be derived from it.
For example, $\alpha$ encodes the purity of a state: $\rho$ is a pure state if $\alpha = (1, 0, \ldots, 0)$, and $\rho$ is the maximally mixed state if $\alpha = (\frac{1}{d}, \ldots, \frac{1}{d})$.
Likewise, if $\rho = \psi_A$ is the reduced density matrix of a bipartite pure state $\ket{\psi_{AB}}$,
then its spectrum $\alpha$ coincides with the Schmidt coefficients of $\ket{\psi_{AB}}$,
and so $\alpha$ encodes many interesting properties of $\ket{\psi_{AB}}$'s entanglement.
For example, $\ket{\psi_{AB}}$ is unentangled if $\alpha = (1, 0, \ldots, 0)$ and entangled otherwise.
If it is entangled, then the amount of entanglement can be quantified by the \emph{entanglement entropy} of $\ket{\psi_{AB}}$, which is equal to the \emph{von Neumann entropy} of $\rho$, in turn equal to the \emph{Shannon entropy} of $\alpha$, $\mathrm{H}(\alpha) = \sum_{i=1}^d \alpha_i \cdot \log_2(1/\alpha_i)$.
The importance of the spectrum
has led to a variety of theoretical works giving algorithms for estimating the spectrum of~$\rho$~\cite{ARS88,KW01,HM02,CM06,OW15,OW16,OW17a}
and for estimating  Shannon and R\'{e}nyi entropies of $\alpha$~\cite{AISW20,BMW17,OW15,BOW19}.

One final application of spectrum estimation is as a necessary ingredient in any quantum state tomography algorithm.
Quantum state tomography entails computing an estimate $\widehat{\brho}$ of the state $\rho$ such that $\dtr(\rho, \widehat{\brho})\leq \epsilon$,
and this requires estimating both $\rho$'s eigenvalues \emph{and} its eigenvectors.
Indeed, two of the sample optimal entangled tomography algorithms~\cite{OW16,HHJ+16} begin by first running a well-studied spectrum estimation algorithm known as the \emph{Empirical Young Diagram (EYD) algorithm}~\cite{ARS88,KW01} (also known as the \emph{Keyl--Werner algorithm}) in order to estimate $\rho$'s spectrum.
However, even tomography algorithms without an explicit spectrum estimation subroutine must still be implicitly learning the spectrum.
This is because if $\widehat{\balpha}$ is the spectrum of $\widehat{\brho}$, then $\dtv{\alpha}{\widehat{\balpha}} \leq \dtr(\rho, \widehat{\brho}) \leq \epsilon$
(see~\cite[Proposition 2.2]{OW15} for a proof of this fact).

Spectrum estimation is therefore always possible with a number of samples equal to the number of samples needed for full state tomography.
But does spectrum estimation \emph{require} the same number of copies as full state tomography, or can it can be solved with asymptotically fewer copies?
To our knowledge, this question was first posed in \cite[Section 10.2]{Wri16}, and it remains open for both entangled and unentangled measurements.
In the case of entangled measurements,
all that is known is that spectrum estimation can be solved in $n = O(d^2/\epsilon^2)$ copies and requires $n = \Omega(d/\epsilon^2)$ copies.
This is because full state tomography can be solved in $n = \Theta(d^2/\epsilon^2)$ copies~\cite{OW16,HHJ+16} (though the EYD spectrum estimation algorithm can be analyzed independently of full state tomography, but it too requires $n = \Theta(d^2/\epsilon^2)$ copies~\cite{OW15,OW16}),
and testing if $\alpha = (\frac{1}{d}, \ldots, \frac{1}{d})$, i.e.\ if $\rho$ is the maximally mixed state, is known to require $n = \Omega(d/\epsilon^2)$ copies~\cite{OW15}.
In the case of unentangled measurements, all that is known is that
spectrum estimation can be solved in $n = O(d^3/\epsilon^2)$ copies via full state tomography~\cite{KRT14}, and that $n = \Omega(d^{1.5}/\epsilon^2)$ copies are needed even to test if $\rho$ is maximally mixed~\cite{CHLL22}.
This gives a quadratic gap between upper and lower bounds for spectrum estimation in both entangled and unentangled cases.
In our experience, experts seem divided about whether spectrum estimation should require the same number of copies as full state tomography,
whether it can be solved with quadratically fewer copies,
or whether the truth lies somewhere in between.

The main result of this work is the following.

\begin{theorem}[Main result]\label{thm:main}
    There is an algorithm which solves spectrum estimation in
    \begin{equation*}
        n = O\Big(d^3 \cdot \Big(\frac{\log\log(d)}{\log(d)}\Big)^4 \cdot \frac{1}{ \epsilon^6}\Big)
    \end{equation*}
    copies using unentangled measurements.
\end{theorem}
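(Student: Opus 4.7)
The approach is a quantum analogue of the classical \emph{local moment matching} technique developed by Han--Jiao--Weissman and Wu--Yang (with roots in Valiant--Valiant). The key observation is that, although we cannot sample directly from the spectrum $\alpha$, the moment-estimation subroutine promised in the abstract lets us compute $p_k(\alpha) = \tr(\rho^k) = \sum_i \alpha_i^k$ using only $O(d^{3-2/k}/\epsilon_k^2)$ unentangled-measurement copies, which remains sublinear in $d^3$ even for $k$ growing slowly with $d$. We therefore use $K = \Theta(\log d / \log\log d)$ moments to pin down the spectrum through a convex feasibility program.

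First, fix a threshold $\tau$ and split the eigenvalues into a \emph{large} set $L = \{i : \alpha_i > \tau\}$, of size at most $1/\tau$, and its \emph{small} complement $S$. The large eigenvalues are estimated directly by an inexpensive unentangled subroutine: one identifies an approximate leading subspace of $\rho$ (for example via classical shadows) and then applies partial tomography within it; the cost is negligible compared to the tail budget because $|L|$ is small. In parallel, for $k = 1, \ldots, K$, invoke the moment estimator to obtain $\widehat{p}_k \approx \tr(\rho^k)$, and then subtract the contribution of the estimated large eigenvalues to produce $\widehat{q}_k$, an estimate of $\sum_{i \in S} \alpha_i^k$.

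Next, reconstruct the tail by solving the feasibility program: find a nonnegative measure $\widehat{\mu}$ supported on $[0,\tau]$ with total mass $1 - \sum_{i \in L} \widehat{\alpha}_i$ whose first $K$ moments match the $\widehat{q}_k$ within the budgeted additive error. This reduces to a finite-dimensional LP after discretizing $[0,\tau]$. The final estimator $\widehat{\alpha}$ is the concatenation of the large-eigenvalue estimates with the atoms of $\widehat{\mu}$. Correctness comes from a Chebyshev-polynomial moment-matching inequality: two nonnegative measures on $[0, \tau]$ whose first $K$ moments agree in a suitably weighted sense must be close in Wasserstein-$1$ distance, which in turn controls $\dtv{\alpha}{\widehat{\balpha}}$ on the sorted spectra. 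With $K = \Theta(\log d/\log\log d)$ the dominant cost is the $k = K$ moment and equals $O(d^{3-2/K}/\epsilon_K^2) = O(d^3 (\log\log d / \log d)^4 / \epsilon_K^2)$; balancing the error sources forces $\epsilon_K$ to scale like $\epsilon^3$, producing the $\epsilon^{-6}$ factor.

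The main obstacle will be proving the quantitative moment-matching inequality in the right form: classical local moment matching is usually phrased for distributions of the form ``$n$ samples from a distribution on $[n]$,'' whereas here $\tau$ is a tuning parameter that must be chosen thick enough for the moment estimates to determine the tail yet thin enough that the large eigenvalues are few and cheap to learn. A secondary obstacle is error propagation through the subtraction step: an additive error in a large eigenvalue estimate is raised to the $k$-th power when we remove its contribution from $\widehat{p}_k$, so the large-eigenvalue subroutine must be accurate enough for all $k \leq K$. Both difficulties are resolved by choosing $\tau$, $K$, and the per-moment multiplicative accuracies to jointly balance the copy budget against the total-variation target.
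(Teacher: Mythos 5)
Your overall architecture (split into large and small eigenvalues, estimate the large ones directly, recover the small ones by moment matching with $K = \Theta(\log d/\log\log d)$ moments, then round an LP feasibility program) does match the paper's. But there is a fatal gap in how you propose to obtain the tail moments. You propose to estimate the \emph{global} moments $\widehat{p}_k \approx \tr(\rho^k)$ and then subtract the contribution of the estimated large eigenvalues to form $\widehat{q}_k \approx \sum_{i\in S}\alpha_i^k$. This step cannot work within the copy budget, and you flag the subtraction error as ``secondary'' when in fact it is the whole ballgame. The variance of any moment estimator for $\tr(\rho^k)$ scales with the moments of $\rho$ itself (in the paper's estimator, $\Var[\bZ_k]$ involves $\tr(\rho^{2j})$ for $j < k$), and when $\rho$ has an eigenvalue of order $1$ these are $\Theta(1)$. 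So $\bZ_k$ has standard deviation $\Theta(n^{-1/2})$, while the tail quantity you want, $\sum_{i\in S}\alpha_i^k$, can be as small as $d^{1-k}$. Extracting such a tiny signal from a $\Theta(n^{-1/2})$ noise floor would require $n = \Omega(d^{2k-2})$, far above $d^3$. Equivalently, the multiplicative-error guarantee of the moment estimator (the one quoted in the abstract) becomes useless as an additive guarantee once $\tr(\rho^k) = \Theta(1)$. This is exactly the failure mode the paper spells out in its classical warm-up (\Cref{ex:uniformity}): a single heavy element washes out the signal from the light ones in any global collision statistic, which is what motivates \emph{local} moment matching in the first place.

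The paper's fix is not subtraction but \emph{physical} bucketing: one learns a projective measurement $\{\bPi, \overline{\bPi}\}$ from the first half of the copies, then measures the second half with it, so that conditioned on the $\overline{\bPi}$ outcome the state collapses to (a rescaling of) $\overline{\bPi}\rho\overline{\bPi}$, which has all eigenvalues at most $\approx B$. Applying the moment estimator to this conditioned state gives a variance where every $\tr(\sigma^{2j})$ is at most $d\cdot B^{2j}$, so the error scales like $B^k$ and can be matched against the $B^{-k}$ penalty in the LP-rounding bound. This measurement, however, disturbs the state, introducing the ``alignment error'' between $\spec(\rho)$ and $\spec(\bPi\rho\bPi + \overline{\bPi}\rho\overline{\bPi})$; the paper controls this with Weyl's inequality and Cauchy interlacing (Theorem~\ref{thm:bucket-algo}). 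Your proposal never confronts this trade-off because it never projects, but as a result it never obtains usable tail-moment estimates either.

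One further arithmetic issue: you write $d^{3-2/K} = d^3(\log\log d/\log d)^4$, but with $K = c\log d/\log\log d$ one has $d^{-2/K} = (\log d)^{-2/c}$, with no $\log\log$ factor at all. In the paper, the $(\log\log d/\log d)^4$ in the final bound comes from $K^{-4}$, via the bucket-threshold choice $B = \Theta(\epsilon^2 K^2/d)$ (needed so the LP bias term $\sqrt{Bd}/K$ is $O(\epsilon)$) combined with the bucketing cost $n = O(dB^{-2}\epsilon^{-2})$; the choice $K = \Theta(\log d/\log\log d)$ rather than $\Theta(\log d)$ is forced by the $k^{O(k)}$ (as opposed to classical $2^{O(k)}$) blow-up in the quantum moment-estimator variance.
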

\noindent
As full state tomography requires $n = \Omega(d^3/\epsilon^2)$ copies for unentangled measurements~\cite{CHL+23},
this shows that spectrum estimation can be performed with asymptotically fewer copies than full state tomography, at least in the ``large $\epsilon$'' regime.
In particular, our algorithm improves on full state tomography in the regime where $\epsilon = \omega(\log\log(d)/\log(d))$.
This is often the relevant regime: typically, we imagine tomography on a system of $q$ qubits, so that $d = 2^q$; then, this regime translates to $\epsilon = \omega(\log(q)/q)$, inverse polynomial in the number of qubits.
We have not tried to optimize our algorithm's dependence on $\epsilon$, and we believe that further improvements should be possible, which we leave to future work.

\paragraph{Lower bounds.}
Although we have not been able to show matching lower bounds,
we provide numerical evidence that the sample complexity of spectrum estimation can only improve over full state tomography by a sub-polynomial factor, at least in the case of entangled measurements.
In particular, for each $k=2,3,4$, we construct two possible mixed state spectra $\alpha^{(k)}$ and $\beta^{(k)}$ which (1) are far from each other, i.e.\ $\dtv{\alpha^{(k)}}{\beta^{(k)}} = \Omega(1)$, and (2) match on their first $k-1$ moments and differ at the $k$-th moment. 
We believe that $n = \Theta(d^{2-2/k})$ copies are necessary and sufficient to distinguish whether a mixed state $\rho$ has spectrum~$\alpha^{(k)}$ or~$\beta^{(k)}$.
We provide numerical evidence that suggests that this is indeed the case.
Of course, this task is solvable with a spectrum estimation algorithm, since one can always learn the spectrum of $\rho$ and check whether it is close to $\alpha$ or $\beta$,
and so our evidence suggests a lower bound of $n = \Omega(d^{2-\gamma})$ copies for any constant $\gamma > 0$ holds for spectrum estimation as well.
This would improve on the existing lower bound of $n = \Omega(d)$ for entangled measurements in the $\epsilon = \Omega(1)$ regime due to~\cite{CHTW04,OW15}.

\paragraph{Classical analogues.}
The quantum problems we study have natural classical analogues in the field of statistics.
In particular, suppose one is given $n$ samples $\bx = (\bx_1, \ldots, \bx_n)$
from a (not necessarily sorted) probability distribution $\alpha = (\alpha_1, \ldots, \alpha_d)$
over $d$ items.
The natural classical analogue of full state tomography
is the problem of learning the distribution $\alpha$, which entails computing an estimate $\widehat{\balpha}$ of the distribution $\alpha$ such that $\dtv{\alpha}{\widehat{\balpha}} \leq\epsilon$.
It is well known that this can be solved using only $n = O(d/\epsilon^2)$ samples~\cite{Can20}, and furthermore that this bound is optimal.
The natural classical analogue of spectrum estimation, on the other hand,
is the problem of learning the \emph{sorted distribution}
$\alpha^{\geq} \coloneqq \mathrm{sort}(\alpha)$, where $\mathrm{sort}(\cdot)$ is the function which sorts its input from largest to smallest.
This entails outputting an estimator of the sorted distribution $\widehat{\balpha}^{\geq}$ such that $\dtv{\alpha^{\geq}}{\widehat{\balpha}^{\geq}} \leq \epsilon$.

Estimating the sorted distribution can be solved in $n = O(d/\epsilon^2)$ samples by first computing an estimate $\widehat{\balpha}$ of $\alpha$ and then sorting it,
but it was shown in a work of Valiant and Valiant~\cite{VV11a,VV17} that one can improve on this naive algorithm and estimate the sorted distribution using only $O(d/\log(d))$ samples when $\epsilon$ is constant.  
In follow-up work,
Han, Jiao, and Weissman~\cite{HJW18} gave an essentially optimal algorithm for estimating the sorted distribution in terms of both the dimension $d$ and error $\epsilon$ parameters.
For any parameter $\gamma >0$,
they give an algorithm with sample complexity $n = O(d/(\log(d) \cdot \epsilon^2))$ so long as $\epsilon \geq 1/d^{1-\gamma}$, and when $\epsilon \leq 1/d$
the above bound of $n = O(d/\epsilon^2)$ samples can be applied;
moreover, they show that these bounds are in fact optimal in these two regimes.
Their estimator is based on a technique they introduced called \emph{local moment matching}.
Our algorithm for spectrum estimation is inspired by their approach.

\paragraph{Moment estimation and R\'{e}nyi entropy estimation.}
A key subroutine of our algorithm is estimating $\tr(\sigma^k)$ from copies of the state $\sigma$, a task known as \emph{moment estimation}.
Given an estimator $\bZ_k$ for $\tr(\sigma^k)$, two types of guarantees one might hope for are \emph{additive error} and \emph{multiplicative error} guarantees, defined as follows.
\begin{align*}
\text{(Additive error):}& \quad \tr(\sigma^k) - \delta \leq \bZ_k \leq \tr(\sigma^k) + \delta,\\
\text{(Multiplicative error):}& \quad     (1- \delta) \cdot \tr(\sigma^k) \leq \bZ_k \leq (1 + \delta) \cdot \tr(\sigma^k).
\end{align*}
Multiplicative error guarantees are stronger than additive error guarantees because the magnitude of the error scales with $\tr(\sigma^k)$,
whereas with an additive error guarantee, the error $\delta$ might completely swamp the potentially much smaller $\tr(\sigma^k)$ term.
It is well-known that a multiplicative error approximation for the $k$-th moment can be converted to an additive error approximation for the \emph{quantum R\'{e}nyi entropy of order $k$},
and vice versa,
where the quantum R\'{e}nyi entropy is defined as
\begin{equation*}
        S_{k}(\sigma) = \frac{1}{1-k} \log \tr(\sigma^k).
\end{equation*}
Equivalently, $S_k(\sigma)$ is just the \emph{classical} R\'{e}nyi entropy of $\sigma$'s spectrum.
If $\sigma = \psi_A$, where $\ket{\psi_{AB}}$ is a bipartite pure state, then the quantities $S_{k}(\sigma)$, for $k \geq 2$, are referred to as the \emph{R\'{e}nyi entanglement entropies}, and they give a rich description of the entanglement properties of $\ket{\psi_{AB}}$.
Indeed, these R\'{e}nyi entanglement entropies have been estimated on bipartite quantum states in experimental settings dating back to the works ~\cite{IMP+15,KTL+16}; as the first of these works puts it, ``[t]he R\'{e}nyi entropies are rapidly gaining importance in theoretical condensed matter physics, as they can be used to extract information about the `entanglement
spectrum'.''

We give an algorithm for moment and R\'{e}nyi entropy estimation with the following guarantees.
\begin{theorem}[Quantum R\'{e}nyi entropy estimation]\label{thm:quantum-renyi}
    For any integer $k\geq 2$ and $d$-dimensional quantum state $\sigma$, 
    there is an algorithm which, with probability $99\%$, estimates $\tr(\sigma^k)$ to $\delta$ multiplicative error and $S_k(\sigma)$ to $\delta$ additive error using
    \begin{equation*}
        n = O\left(\max \Big\{\frac{d^{2-2/k}}{\delta^2}, \frac{d^{3 - 2/k}}{\delta^{2/k}} \Big\}\right)
    \end{equation*}
    copies of $\sigma$ and unentangled measurements only.
\end{theorem}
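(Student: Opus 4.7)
The plan is to build an unbiased $U$-statistic estimator out of classical shadows and then bound its variance via a Hoeffding decomposition. On each of the $n$ copies of $\sigma$, apply a fresh Haar-random unitary and measure in the computational basis, yielding a rank-one outcome $\ket{v_i}$. Form the single-copy shadow $\hat{\sigma}_i := (d+1)\ket{v_i}\bra{v_i} - I$, which by a standard calculation satisfies $\mathbb{E}[\hat{\sigma}_i] = \sigma$ and is i.i.d.\ across the $n$ copies. The proposed estimator is
\begin{equation*}
    \hat{Z}_k \;=\; \frac{(n-k)!}{n!} \sum_{\substack{(i_1, \ldots, i_k) \in [n]^k \\ \text{all distinct}}} \tr\bigl(\hat{\sigma}_{i_1} \hat{\sigma}_{i_2} \cdots \hat{\sigma}_{i_k}\bigr),
\end{equation*}
which is unbiased for $\tr(\sigma^k)$ by independence. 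By Chebyshev's inequality, it then suffices to show $\mathrm{Var}(\hat{Z}_k) \leq \delta^2 \tr(\sigma^k)^2 / 100$, and this is where all the work lies.

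For the variance bound, I would apply the Hoeffding decomposition of $U$-statistics to obtain $\mathrm{Var}(\hat{Z}_k) = \sum_{c=1}^k \binom{k}{c}^2 \binom{n}{c}^{-1} \zeta_c$, where $\zeta_c$ is the variance of the degree-$c$ projection of the trace kernel. Each $\zeta_c$ reduces via the Weingarten formula $\int \ket{v}\bra{v}^{\otimes m}\,dv = \binom{d+m-1}{m}^{-1} \Pi_{\mathrm{sym}}^{(m)}$ (expanded as $\Pi_{\mathrm{sym}}^{(m)} = (m!)^{-1}\sum_{\pi \in S_m} W_\pi$) to an explicit polynomial in the low-order moments $\tr(\sigma), \ldots, \tr(\sigma^{2k-1})$. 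The two endpoint terms are expected to dominate. For $c=1$, the degree-one projection is $\tr(\hat{\sigma}_1 \sigma^{k-1})$, and one can show $\zeta_1 = O(d^{2-2/k} \tr(\sigma^k)^2)$ using the power-mean inequality $\tr(\sigma^p) \leq \tr(\sigma^k)^{p/k}$ (valid for $p \geq k$) together with the floor $\tr(\sigma^k) \geq d^{1-k}$; this forces $n = \Omega(d^{2-2/k}/\delta^2)$. For $c=k$, direct computation gives $\zeta_k = O(d^k)$ (tight when $\sigma = I/d$, where the lower-degree projections vanish identically), forcing $n^k = \Omega(d^{3k-2}/\delta^2)$, i.e.\ $n = \Omega(d^{3-2/k}/\delta^{2/k})$. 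Intermediate $c$ should interpolate and impose strictly weaker constraints, so the maximum of the two endpoint bounds yields the claimed sample complexity.

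The main obstacle is the variance computation itself, especially obtaining tight bounds on $\zeta_c$ uniformly across $c$. A naive Weingarten expansion has $(k!)^2$ permutation pairs and significantly over-counts, because the degree-$c$ Hoeffding projection kills many terms through cancellation; the analysis must identify which residual contributions survive and verify they scale at most as advertised. I would first work out the cases $k=2, 3$ by hand to make the combinatorial pattern explicit, then generalize inductively using the cyclic symmetry of the trace and the observation that the surviving terms correspond to cyclic pairings of Weingarten permutations. Once $\tr(\sigma^k)$ is estimated to multiplicative error $\delta$, the claimed R\'enyi entropy bound follows immediately from $S_k(\sigma) = \frac{1}{1-k}\log\tr(\sigma^k)$ and $|\log(1 \pm \delta)| \leq 2\delta$ for $\delta \leq 1/2$.
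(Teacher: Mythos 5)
Your plan identifies the right estimator (the uniform-POVM shadow $U$-statistic is precisely the paper's $\bZ_k$), the right decomposition principle (variance of a $U$-statistic by overlap structure), the correct pair of dominant endpoint terms (from which you recover the two terms in the sample complexity, using exactly the power-mean inequality $\tr(\sigma^p)\leq\tr(\sigma^k)^{p/k}$ and the floor $\tr(\sigma^k)\geq d^{1-k}$ that the paper also uses in \Cref{cor:multi_error_moment_estimator}), and the same easy translation to R\'enyi entropy. So the skeleton matches. But the core step is not carried out, and the route you pick for it is a bit less convenient than the paper's. You write the Hoeffding \emph{projection} form $\Var(\bZ_k)=\sum_{c=1}^k\binom{k}{c}^2\binom{n}{c}^{-1}\sigma_c^2$, whose components $\sigma_c^2$ are obtained by M\"obius inversion over conditional expectations; you correctly foresee that the degree-$c$ projections ``kill many terms through cancellation,'' but handling that cancellation uniformly in $c$ is exactly what you leave open. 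The paper instead works with the equivalent \emph{covariance} form $\Var(\bZ_k)=\binom{n}{k}^{-1}\sum_{c=1}^{k}\binom{k}{c}\binom{n-k}{k-c}\zeta_c$ with $\zeta_c = \mathrm{Cov}(h,h')$ for kernels sharing $c$ arguments. That requires only a one-sided bound on $\E\big[\tr(\widehat{\bsigma}_{i_1}\cdots\widehat{\bsigma}_{i_k})\tr(\widehat{\bsigma}_{j_1}\cdots\widehat{\bsigma}_{j_k})\big]$ for each overlap size, with nonnegative summands and no inversion step, and is for that reason considerably easier to bound.

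The more substantive gap is that you do not supply the combinatorial input that actually controls these terms. After rewriting the product of two traces as one trace over $(\C^d)^{\otimes 2k}$ involving the permutation $(\tau\tau')^{-1}\prod_{i\in T}\swap_{2i-1,2i}$ --- where $\tau,\tau'$ are the two $k$-cycles coming from the cyclic traces and the $\swap$s are contributed by the shared samples --- the paper proves \Cref{lem:cycles_swaps}: every cycle of such a permutation passes through at least two of the $I$-registers, which caps the trace by $d^{k-j-1}\tr(\sigma^{2j})$. This one lemma is what gives $\zeta_{k-j}=O(6^k d^{k-j-1}\tr(\sigma^{2j}))$ simultaneously for all $j$ (\Cref{eq:individual_term_in_var_bound}), and it is what lets the paper \emph{verify} rather than merely expect that the intermediate $c$ are dominated (in fact an intermediate bound $d^{3-4/k}/\delta^{4/k}$ appears in the proof of \Cref{cor:multi_error_moment_estimator} and has to be explicitly argued away). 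Your endpoint estimates $\zeta_1=O(d^{2-2/k}\tr(\sigma^k)^2)$ and $\zeta_k=O(d^k)$ agree with what the lemma gives at $j=k-1$ and $j=0$, and your identification of the degree-one projection as $\tr(\widehat{\bsigma}_1\sigma^{k-1})$ (centered) is correct; but ``work out $k=2,3$ by hand, then induct'' is a plan, not a proof, and without the cycle-counting lemma (or an equivalent) you have neither a bound for the middle terms nor a uniform constant in $k$.
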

\noindent
For constant $\delta$, the second term dominates, and the number of copies scales as $O(d^{3-2/k})$, and so for large $k$ the sample complexity mirrors the $O(d^3)$ which appears in full state tomography. The cross-over point between the two terms happens when $\delta = 1/d^{k/(2k-2)}$, and for $\delta$ smaller than this the first term dominates.

Quantum R\'{e}nyi entropy estimation was previously studied in the work of Acharya, Issa, Shende, and Wagner~\cite{AISW20}.
They give an algorithm which uses entangled measurements and achieves a sample complexity of
\begin{equation*}
    n = \Theta\left(\max \Big\{\frac{d^{1-1/k}}{\delta^2}, \frac{d^{2 - 2/k}}{\delta^{2/k}} \Big\}\right),
\end{equation*}
which they show is optimal by proving matching lower bounds.
As in the case of our bounds, these two bounds trade off when $\delta = 1/d^{k/(2k-2)}$, and for constant $\delta$ and large $k$ the sample complexity mirrors the $O(d^2)$ needed for full state tomography with entangled measurements.
For the related classical problem of estimating  R\'{e}nyi entropies of discrete distributions, it is known that $\Theta(d^{1-1/k}/\delta^2)$ samples are necessary and sufficient~\cite{AOST17}.

We leave open the question of whether \Cref{thm:quantum-renyi} is tight or can be improved. Let us note that the hard examples which give the tight lower bounds in both \cite{AISW20,AOST17} are quite simple and only involve distributions which have one heavy element and are uniform on the remaining elements.

\paragraph{Organization of the paper.}

In \Cref{sec:learning-sorted}, we survey the problem of learning the sorted distribution in the classical setting. 
We review the method of local moment matching by including a self-contained simple analysis using only two ``buckets'' that gives an optimal sample complexity in terms of the $d$ dependence, albeit a suboptimal sample complexity in terms of the $\epsilon$ dependence. 
In \Cref{sec:technical_overview_quantum}, we give a technical overview of our approach for spectrum and moment estimation in the quantum setting. 
We include some preliminaries in~\cref{sec:prelim}. 
Next, our spectrum learning algorithm consists of three main components, and we devote one section to each.
\begin{itemize}
    \item[$\circ$] Beginning in~\cref{sec:moment_estimation}, we construct a simple unbiased estimator for the $k$-th moment of any quantum state, bound its variance in~\cref{thm:generalmomentvar}, and prove the sample complexity for multiplicative-error moment estimation and additive-error quantum R\'{e}nyi entropy estimation. 
    We then generalize the variance bound to a subnormalized state projecting onto the small bucket in \Cref{sec:subnormalized-moment-estimation}. 

    \item[$\circ$] Then in~\cref{sec:bucket}, we give a bucketing algorithm, which splits the spectrum of $\rho$ into a large bucket and a small bucket. We analyze the performance of the bucketing algorithm in~\cref{thm:bucket-algo}. 

    \item[$\circ$] Finally in~\cref{sec:local_moment_matching}, we study the framework of using moment estimates within a local interval to estimate a sorted probability distribution. In particular, we focus on the moment matching in the smallest bucket and analyze its performance in~\cref{thm:smallestbucketLMM}. 
\end{itemize}
We put these three components together to give our main spectrum learning algorithm in~\cref{sec:full_algo}, and then we analyze its sample complexity and prove our main result. 
With this done, in~\cref{sec:lmm-fails}
we analyze general bucketing algorithms and show that their sample complexity is related to the sample complexity needed to perform full state tomography in fidelity. 
Finally, in~\cref{sec:lower}, we study the spectrum learning in the setting of entangled measurements. We give numerical evidence that the existing lower bound $\Omega(d)$ based on uniformity testing is not tight.

\section{Learning the sorted distribution}\label{sec:learning-sorted}

One of the most fundamental tasks in classical statistics is that of estimating \emph{symmetric properties} of $\alpha$,
i.e.\ those properties which remain invariant under permutations of $\alpha$'s $d$ coordinates.
Two important examples are the \emph{support size} of $\alpha$, given by the number of nonzero coordinates in $\alpha$, and the \emph{Shannon entropy} $\mathrm{H}(\alpha)$.
These properties depend on the multiset of probability values $\{\alpha_1, \ldots, \alpha_d\}$ but not on their order, and so they are symmetric.

The most straightforward estimators for both support size and entropy give good approximations using a linear $n = O(d)$ number of samples,
but recent years have seen the development of more sophisticated estimators for both of these properties which only need a \emph{sublinear} $n = o(d)$ number of samples.
For entropy, this began with the work of Paninski~\cite{Pan04}, who gave the first proof of the existence of an estimator which uses an unspecified sublinear number of samples; surprisingly, this proof is nonconstructive!
Following this, the breakthrough work of Valiant and Valiant~\cite{VV11a} gave an explicit estimator for entropy achieving sample complexity $O(d/\log(d))$.
Subsequent works~\cite{VV13,VV17,VV11b} gave improved sample complexities which captured the dependence on the error parameter $\epsilon$ in addition to the dimension $d$,
culminating in the works of Wu and Yang~\cite{WY16} and Jiao et al.~\cite{JVHW15} which achieved an optimal sample complexity of $n = \Theta(d/(\epsilon \log(d)) + \log^2(d)/\epsilon^2)$.
The story for support size is similar: an estimator with sublinear sample complexity was first demonstrated in the work of Valiant and Valiant~\cite{VV11a},
and following the improvements in~\cite{VV17},
Wu and Yang~\cite{WY15} showed that the optimal sample complexity was $n = \Theta(d / \log(d) \cdot \log^2(1/\epsilon))$.

The estimators for entropy and support size, as well as those for other symmetric properties such as power sum polynomials and R\'{e}nyi entropies~\cite{JVHW15,AOST17}, are often bespoke and tailored to the particular symmetric property that is being estimated.
The work of Valiant and Valiant~\cite{VV11a}, however, gave a unified approach to estimating symmetric properties, via an estimator $\widehat{\balpha}^{\geq}$ for the sorted distribution $\alpha^{\geq}$.
This then yields an estimator for general symmetric properties by ``plugging it in''; for example, they show that $\mathrm{H}(\widehat{\balpha}^{\geq})$ is a good estimator for $\mathrm{H}(\alpha)$,
and that the support size of $\widehat{\balpha}^{\geq}$ is a good estimator for the support size of $\alpha$.
This is the approach also taken by Han, Jiao, and Weissman~\cite{HJW18}, and they show that ``plugging in'' their estimator for the sorted distribution gives an optimal estimator for both entropy and support size in certain regimes of parameters.
Below, we survey several approaches for estimating $\alpha^{\geq}$
in order to motivate their approach of local moment matching.

\subsection{The empirical distribution}
How to estimate the sorted distribution?
The most natural approach is to compute the \emph{empirical sorted distribution}, given by the following algorithm.
\begin{enumerate}
    \item Compute the \emph{histogram} $\bh = (\bh_1, \ldots, \bh_d)$ of $\bx$, where $\bh_i$ is the number of $i$'s which appear in $\bx$.
    \item Compute the \emph{empirical distribution} $\widehat{\balpha} = \frac{1}{n} \cdot \bh = (\frac{1}{n} \cdot \bh_1, \ldots, \frac{1}{n} \cdot \bh_d)$.
    \item Output the sorted empirical distribution $\widehat{\balpha}^{\geq} = \mathrm{sort}(\widehat{\balpha})$.
\end{enumerate}
It is a classic fact in statistics
that the empirical distribution satisfies $\dtv{\alpha}{\widehat{\balpha}} \leq \epsilon$ with high probability once $n = O(d/\epsilon^2)$~\cite{Can20},
and so $\dtv{\alpha^{\geq}}{\widehat{\balpha}^{\geq}} \leq \epsilon$ with high probability when $n = O(d/\epsilon^2)$ as well.
This analysis is also tight,
as this estimator requires $n = \Omega(d/\epsilon^2)$ samples in the special case when $\alpha$ is the uniform distribution.
(For the full $\Omega(d/\epsilon^2)$ lower bound, see~\cite{DDIS12}. 
For the simpler $\Omega(d)$ lower bound, note that with $n = o(d)$ samples the sorted empirical distribution $\widehat{\balpha}^{\geq}$ will have support size at most $o(d)$, and so $\dtv{\mathrm{unif}_d}{\widehat{\balpha}^{\geq}}$ will tend to 1.)
However, as we have seen, this is a sub-optimal sample complexity for learning the sorted distribution,
and the reason for this is that the empirical distribution is also wastefully learning the labels of the probability values.

\subsection{Profile maximum likelihood}
A second natural approach is known as the \emph{profile maximum likelihood (PML) estimator}.
Letting $\bh^{\geq} = \mathrm{sort}(\bh)$ be the sorted histogram, the PML estimator computes the sorted distribution $\widehat{\balpha}^{\geq}$ which has the largest probability 
of producing a sample $\bx$ with sorted histogram $\bh^{\geq}$.
It was shown by Acharya et al.~\cite{ADOS17} (with improvements due to \cite{HO19}) that the PML estimator does indeed yield an optimal sample complexity for estimating $\alpha^{\geq}$, and ``plugging it in'' yields optimal sample complexities for symmetric properties such as the entropy.
However, using the PML comes with two main challenges.
The first is that computing the PML estimator might be computationally intractable,
as it requires a maximization over all sorted probability distributions.
This issue was resolved in the works \cite{CSS19,ACSS21,ACSS20,CJSS22}, which give sophisticated polynomial-time algorithms for computing approximations to the PML estimator which are sufficiently good to estimate a variety of symmetric properties.
The second, which is more important to us,
is that it is difficult to directly analyze the sample complexity of the PML estimator.
Instead, the analysis of Acharya et al.\
is only able to show that if there exists an estimator for the sorted distribution (or for the entropy, etc.)
with a given sample complexity, then the PML estimator has essentially the same sample complexity.
So to actually prove that the PML estimator has a given sample complexity,
one has to first demonstrate that another estimator already possesses this sample complexity.
This means that the PML estimator is not particularly useful for our purpose, which is establishing the sample complexity of learning the spectrum.

\subsection{Learning moments}
A third, and final, natural approach to estimating the sorted distribution is to learn its \emph{moments}, given by the quantities
\begin{equation*}
    p_k(\alpha) = \sum_{i=1}^d \alpha_i^k.
\end{equation*}
These $p_k$'s are symmetric polynomials known as the \emph{power sum polynomials}. 
They contain information only about the multiset of $\alpha_i$'s and not about their labels.
Indeed, the first $d$ moments $p_1(\alpha)$ through $p_d(\alpha)$ are enough to uniquely specify the distribution $\alpha$.
To see this, Newton's identities imply that the first $d$ power sum polynomials
uniquely specify the first $d$ \emph{elementary symmetric polynomials} $e_1(\alpha), \ldots, e_d(\alpha)$, where
\begin{equation*}
    e_k(\alpha) = \sum_{1 \leq i_1 < \cdots < i_k \leq d} \alpha_{i_1} \cdots \alpha_{i_k}.
\end{equation*}
Next, if we write $r_\alpha(x)$ for the degree-$d$ polynomial whose roots are $\alpha_1, \ldots, \alpha_d$, we have that
\begin{equation*}
    r_{\alpha}(x) = (x - \alpha_1) \cdots (x - \alpha_d) = \sum_{k=0}^d x^{d- k} \cdot (-1)^k \cdot e_k(\alpha).
\end{equation*}
Hence, the first $d$ elementary symmetric polynomials uniquely specify $r_{\alpha}(x)$, and from $r_{\alpha}(x)$ we can learn the multiset $\{\alpha_1, \ldots, \alpha_d\}$ by inspecting its roots.

In practice, we will not have access to the moments $p_k(\alpha)$.
Instead, we will have to estimate them from samples $\bx = (\bx_1, \ldots, \bx_n)$.
If $z_k(\bx)$ is an estimator for $p_k(\alpha)$, what properties might we want it to satisfy?
Perhaps the simplest property is that of being an \emph{unbiased estimator}, which means that it equals $p_k(\alpha)$ in expectation, i.e.\ $\E_{\bx}[z_k(\bx)] = p_k(\alpha)$.
For example, a natural unbiased estimator for $p_k(\alpha)$ is
\begin{equation*}
    z_k(\bx) = \bone[\bx_1 = \cdots = \bx_k],
\end{equation*}
which checks if there is a \emph{$k$-wise collision} among the first $k$ samples.
This is indeed an unbiased estimator,
because it is equal to 1 with probability $p_k(\alpha)$ and 0 otherwise, but it can be very far from its mean of $p_k(\alpha)$ for any fixed sample $\bx$ because it only outputs values in $\{0, 1\}$.
This issue is reflected in the fact that $z_k(\bx)$ has a large variance, and suggests that a second property we want for our estimator is for it to have as small a variance as possible.
Fortunately, there is a standard method called \emph{U-statistics} (``U'' for ``unbiased'') for reducing the variance of unbiased estimators such as $z_k(\bx)$, which involves averaging the estimators over all permutations of the sample $\bx$.
In our case, the U-statistic corresponding to $z_k(\bx)$ is the estimator
\begin{equation}\label{eq:collision-estimator}
    c_k(\bx) = \frac{1}{\binom{n}{k}} \cdot \sum_{1 \leq i_1 < \cdots < i_k \leq n}  \bone[\bx_{i_1} = \cdots = \bx_{i_k}].
\end{equation}
Each term in the sum has the same expectation of $p_k(\alpha)$, and so it is an unbiased estimator by linearity of expectation;
however, its variance is greatly reduced, and indeed it turns out to be the minimum variance unbiased estimator for $c_k(\bx)$.
Moreover, it has a natural interpretation in terms of the \emph{collision statistics} of $\bx$, in that it counts the number of $k$-wise collisions in $\bx$ and then normalizes.
As a function of the histogram $\bh$, we can write it as
\begin{equation*}
    c_k(\bx) = \frac{1}{\binom{n}{k}}\cdot \sum_{i=1}^d \binom{\bh_i}{k}.
\end{equation*}
Thus, a natural algorithm is to take the sample, compute $c_1(\bx), \ldots, c_d(\bx)$, and use these to somehow compute an estimator $\widehat{\balpha}^{\geq}$ of $\alpha^{\geq}$;
typically the way that one does this is to find a distribution $\widehat{\balpha}^{\geq}$ whose moments approximately match the estimated values $c_1(\bx), \ldots, c_d(\bx)$,
an approach known as \emph{moment matching}.

Unfortunately, this approach does not work well in practice.
The reason is that the moments $p_k(\alpha)$ are dominated by the high-probability elements in the sample, and these larger elements tend to ``wash out'' the contribution from the low-probability elements.
But capturing low-probability elements, even those on the order of $1/d$, is still important for estimating in total variation distance,
as a probability distribution might be largely or even entirely supported on elements of this size.
We illustrate this problem with the following example of uniformity testing.

\begin{example}[Uniformity testing]\label{ex:uniformity}
    Consider the problem of distinguishing the case when (i) $\alpha$ is uniform over all of $[d]$ from the case when (ii) $\alpha$ is uniform over some subset $S \subseteq [d]$ of size $d/2$ (where $S$ is unknown).
    It is well-known that $n = \Theta(\sqrt{d})$ samples are necessary and sufficient to solve this task~\cite{GR11,BFF+01}.
    One method for doing so is to note that in case (i), $p_2(\alpha) = 1/d$, whereas in case (ii) $p_2(\alpha) = 2/d$. So a natural algorithm is to draw $n$ samples $\bx = (\bx_1, \ldots, \bx_n)$, compute $c_2(\bx)$, and output ``uniform'' if and only if $c_2(\bx) \leq 1.5/d$.

    To analyze this approach, note that since $c_2(\bx)$ is an unbiased estimator for $p_2(\alpha)$, it suffices to show that $c_2(\bx)$ is close to its mean with high probability. 
    In particular, we want that it deviates from its mean by less than $0.5/d$ with high probability when $n = O(\sqrt{d})$.
    To show this, a routine calculation gives the following for the variance of $c_2(\bx)$~(cf.\ the proof of \cite[Lemma 3]{DGPP19}):
    \begin{equation}\label{eq:collision-variance}
        \Var[c_2(\bx)] = \frac{1}{\binom{n}{2}} (p_2(\alpha) - p_2(\alpha)^2) + \frac{2(n-2)}{\binom{n}{2}}(p_3(\alpha) - p_2(\alpha)^2).
    \end{equation}
    In both case (i) and case (ii) this variance is $O(1/(dn^2))$,
    and so the standard deviation of $c_2(\bx)$ in both cases is $O(1/(\sqrt{d} n))$. Thus, we can make this significantly smaller than $0.5/d$ by taking $n = O(\sqrt{d})$, as desired.

    Now let us see how things change if we throw in a single element with large probability.
    Suppose we are given samples from a $(d+1)$-dimensional distribution $\alpha = (\alpha_1, \ldots, \alpha_d, \alpha_{d+1})$ and asked to distinguish between the following two cases:
    \begin{equation}\label{eq:harder-cases}
        \text{Case (i):}~\mathrm{sort}(\alpha) = \Big(\frac{1}{2}, \frac{1}{2d}, \ldots, \frac{1}{2d}\Big),
        \qquad
        \text{Case (ii):}~\mathrm{sort}(\alpha) = \Big(\frac{1}{2}, \frac{1}{d}, \ldots, \frac{1}{d}, 0, \ldots, 0\Big).
    \end{equation}
    The second moment $p_2(\alpha)$ is $1/4 + 1/(4d)$ and $1/4 + 1/(2d)$ in cases (i) and (ii), respectively, so we would like to estimate it to accuracy better than $\pm 1/(8d)$. 
    If we compute the variance in \Cref{eq:collision-variance},
    however, we see that in both cases $p_3(\alpha) - p_2(\alpha)^2$ is now $\Omega(1)$.
    So the variances in both cases are $\Omega(1/n)$,
    their standard deviations are $\Omega(1/\sqrt{n})$,
    and to make this smaller than $\pm 1/(8d)$,
    we require $n = \Omega(d^2)$, a power of 4 worse
    than if we had no large probability element.
    In this example, then, we see that although we want to estimate the distribution's low probability elements, their contribution to $c_2(\bx)$ is washed out by the existence of the single large probability element.
\end{example}

\subsection{Local moment matching}\label{sec:lmm}
There is, however, a simple algorithm for distinguishing between the two cases in \Cref{eq:harder-cases} using only $O(\sqrt{d})$ samples:
simply spend $O(1)$ samples to learn the index $i$ for which $\alpha_i = 1/2$, and then use $O(\sqrt{d})$ samples to test if $\alpha$ is uniform on $[d]\setminus \{i\}$.
This hints at a more general approach for salvaging moment matching.
\begin{enumerate}
    \item (Sample splitting): Draw $2n$ samples. Call the first $n$ samples $\bx = (\bx_1, \ldots, \bx_n)$ and call the second $n$ samples $\by = (\by_1, \ldots, \by_n)$.
    \item (Bucketing): Let $\widehat{\bp}$ be the empirical distribution of $\bx$. Pick a threshold $0 \leq \tau \leq 1$ and set
    \begin{equation*}
    \mathsf{Large} = \{i \mid \widehat{p}_i \geq \tau\}\quad\text{and}\quad\mathsf{Small} = [d] \setminus \mathsf{Large}.
    \end{equation*}
    \item (Estimation): Write $\by|_{\mathsf{Large}}$ and $\by|_{\mathsf{Small}}$ for the samples in $\by$ which fall in the $\mathsf{Large}$ and $\mathsf{Small}$ sets, respectively.
    Note that $\by|_{\mathsf{Large}}$ are samples drawn from the distribution $\alpha|_{\mathsf{Large}}$,
    and similarly $\by|_{\mathsf{Small}}$ are samples drawn from $\alpha|_{\mathsf{Small}}$.
    Use these samples separately to estimate $\alpha|_{\mathsf{Large}}$ and $\alpha|_{\mathsf{Small}}$.
\end{enumerate}

Typically, the threshold $\tau$ is chosen to be small enough so that moment matching on just the $\by|_{\mathsf{Small}}$ samples is sufficient to learn $\alpha|_{\mathsf{Small}}$.
However, this means that the $\mathsf{Large}$ bucket will still contain a wide range of probability values, potentially ranging from the extremely small ($\tau$) to the extremely large ($1$), and so moment matching will still not be effective within this bucket.
This two-bucketing approach, then, is most useful for symmetric properties in which the chief difficulty is estimating the contribution to them from the small elements.
For example, the sample-optimal algorithms for estimating the Shannon entropy of $\alpha$~\cite{WY16,JVHW15} work in this manner: for the small bucket they use moment matching, and for the large bucket they take the empirical distribution $\widehat{\balpha}|_{\mathsf{Large}}$ of the sample $\by|_{\mathsf{Large}}$ and use a simple variant of the plug-in estimator $\mathrm{H}(\widehat{\balpha}|_{\mathsf{Large}})$ known as the bias-corrected plug-in estimator.

More general properties, however, require a more fine-grained approach to the large elements, which entails further splitting the $\mathsf{Large}$ bucket into more buckets, each of which contains a small enough range of probability values that moment matching within the bucket becomes effective.
Since the moment matching is now being done locally within each bucket, this approach is known as \emph{local moment matching},
and this is the approach shown to give a sample-optimal estimator for the sorted spectrum $\alpha^{\geq}$ by~\cite{HJW18}.
Below, we outline the bucketing and estimation steps of local moment matching.

\subsubsection{Bucketing}
Let $\bh = (\bh_1, \ldots, \bh_d)$ be the histogram of $\bx$.
Then $\bh_i$ is distributed as $\mathrm{Binomial}(n, \alpha_i)$ and therefore has mean $\alpha_i n$ and variance $\alpha_i (1-\alpha_i) n = O(\alpha_i n)$.
This means that if $\widehat{\balpha} = \frac{1}{n} \cdot \bh$ is the empirical distribution of $\bx$,
then $\widehat{\balpha}_i$ has mean $\alpha_i$ and variance $O(\alpha_i/n)$. Hence, we have that with probability $0.99$,
\begin{equation*}
    \widehat{\balpha}_i = \alpha_i  \pm O(\sqrt{\alpha_i/n}).
\end{equation*}
More generally, a Chernoff bound tells us that it deviates from its mean by at most $t \cdot O(\sqrt{\alpha_i/n})$ except with probability $\exp(-O(t^2))$.
So if we set $t = \sqrt{\log(d)}$, we get that
\begin{equation}\label{eq:interval}
    \widehat{\balpha}_i = \alpha_i  \pm \sqrt{\log(d)} \cdot O(\sqrt{\alpha_i/n})
\end{equation}
except with probability $0.01/d$.
Since there are only $d$ indices $i$, this is small enough that we can union bound over all the $i$'s and say that each $\widehat{\balpha}_i$ falls inside the interval from \Cref{eq:interval} except with probability $0.01$.
Rewriting \Cref{eq:interval}, we see that
\begin{equation*}
\widehat{\balpha}_i = \alpha_i \cdot (1 \pm O(\sqrt{\log(d)/(\alpha_i n)}).
\end{equation*}
Hence, $\widehat{\balpha}_i$ gives a multiplicative approximation to $\alpha_i$ once $\alpha_i \geq \log(d)/n$,
and as $\alpha_i$ increases beyond this threshold, the quality of the approximation increases with it.

Based on this, \cite{HJW18} define the $M = \sqrt{n/\log(d)}$ intervals $I_1, \ldots, I_M$ via
\begin{equation*}
    I_j = \Big[(j-1)^2 \cdot \frac{\log d}{n},~j^2 \cdot \frac{\log d}{n}\Big],
\end{equation*}
and we correspondingly bucket our indices into $M$ buckets $[d] = B_1 \cup \cdots \cup B_M$ by setting
\begin{equation*}
    B_j = \{i \mid \widehat{\balpha}_i \in I_j\}.
\end{equation*}
For intuition behind the definition of these intervals, note that the midpoint of the $j$-th interval is $m_j \coloneqq j (j-1) \cdot \log(d)/n$ and the radius of the interval around its midpoint is, essentially, $j \cdot \log(d)/n$.  Suppose that we had a probability value which matched the midpoint, i.e.\ $\alpha_i = j (j-1) \cdot \log(d)/n$. Then applying \Cref{eq:interval} (and dropping the Big-Oh for simplicity),
\begin{equation*}
\widehat{\balpha}_i
= \alpha_i \pm \sqrt{\log(d)} \cdot \sqrt{\alpha_i/n}
= \alpha_i \pm \sqrt{\log(d)} \cdot \sqrt{(j (j-1) \cdot \log(d)/n)/n}
= \alpha_i \pm j \log(d)/n,
\end{equation*}
so the error we estimate $\widehat{\balpha}_i$ to is precisely the width of its bucket. 
In general, then, each $\widehat{\balpha}_i$ will be placed in the correct bucket or a closely neighboring bucket.
As for the smallest bucket, note that it contains those elements for which $\widehat{\balpha}_i \leq \log(d)/n$,
precisely those for which $\widehat{\balpha}_i$ cannot give a good multiplicative approximation to.

\subsubsection{Moment estimation}
Having bucketed $\alpha$'s probability values,
local moment matching proceeds by estimating the moments of $\alpha$ within each bucket.
For a given bucket $B_j$
and moment $k$, we would like to estimate the $k$-th moment restricted to $B_j$, given by
\begin{equation*}
    p_k(\alpha|_{B_j}) = \sum_{i \in B_j} \alpha_i^k.
\end{equation*}
However, especially for buckets containing large probability values,
this $k$-th moment can be poorly behaved with respect to small errors in our estimates of the $\alpha_i$'s.
To address this, we will recenter the $\alpha_i$'s around the midpoint of the bucket $m_j$ and instead estimate the \emph{centered power sum polynomial}
\begin{equation*}
    p_{k, j}(\alpha) = \sum_{i \in B_j} (\alpha_i - m_j)^k.
\end{equation*}
It is possible to modify the collision-based estimator unbiased estimator for $p_k(\cdot)$ from \Cref{eq:collision-estimator} to give an unbiased estimator for $p_{k, j}(\alpha)$.
This allows us to produce estimates $\widehat{\bp}_{1, j}, \ldots, \widehat{\bp}_{K, j}$ for the centered moments $p_{1, j}(\alpha), \ldots, p_{K, j}(\alpha)$, where $K$ is some integer of our choice.

Let us consider how this works for the smallest bucket $B_1$, consisting of those probability values which are at most $\log(d)/n$.
Since these values are small, it turns out that the $k$-th moment is \emph{not} poorly behaved with respect to small errors in the estimates of the $\alpha_i$'s, and so it suffices to directly estimate the un-centered moment $p_k(\alpha |_{B_1})$ rather than the centered moment $p_{k, 1}(\alpha)$. 
To estimate this, we use the following natural modification of the $k$-wise collision statistic:
\begin{equation*}
    c_{k,B_1}(\bx) = \frac{1}{\binom{n}{k}} \cdot \sum_{a \in B_1}\sum_{1 \leq i_1 < \cdots < i_k \leq n}  \bone[\bx_{i_1} = \cdots = \bx_{i_k}=a].
\end{equation*}
Routine calculations show that
\begin{equation}\label{eq:routine}
    \E[c_{k, B_1}(\bx)] = p_k(\alpha |_{B_1}),
    \quad
    \text{and}
    \quad
    \Var[c_{k, B_1}(\bx)]
    \leq \frac{1}{\binom{n}{k}}\sum_{i=1}^k \binom{k}{i} \cdot \binom{n-k}{i} \cdot p_{k + i}(\alpha |_{B_1}).
\end{equation}
Hence, $c_{k, B_1}(\bx)$ is an unbiased estimator for $p_k(\alpha |_{B_1})$.
To bound the variance, we will crudely bound each $\alpha_i$ within $B_1$ by the largest possible value $L = \log(d)/n$, which allows us to bound $p_{k+i}(\alpha |_{B_1}) \leq |B_1| \cdot L^{k+i}$.
Hence,
\begin{align*}
    \Var[c_{k, B_1}(\bx)]
    &\leq \frac{1}{\binom{n}{k}}\sum_{i=1}^k \binom{k}{i} \cdot \binom{n-k}{i} \cdot |B_1| \cdot L^{k+i}\\
    & = |B_1| \cdot L^{2k} \cdot \sum_{i=1}^k \frac{\binom{k}{i} \cdot \binom{n-k}{i}}{\binom{n}{k}} \cdot L^{i-k}\\
    &\leq |B_1| \cdot L^{2k} \cdot \sum_{i=1}^k 2^k \cdot \Big(\frac{k}{n-k}\Big)^{k-i} \cdot L^{i-k}\\
    &\leq |B_1| \cdot L^{2k} \cdot \sum_{i=1}^k 2^k \cdot \Big(\frac{k}{(n-k) \cdot L}\Big)^{k-i},
\end{align*}
where the second inequality uses a binomial coefficient identity that we prove in \Cref{eq:use-this-in-intro} below.
So long as we only estimate moments $1 \leq k \leq K$, where $K \leq O(\log(d))$,
then we have that $k/((n-k) \cdot L) \approx k/(n \cdot L) = O(\log(d))/(n \cdot L) = O(1)$ because of our choice of $L$, and so the variance is bounded above by
\begin{equation}\label{eq:classical_var_bound}
    \Var[c_{k, B_1}(\bx)]
    \leq |B_1| \cdot L^{2k} \cdot \sum_{i=1}^k 2^k \cdot O(1)^{k-i}
    = |B_1| \cdot L^{2k} \cdot 2^{O(k)}.
\end{equation}
Applying Chebyshev's inequality, we expect that
\begin{equation}\label{eq:chebyshev-moment}
    c_{k, B_1}(\bx) = p_k(\alpha|_{B_1}) \pm t \cdot \sqrt{|B_1|} \cdot L^k \cdot 2^{O(k)},
\end{equation}
except with probability $1/t^2$.
Heuristically, if we assume that each $\alpha_i$ is roughly equal to the maximum value of $L$, then $p_k(\alpha|_{B_1}) \approx |B_1| \cdot L^k$, which means that this gives us a \emph{multiplicative} approximation of the $k$-th moment of the form
\begin{equation}\label{eq:look-at-the-denom}
    c_{k, B_1}(\bx) = p_k(\alpha|_{B_1})\cdot (1 \pm t \cdot 2^{O(k)}/\sqrt{|B_1|}).
\end{equation}
Indeed, we saw in \Cref{ex:uniformity} that a multiplicative approximation to the second moment (enough to distinguish $p_2(\alpha) = 1/d$ versus $2/d$) is needed to distinguish the uniform distribution from a distribution which is uniform on half the entries, and so this is the type of guarantee we will need.
Let us mention briefly that we will typically choose $K = c \log(d)$, for $c$ an arbitrarily small constant, in which case the $2^{O(k)} \leq 2^{O(K)}$ factor will scale as $d^{O(c)}$, which is a small and manageable polynomial in $d$; for example, if $|B_1| = \Theta(d)$, then it will be dwarfed by the denominator of $\sqrt{|B_1|}$ in \Cref{eq:look-at-the-denom}.

There are $M = \sqrt{n/\log(d)}$ buckets and $K = O(\log(d))$ moments to estimate within each bucket.
Since our application of Chebyshev's inequality has failure probability $1/t^2$, we need to set $t^2 \gg M K$, i.e.\ $t \gg \sqrt[4]{n \log(d)}$, in order to be able to union bound over all $MK$ moments. 
This introduces an error into \Cref{eq:chebyshev-moment} which is too large for this statement to be useful.
To address this, \cite{HJW18} use Hoeffding's inequality to prove a stronger concentration bound for their moment estimators, showing that \Cref{eq:chebyshev-moment} holds except with probability $\exp(-t^2)$. This allows them to take $t$ to be a much smaller $t = \sqrt{\log(MK)}$.
Let us note, however, that \emph{our} eventual quantum algorithm will only need to estimate a small number (roughly $\log(d)$) of moments, in which case it will suffice to analyze the moment estimators by computing their variance and applying Chebyshev's inequality.

\subsubsection{Moment matching}\label{sec:moment-matching-intro}

Now that we have estimated the moments for each bucket, we want to apply moment matching within each bucket. For each bucket $B_j$, this entails computing a sub-distribution $\widehat{\balpha}_{B_j}$ which is supported on the interval $I_j$ corresponding to $B_j$ whose centered moments approximately match $\widehat{\bp}_{1, j}, \ldots, \widehat{\bp}_{K, j}$.
This will serve as our estimate of $\alpha|_{B_j}$.
That such a sub-distribution exists follows from the fact that $\alpha|_{B_j}$ itself is supported on the interval corresponding to $B_j$ and has centered moments which approximately match $\widehat{\bp}_{1, j}, \ldots, \widehat{\bp}_{K, j}$;
however, there might be other sub-distributions which approximately match these moments as well, and as part of the proof we must show that these distributions are close to $\alpha|_{B_j}$.
Actually finding such a sub-distribution $\widehat{\balpha}_{B_j}$ can be done, albeit inefficiently,  by brute force searching over possible sub-distributions until one is found which approximately matches the learned moments

However, it turns out that searching for this sub-distribution can also be cast as a linear program, and~\cite{HJW18} give an algorithm for rounding this linear program and show how to analyze it.
To explain the guarantees that this algorithm has, let us again focus on the case of the smallest bucket $B_1$.
Then their rounding algorithm produces an estimate $\widehat{\balpha}_{B_1}$ such that
\begin{align}\label{eq:lp-rounding}
    \E \dtv{\alpha|_{B_1}^{\geq}}{\widehat{\balpha}_{B_1}^{\geq}} &= O\Big(\vphantom{\sum_{k=1}^K}\frac1K\sqrt{Ld} + 25^{K}L\sum_{k=1}^K L^{-k} \Big\lvert p_k(\alpha|_{B_1}) - \widehat{\bp}_{k,1}\Big\rvert\Big),
\end{align}
where again we are writing (i) $L = \log(d)/n$ for the largest probability value in bucket $B_1$  and (ii) $\widehat{\bp}_{k,1}$ for the estimate of the $k$-th moment rather than the $k$-th central moment.
In this expression, there are two sources of error which govern how close $\widehat{\balpha}_{B_j}$ is to $\alpha|_{B_j}$, which are referred to as the \emph{bias} and the \emph{variance}, given by the first and second terms, respectively.
The bias corresponds to the error we incur from only learning the first $K$ moments of $\alpha|_{B_j}$, and the variance corresponds to the error we incur from estimating these moments rather than computing them exactly.
Increasing the number $K$ of moments that we estimate decreases the bias term but increases the variance term,
and so these two sources of error have to be traded off with each other when picking the number of moments $K$.
Note that the error of the $k$-th estimate $\widehat{\bp}_{k,1}$ from the true value of $p_k(\alpha|_{B_1})$ is penalized by an additional factor of $L^{-k}$, meaning that higher moments must be estimated to better accuracy than lower moments.
Indeed, our modified collision estimators from \Cref{eq:chebyshev-moment} have an error which scales as $L^k$, which nicely cancels with the $L^{-k}$ ``penalty'' factor.

\subsubsection{Putting it all together.}
Now we sketch and analyze a simple local moment matching algorithm in order to illustrate how all of the ingredients combine. Our algorithm will follow the same outline as the algorithm sketched at the beginning of \Cref{sec:lmm} which involves splitting the sample into just two buckets.
Although it will not achieve the optimal $n = O(d/(\log(d) \cdot \epsilon^2))$ sample complexity, it will still improve on the trivial bound of $n = O(d/\epsilon^2)$ which comes from using the empirical sorted distribution, and it will serve as an inspiration for our eventual quantum algorithm.
\begin{enumerate}
    \item (Bucketing): 
    Draw $n$ samples $\bx = (\bx_1, \ldots, \bx_n)$.
    Let $\widehat{\bp}$ be the empirical distribution of $\bx$. Pick a threshold $0 \leq L \leq 1$ and set
    \begin{equation*}
    \mathsf{Large} = \{i \mid \widehat{p}_i \geq L\}\quad\text{and}\quad\mathsf{Small} = [d] \setminus \mathsf{Large}.
    \end{equation*}
    \item (Estimating the large bucket): Since every probability in $\mathsf{Large}$ is at least $L$, $\mathsf{Large}$ has at most $1/L$ items.
    Use $O(L^{-1}\epsilon^{-2})$
    samples to produce an estimate $\widehat{\alpha}_{\mathsf{Large}}$ of $\alpha|_{\mathsf{Large}}$.
    \item (Estimating the small bucket): Draw $n$ more samples $\by = (\by_1, \ldots, \by_n)$ in order to estimate the first $K$ moments of $\alpha|_{\mathsf{Small}}$.
    Compute the collision statistics $c_{1, \mathsf{Small}}(\by), \ldots, c_{K, \mathsf{Small}(\by)}$, and use these as estimates of these moments.
    Use moment matching to compute an estimate $\widehat{\balpha}_{\mathsf{Small}}$ of $\alpha|_{\mathsf{Small}}$.
    \item Output $(\widehat{\balpha}_{\mathsf{Large}}, \widehat{\balpha}_{\mathsf{Small}})$ as the final estimate of $\alpha$.
\end{enumerate}

Let us now analyze this algorithm in order to choose the parameters $n$, $L$, and $K$. First, we want all 3 steps to consume $n$ samples of $\alpha$, which means that in the 3rd step we need $O(L^{-1} \epsilon^{-2}) = C \cdot L^{-1} \epsilon^{-2} \leq n$, for some constant $C \geq 1$. We can achieve this so long as $L \geq C /(n \epsilon^2)$. However, we have also seen in \Cref{eq:interval} that for bucketing to work, we want $L \geq \log(d)/n$. To satisfy both of these, we will set $L = C \log(d)/(n \epsilon^2)$. In addition, we have argued in the moment estimation section that we will want the number of moments $K = c \log(d)$ for some small constant $c > 0$.

Since we are using $O(L^{-1}\epsilon^2)$ samples in the second step, $\widehat{\bp}_{\mathsf{Large}}$ will be a good estimate of $p|_{\mathsf{Large}}$ with high probability, and so it suffices to analyze the third step. From \Cref{eq:chebyshev-moment}, we know that for each $1 \leq k \leq K$,
\begin{equation*}
    c_{k, \mathsf{Small}}(\bx) = p_k(\alpha|_{\mathsf{Small}}) \pm t \cdot \sqrt{|\mathsf{Small}|} \cdot L^k \cdot 2^{O(k)},
\end{equation*}
except with probability $1/t^2$.
(Above, this bound was argued assuming that $L = \log(d)/n$, but the proof only uses the fact that $L \geq \log(d)/n$, which is the case here.)
We will apply the trivial bound $|\mathsf{Small}| \leq d$.
In addition, to be able to union bound over all $K$ moments, we will take $t = \sqrt{K}$. This gives us that for all $1 \leq k \leq K$, with high probability,
\begin{equation*}
    c_{k, \mathsf{Small}}(\bx) = p_k(\alpha|_{\mathsf{Small}}) \pm \sqrt{Kd}\cdot L^k \cdot 2^{O(k)},
\end{equation*}
Now applying our moment matching guarantee in \Cref{eq:lp-rounding}, we have
\begin{align*}
    \E \dtv{\alpha|_{\mathsf{Small}}^{\geq}}{\widehat{\balpha}_{\mathsf{Small}}^{\geq}} &= O\Big(\vphantom{\sum_{k=1}^K}\frac1K\sqrt{Ld} + 25^{K}L\sum_{k=1}^K L^{-k} \cdot \sqrt{Kd} \cdot L^k \cdot 2^{O(k)}\Big)\\
&=O\Big(\vphantom{\sum_{k=1}^K}\frac1K\sqrt{Ld} + 2^{O(K)}L\sqrt{d}\Big)\\
&=O\Big(\sqrt{\frac{C d}{c^2 \log(d) n \epsilon^2}} + d^{0.5 + O(c)}\cdot \frac{C \log(d)}{n \epsilon^2}\Big),
\end{align*}
where in the last step we have plugged in our settings of $L$ and $K$.
We are aiming for a total variation distance of at most $\epsilon$.
So long as $c$ is chosen to be small enough so that $d^{0.5 + O(c)} \leq d$, then both terms can be made to satisfy this by setting $n = O(d/(\log(d) \epsilon^4))$.
This gives our final sample complexity for $n$, which improves on the trivial bound of $n = O(d/\epsilon^2)$ for sufficiently large $\epsilon$, i.e.\ whenever $\epsilon \geq 1/\sqrt{\log(d)}$.

\section{Technical overview of the quantum case}\label{sec:technical_overview_quantum}

In the quantum setting, we are given $n$ copies of a mixed state $\rho \in \C^{d \times d}$ with spectrum $\alpha = (\alpha_1, \ldots, \alpha_d)$, where $\alpha_1 \geq \cdots \geq \alpha_d$.
Our goal is to produce an estimate $\widehat{\balpha}$ of $\alpha$,
and our approach for doing so will be inspired by the framework of local moment matching.
As in the beginning of \Cref{sec:lmm},
we will divide $\alpha$ into just two buckets,
the first containing the large elements and the second containing the small elements.
We will learn the elements in the large bucket by using a simple empirical estimator, and we will learn the elements in the small bucket by estimating their moments and applying local moment matching.
Below, we describe how we bucket and learn moments in the quantum setting,
and explain our decision to use two buckets.

\subsection{Bucketing}
As in the classical case,
we will take $2n$ copies of $\rho$
and split them into two batches of size $n$.
We will use the first batch to learn a projective measurement $\{\bPi, \overline{\bPi}\}$, where $\bPi$ is intended to be the projection onto $\rho$'s largest eigenvalues and $\overline{\bPi}$ is intended to be the projection onto $\rho$'s smallest eigenvalues.
Having done this, we will measure the remaining $n$ copies of $\rho$ with the $\{\bPi, \overline{\bPi}\}$ measurement;
for those copies where we receive the $\bPi$ outcome,
it is as if we are sampling from the large part of $\alpha$,
and for those copies where we receive the $\overline{\bPi}$ outcome,
it is as if we are sampling from the small part of $\alpha$.
We can view this process as converting the second half of our copies of $\rho$ into copies of the state $\bPi \rho \bPi + \overline{\bPi} \rho \overline{\bPi}$.

To learn $\{\bPi, \overline{\bPi}\}$, we will run a tomography algorithm on the first $n$ copies of $\rho$ to produce an estimate $\widehat{\brho}$ of $\rho$.
This estimate can be written as $\widehat{\brho} = \bU \cdot \widehat{\balpha} \cdot \bU^{\dagger}$,
where $\widehat{\balpha}$ is an estimate for $\rho$'s spectrum $\alpha$ and $\bU$ is an estimate for $\rho$'s eigenvectors. Assuming that $\widehat{\balpha} = (\widehat{\balpha}_1, \ldots, \widehat{\balpha}_d)$ is sorted, so that $\widehat{\balpha}_1 \geq \cdots \geq \widehat{\balpha}_d$,
we will select a threshold $\tau$
and define $\bk$ to be the largest index such that $\widehat{\balpha}_{\bk} \geq \tau$.
Then $\alpha_1, \ldots, \alpha_{\bk}$ correspond to 
the $\mathsf{Large}$ eigenvalues and $\alpha_{\bk+1}, \ldots, \alpha_{d}$ correspond to the $\mathsf{Small}$ eigenvalues.
We can then define the projection onto $\widehat{\balpha}$'s top $\bk$ eigenvalues as
\begin{equation*}
    \bPi = \bU \cdot (\ketbra{1} + \cdots + \ketbra{\bk}) \cdot \bU^{\dagger},
\end{equation*}
which will serve as our estimate for the projection onto $\rho$'s top $\bk$ eigenvalues as well.
We can then set $\overline{\bPi} = I - \bPi$ and we have our projective measurement.

Which tomography algorithm to pick?
If we want to use entangled measurements,
we could use Keyl's algorithm~\cite{Key06},
which was analyzed in~\cite{OW16},
or either of the entangled tomography algorithms from Haah et al.~\cite{HHJ+16}.
If we want to use unentangled measurements,
then one option is the uniform POVM algorithm independently due to Krishnamurthy and Wright~\cite[Section 5.1]{Wri16} and Guta et al.~\cite{GKKT20},
or we could use either of the more recent algorithms of Chen et al.~\cite{CHL+23} or Flammia and O'Donnell~\cite{FO24} which achieve near-optimal copy complexity for estimation in fidelity.
In principle, we believe that many of these algorithms are a good choice, but in practice some are significantly more easy to analyze than others.

To see why, let us consider the sources of error that incur in this bucketing step. Recall that after learning the measurement $\{\bPi, \overline{\bPi}\}$, we convert the remaining $n$ copies of $\rho$ to the state $\bPi \rho \bPi + \overline{\bPi} \rho \overline{\bPi}$. We want this state to satisfy two properties.
First, $\bPi \rho \bPi$ should only contain large eigenvalues and $\overline{\bPi} \rho \overline{\bPi}$ should only contain small eigenvalues;
if this does not occur, we call it a \emph{misclassification error}.
Second, the spectrum of $\bPi \rho \bPi + \overline{\bPi} \rho \overline{\bPi}$ should be close to the spectrum of $\rho$; if this does not occur, we call it an \emph{alignment error}, referring to the fact that $\{\bPi, \overline{\bPi}\}$ is not properly aligned with $\rho$'s eigenbasis.

\paragraph{Misclassification error.}
In the classical case of local moment matching,
misclassification error corresponds to placing some probability value $\alpha_i$ into the wrong bucket $B_j$.
There, we argued that this wouldn't happen with high probability because our estimator $\widehat{\balpha}$ was a good estimator of $\alpha$ ``in an $\ell_{\infty}$ sense'', meaning that each coordinate $\widehat{\balpha}_i$ was (multiplicatively) close to $\alpha_i$, for all $i$.
Our analysis suggests that this is also the case in the quantum setting: if we can guarantee that $\widehat{\brho}$ is close to $\rho$ in $\ell_{\infty}$ norm, then we can avoid misclassification error. Unfortunately, of the above tomography algorithms, the only two that are known to give $\ell_{\infty}$ norm guarantees are the the uniform POVM algorithm (due to the analysis of Guta et al.~\cite{GKKT20}) and the Chen et al.~\cite{CHL+23} fidelity algorithm.
This rules out using entangled measurements (at least, given our current understanding of these entangled measurements) and is the reason why we only consider unentangled measurements in this paper. In particular, we choose the uniform POVM algorithm.

\paragraph{Alignment error.}
Alignment error, on the other hand, is entirely a quantum phenomenon.
In the classical case, even if you misclassify some probability values, the distribution your second batch of samples are drawn from is still~$\alpha$.
But in the quantum case, measuring $\rho$ with $\{\bPi, \overline{\bPi}\}$ will inevitably disturb the state, and so we need to bound the total amount of disturbance that occurs.
We show several ways to do so.
First, we show that this disturbance can be bounded in the case that the tomography algorithm we use is able to perform principal component analysis (PCA) tomography. To expand on this, let $\widehat{\brho}_{\leq k} = \bPi \cdot \widehat{\brho} \cdot \bPi$ be the projection onto $\widehat{\brho}$'s top $\bk$ eigenvalues.
If $\widehat{\brho}_{\leq \bk}$ happened to perfectly equal the projection of $\rho$ onto its top $\bk$ eigenvalues, then we would have
\begin{equation*}
    \dtr(\rho, \widehat{\brho}_{\leq \bk}) = \alpha_{\bk+1} + \cdots + \alpha_d.
\end{equation*}
If this equation is satisfied up to error $\epsilon$, then the algorithm is performing trace distance rank-$\bk$ PCA up to error $\epsilon$, and we show that if this condition is satisfied,
we will only introduce $\epsilon$ error when we measure $\rho$ with $\{\bPi, \overline{\bPi}\}$.
In our case, the uniform POVM algorithm's $\ell_{\infty}$ norm guarantees are essentially strong enough to show that it gives a rank-$\bk$ trace distance PCA algorithm
(although we are even able to give a direct proof that the uniform POVM has small alignment error, short-cutting around trace distance PCA).
In fact, we can strengthen this result and show that it actually suffices to perform \emph{fidelity} PCA up to error $\epsilon$, rather than the more costly trace distance PCA.

\paragraph{Related work.}
Let us conclude by discussing the fidelity tomography algorithms of Chen et al.~\cite{CHL+23} and Flammia and O'Donnell~\cite{FO24},
whose strong similarities with our bucketing step we became aware of partway through this project.
Among many other results, both of these works show that rank-$r$ fidelity tomography can be performed with unentangled measurements using $n = \widetilde{O}(d r^2/\epsilon)$ copies of $\rho$.
Their starting point is the basic uniform POVM tomography algorithm, which gives optimal copy complexities for $\ell_{\infty}$, $\ell_{1}$, and~$\ell_2$ unentangled tomography, but cannot give an optimal copy complexity for fidelity tomography as it is a nonadaptive algorithm (see~\cite{CHL+23}, which shows that any rank-$r$ nonadaptive algorithm for fidelity tomography must use $\Omega(dr^2/\epsilon^2)$ copies). Learning in fidelity requires learning $\rho$ to higher accuracy on its small eigenvalues than on its large eigenvalues, but the uniform POVM is unable to do so as the presence of the large eigenvalues interferes with learning the small eigenvalues. Intriguingly, this is highly reminiscent of the issue with learning moments in the classical setting that motivated the local moment matching approach.

To deal with this issue, they proceed in an iterative approach.
In the first round, they run the uniform POVM algorithm to produce an estimate $\widehat{\brho}_1$ of $\rho$.
They let $\bPi_{1}$ be the projection onto the ``large'' eigenvalues of $\widehat{\brho}_1$ and set $\overline{\bPi}_{1} = I - \bPi_{1}$.
Then they measure all remaining copies of $\rho$ with $\{\bPi_1, \overline{\bPi}_1\}$;
those for which the second outcome was observed have collapsed to $\overline{\bPi}_1 \cdot \rho \cdot\overline{\bPi}_1$, which should be the projection onto $\rho$'s smaller eigenvalues, and then they recurse this procedure onto these states.
The result is a sequence of projectors $\bPi_1, \bPi_2, \ldots$ for which $\bPi_1$ should project onto $\rho$'s highest eigenvalues, $\bPi_2$ should project onto its next highest eigenvalues, and so forth.

To be precise, this is the guarantee that the Chen et al.~\cite{CHL+23} algorithm provides.
They make use of the $\ell_{\infty}$ tomography guarantee of the uniform POVM algorithm due to Guta et al.~\cite{GKKT20}, which allows them to control the magnitude of the eigenvalues which fall within each bucket $\bPi_i$.
The Flammia and O'Donnell algorithm, on the other hand, only requires the weaker $\ell_2$ tomography guarantee of the uniform POVM, but as a result it is not able to precisely control the magnitude of the eigenvalues within each bucket.
This means that of the two, the Chen et al.\ algorithm appears to be more suitable for our purposes,
and we believe that a modification of it can be shown to successfully split $\rho$ into multiple buckets \`a la 
local moment matching with small misclassification and alignment error.

The reason we use the uniform POVM rather than the Chen et al.\ algorithm is that our algorithm will use $O(\eps^{-6}\cdot d^3\cdot (\log \log (d) / \log(d))^4 )$ copies, whereas we believe that the best we could hope for by using the Chen et al.\ algorithm is $O(\eps^{-5}\cdot d^3\cdot (\log \log (d) / \log(d))^4 )$ copies, at the expense of significant added complexity in the algorithm description and proof of correctness.
(This would entail having to re-analyze the Chen et al.\ algorithm in addition to implementing local moment matching in buckets of various sizes, rather than just the small bucket.)
Since we believe that $O(\eps^{-5})$ is still not the optimal dependence on $\epsilon$, we have opted to prioritize the simplicity of our algorithm over a slight improvement in copy complexity.

\subsection{Moment estimation}\label{sec:quantum-moment-estimation}

The final step is to estimate the moments of the small part of the state $\overline{\bPi} \rho \overline{\bPi}$ and perform local moment matching.
Before discussing how to estimate the moments of $\overline{\bPi} \rho \overline{\bPi}$, which is in general a subnormalized state,
let us first discuss how to estimate the moments of a properly normalized quantum state $\sigma$.
Given~$\sigma$, estimating its moments $\tr(\sigma^k)$ is a well-studied topic in quantum information, and there are various off-the-shelf estimators available for our use.
For example, if we were in the entangled setting,
we could use the estimators introduced in~\cite{OW15}, which were further studied in~\cite{AISW20} and~\cite{BOW19};
in particular, the latter work reinterpreted these estimators as natural quantum analogues of the classical collision-based estimators from \Cref{eq:collision-estimator} and showed that these are the minimum-variance unbiased estimators for the moments $\tr(\sigma^k)$.

We are working in the unentangled setting, so we use a different estimator.
Ours is based on the fact that the uniform POVM tomography algorithm, when run on a single copy of $\sigma$, outputs a matrix $\widehat{\bsigma}$ which is an unbiased estimator for $\sigma$, meaning that $\E \widehat{\bsigma} = \sigma$.
With $k$ copies of $\sigma$, then, we can generate $k$ independent copies of this estimator $\widehat{\bsigma}_1, \ldots, \widehat{\bsigma}_k$; given these,  $\tr(\widehat{\bsigma}_1 \dots \widehat{\bsigma}_k)$ is an unbiased estimator of $\tr(\sigma^k)$.
Generalizing this to $n$ copies of $\sigma$, we have the corresponding U-statistic
\begin{equation*}
    \bZ_k \coloneqq \frac{1}{n(n-1)\cdots (n-k+1)}\cdot\sum_{\text{distinct }i_1, i_2, \ldots, i_k \in [n]} \tr\left(\widehat{\bsigma}_{i_1} \widehat{\bsigma}_{i_2} \cdots \widehat{\bsigma}_{i_k}\right).
\end{equation*}
This unbiased estimator is a natural non-commutative generalization of the collision estimator in \Cref{eq:collision-estimator}.
To our knowledge, we are the first to explicitly study this estimator.
That said, similar estimators have appeared in the literature before; for example, it can be viewed as a special case of an estimator for nonlinear functions of $\sigma$ proposed in~\cite{HKP20}.
In addition, a related estimator for $\tr(\rho\sigma)$, where $\rho$ and $\sigma$ are two distinct quantum states, was proposed and analyzed in~\cite[Appendix D]{ALL22}.

Our main technical result is the following variance bound on $\bZ_k$ (cf.\ \Cref{eq:for-use-in-intro?} below):
\begin{equation}\label{eq:cool-variance-bound}
    \Var[\bZ_k] \leq \frac{1}{\binom{n}{k}} \sum_{i=0}^{k-1} \binom{k}{k-i} \binom{n-k}{i}\cdot 6^k \cdot d^{k-i-1} \cdot \tr(\sigma^{2i}).
\end{equation}
This is a direct analogue to the variance bound for the classical collision estimators from \Cref{eq:routine},
though it is worse due to the $d^{k-i-1}$ term and the presence of ``small'' moments $\tr(\sigma^0) = 1, \ldots, \tr(\sigma^k)$ which do not appear in the classical bound.
This is of course as expected, as estimating moments in the quantum case should only be more difficult than in the classical case.
As one application of this variance bound, we are able to show that $\bZ_k$ approximates $\tr(\sigma^k)$ with multiplicative error bounds; in particular, we show that for a fixed constant $k$,
given
\begin{equation} \label{eq:intro-moment}
    n = O\left(\max \left\{\frac{d^{2-2/k}}{\delta^2}, \frac{d^{3 - 2/k}}{\delta^{2/k}} \right\}\right)
\end{equation}
copies of a state $\sigma$, the estimator satisfies
\begin{equation}\label{eq:multiplicative-error-bound-intro}
    (1-\delta)\cdot\tr(\sigma^k) < \bZ_k < (1+\delta)\cdot\tr(\sigma^k)
    \end{equation}
with probability at least 99\%.
Here, the $O(\cdot)$ is hiding a $k^k$ dependence, which is a constant so long as $k$ is a constant.
(We note that a similar $k^k$ factor appears in the sample complexities of both the classical and the entangled quantum moment estimators~\cite{AOST17,AISW20}.)
As an corollary, this immediately implies the sample complexity bound for quantum R\'{e}nyi entropy estimation given in \Cref{thm:quantum-renyi}.
Even for $k = 2$, our variance bound slightly improves on the bound given in~\cite[Appendix D]{ALL22},
which is why we can show multiplicative error bounds versus their additive error bounds.
We provide a detailed comparison between our algorithm and the algorithms of~\cite{HKP20,ALL22} in \Cref{sec:moment_estimation}.

For our downstream application of moment estimation to spectrum learning, we need to modify the estimator $\bZ_k$ to apply to subnormalized states of the form $\bsigma = \overline{\bPi} \rho \overline{\bPi}$.
This is relatively straightforward and can be done by first measuring $\rho$ according to $\{\bPi, \overline{\bPi}\}$ and using those samples which fall in $\overline{\bPi}$ in the estimator. The result is an unbiased estimator $\bY_k$ for the $k$-th moment $\tr(\bsigma^k)$.
Furthermore, one can adapt the analysis of the variance bound from \Cref{eq:cool-variance-bound} to show an analogous bound for $\bY_k$, which we will use to show concentration of $\bY_k$.

The proof of our variance bound is significantly more challenging than in the classical case and follows from a careful analysis of our estimator's second moment, $\E[\bZ_k^2]$.
This second moment expands to an average over products of two traces,
\begin{equation*}
    \tr\left(\widehat{\bsigma}_{i_1} \widehat{\bsigma}_{i_2} \cdots \widehat{\bsigma}_{i_k}\right)
    \cdot \tr\left(\widehat{\bsigma}_{j_1} \widehat{\bsigma}_{j_2} \cdots \widehat{\bsigma}_{j_k}\right).
\end{equation*}
When all of the indices above are distinct, this trace product is $\tr(\sigma^k)^2$ in expectation, matching $\E[\bZ_k]^2$.
When the $i$'s and $j$'s have $t \geq 1$ indices in common, we use the trick that
\begin{equation*}
    \tr\left(\widehat{\bsigma}_{i_1} \widehat{\bsigma}_{i_2} \cdots \widehat{\bsigma}_{i_k}\right)
    \cdot \tr\left(\widehat{\bsigma}_{j_1} \widehat{\bsigma}_{j_2} \cdots \widehat{\bsigma}_{j_k}\right)
    = \tr\left(P \cdot \parens[\Big]{\widehat{\bsigma}_{i_1} \otimes \cdots \otimes \widehat{\bsigma}_{i_k} \otimes \widehat{\bsigma}_{j_1} \otimes \cdots \otimes \widehat{\bsigma}_{j_k}}\right)
\end{equation*}
for the permutation matrix $P$ that rearranges qudits in the appropriate way.
We can then bound the expectation of this expression to get something which degrades with $t$: specifically, our bound is $6^k d^{t-1} \tr(\sigma^{2(k-t)})$ as shown in \Cref{eq:individual_term_in_var_bound}.
The dependence of the second moment on $n$ comes from the distribution over $t$: the probability of two random subsets $i, j \subseteq [n]$ having $t$ elements in common is about $1/n^t$, so as $n$ grows large, $\E[\bZ_k^2]$ tends to the $t = 0$ case, $\E[\bZ_k]^2$.
Appropriately balancing these parameters gives the copy complexity in \Cref{eq:intro-moment}.

\paragraph{Related work.}
Curiously, the copy complexity of moment estimation to (constant) multiplicative error in the unentangled measurement setting appears to be open.
Our estimator shows a bound of $n = O(d^{3 - 2/k})$ for any constant $k$, but this may be sub-optimal: we measure our copies of the state with a fixed POVM, which has been shown to lead to worse complexities in some other settings~\cite{LA24}.
The best lower bound for multiplicative-error moment estimation comes from the \emph{fully entangled} setting, and is $n = \Omega(d^{2 - 2/k})$~\cite{AISW20}.
It is not clear to us whether $d^{3-2/k}$ is the correct scaling: the existing literature does not rule out the possibility of a scaling of $d^{3-3/k}$, for example.

We survey this literature now.
In the unentangled setting, it has focused on the $k = 2$ setting of estimating $\tr(\sigma^2)$, the purity of $\sigma$.
Prior work gives estimators for the purity which involve repeatedly measuring $\sigma$ in a Haar random basis~\cite{ALL22}.
The best-known upper and lower bounds~\cite{ALL22,GHYZ24} for estimating purity to \emph{additive} error do not resolve the question of estimating to \emph{multiplicative} error:
the upper bound only gives $n = O(d^2)$ for estimating to multiplicative error, and the lower bound of $n = \Omega(d^{1/2})$ is too loose if directly translated to constant multiplicative error.
A crucial setting for multiplicative-error moment estimation is when the input state is close to maximally mixed; so, a closely related task is to distinguish whether $\sigma$ is maximally mixed or constant far from maximally mixed.
For this, $n = \Theta(d^{3/2})$ copies of $\sigma$ are sufficient~\cite{BCL20} and necessary~\cite{CHLL22}.
In the classical setting, this task is solved by computing an unbiased estimator for the purity, but these results in the quantum setting do not give good estimates on the purity, despite being closely related to the purity estimator of \cite{ALL22}.

In summary, $d^{3/2}$ could be the correct scaling for estimating purity to multiplicative error in the unentangled setting, which extrapolates to a scaling of $d^{3 - 3/k}$ for general $k$.
So, there may be room to improve unentangled moment estimators, even for $k = 2$.
This is not the bottleneck of our argument, though, so we do not attempt to optimize them further.

\subsection{Putting everything together}
Now let us describe how these ingredients combine to give our spectrum estimating algorithm.
Let $B$ be our intended upper bound on the ``small bucket'' eigenvalues.
We first run the uniform tomography algorithm to produce a measurement $\{\bPi, \overline{\bPi}\}$ which buckets $\rho$ into its large and small eigenvalues, respectively.
We show in \Cref{thm:bucket-algo} that if we use $n = O(dB^{-2}\eps^{-2})$ copies of $\rho$ to learn $\{\bPi, \overline{\bPi}\}$,
then we will achieve alignment error at most $\epsilon$,
and all eigenvalues in $\bsigma = \overline{\bPi} \cdot \rho \cdot \overline{\bPi}$ will be at most $2B$.
Furthermore, this theorem also shows that the spectrum of the large bucket, $\spec(\bPi \rho \bPi)$, can also be estimated up to error $\epsilon$ with this number of samples.
Thus, it remains to estimate the spectrum of the small bucket $\bsigma$,
which we denote $\bbeta = \{\bbeta_i\}$.

To do this, we take $n = O(dB^{-2} \epsilon^{-2})$ copies of $\rho$, measure all of them with the uniform POVM, and compute the moment estimators $\bY_1, \ldots, \bY_K$ from \Cref{sec:quantum-moment-estimation} for some number of moments $K$ to be specified later.
Let us note that since each of these estimators relies on samples from the uniform POVM, we can reuse the same samples to compute all $K$ estimators.
For our number of samples $n$, we are able to show the following variance bound on $\bY_k$:
\begin{equation*}
    \Var[\bY_k] \leq B^{2k} \cdot k^{O(k)} \cdot \epsilon^2,
\end{equation*}
which is analogous to the classical variance bound in \Cref{eq:classical_var_bound}.
The key difference between these two bounds is that the factor of $2^{O(k)}$ in the classical bound is replaced by a factor of $k^{O(k)}$ in the quantum bound; this difference means that although we can use $K = O(\log(d))$ moments classically, we will only be able to use $K = O(\log(d)/\log \log(d))$ moments quantumly.
Applying Chebyshev's inequality, we have that
\begin{equation*}
    \bY_k = \tr(\bsigma^k) \pm t \cdot B^k \cdot k^{O(k)} \cdot \epsilon,
\end{equation*}
except with probability $1/t^2$.
In order to union bound over all $K$ moments, we will set $t = \sqrt{K}$, in which case we get that with high probability,
\begin{equation*}
    \bY_k = \tr(\bsigma^k) \pm \sqrt{K} \cdot B^k \cdot k^{O(k)} \cdot \epsilon
\end{equation*}
for all $1 \leq k \leq K$.
At this point, converting these estimates of $\bsigma$'s moments to an estimate of $\bsigma$'s spectrum is a purely classical problem, and it can be solved by appealing to the moment matching algorithm from \Cref{sec:moment-matching-intro}.
In particular, that algorithm will produce an estimate $\widehat{\bbeta}$ of $\bbeta$, and \Cref{eq:lp-rounding} provides the guarantee that
\begin{align*}
    \E \dtv{\bbeta^{\geq}}{\widehat{\bbeta}^{\geq}}
    &= O\Big(\vphantom{\sum_{k=1}^K}\frac1K\sqrt{Bd} + 25^{K}B\sum_{k=1}^K B^{-k} \cdot \sqrt{K} \cdot B^k \cdot k^{O(k)} \cdot \epsilon\Big)\\
    &= O\Big(\vphantom{\sum_{k=1}^K}\frac1K\sqrt{Bd} + K^{O(K)}B \epsilon\Big).
\end{align*}
For this to be at most $\epsilon$, the first term must be $O(\epsilon)$,
which forces us to pick $B = O(\epsilon^2K^2/d)$. With this choice, we have
\begin{equation*}
    \E \dtv{\bbeta^{\geq}}{\widehat{\bbeta}^{\geq}}
    =O\Big(\epsilon + \frac{1}{d}K^{O(K)} \epsilon^3\Big).
\end{equation*}
For the second term to be at most $O(\epsilon)$ as well, we select $K = c \cdot \log(d)/\log \log(d)$ for some small enough constant $c > 0$.
In total, this gives an estimate $\widehat{\bbeta}$ which is $O(\epsilon)$ close to the true small bucket spectrum $\bbeta$; combining this with our estimate of the large bucket gives a full algorithm for estimating $\rho$'s spectrum.
In total, this algorithm uses
\begin{equation*}
    n
    = O\Big(\frac{d}{B^2 \epsilon^2}\Big)
    = O\Big(\frac{d^3}{K^4\epsilon^6}\Big)
    = O\Big(d^3 \cdot \parens[\Big]{\frac{\log \log(d)}{\log(d)}}^4 \cdot \frac{1}{ \epsilon^6}\Big)
\end{equation*}
copies of $\rho$, as promised.
In principle, this algorithm can also be made to run with $\poly(d, 1/\eps)$ quantum gate complexity and classical overhead, but for simplicity, we limit our discussion to sample complexity.

This argument incurs a noticeable loss in terms of error, scaling as $1/\eps^6$.
This comes from the bucket threshold $B$ scaling as $\eps^2$, which then inflates the cost of creating the buckets, which is $O(dB^{-2}\eps^{-2})$.
The bucket threshold is identical to that in the classical setting~\cite{HJW18}, but the cost of bucketing is higher in the quantum setting, incurring a dependence on $B$ which is not present in the classical setting.
These complications are more or less due to the alignment error discussed in previous sections.\footnote{
    Our argument also introduces a $\log\log(d)$ dependence which is not present in the classical LMM argument; this overhead may appear for similar reasons.
}
Further, in \Cref{sec:lmm-fails}, we argue that this issue is inherent to the strategy of bucketing.
In total, then, it is not clear what kind of algorithm could achieve the correct dependence on $\eps$.

\subsection{Discussion}
In summary, we show that spectrum estimation can be performed with fewer samples than state tomography in the unentangled setting.
Still open is the question of the true copy complexity of spectrum estimation, both in the unentangled and entangled settings, and even for constant $\eps$.
We now discuss avenues towards resolving this question.

Our algorithm requires $O(d^3 (\log\log(d)/ \log(d))^4)$ copies to perform unentangled spectrum estimation for constant $\eps$, an improvement which is unexpectedly large compared to the mere $\log(d)$ savings in the classical setting.
We lack a clear explanation for why four log factors can be saved, though we expect this scaling to persist for $\eps$ smaller than constant, in a similar parameter regime as in classical sorted distribution estimation.
We give some evidence that a straightforward adaptation of a local moment matching scheme will not suffice: in \Cref{sec:lmm-fails}, we give a family of rank-$r$ quantum states for which learning in trace distance $\epsilon$ reduces to bucketing with $< \eps^2$ alignment error.
Prior work by Haah et al. \cite{HHJ+16} has demonstrated rank-$r$ full state tomography lower bounds against this family of quantum states. This suggests that bucketing into any number of buckets is at least as hard as performing rank-$\frac{1}{B}$ full state tomography, where $B$ is the upper threshold of the \emph{smallest} bucket.
Since in local moment matching, this threshold scales linearly with $\eps$, this approach cannot attain the (presumably correct) quadratic dependence on $1/\eps$.
This barrier holds for both entangled and unentangled settings.
In short, our evidence suggests that a different algorithm is needed to perform spectrum estimation optimally.

As for the entangled setting, in \cref{sec:lower} we give computational evidence that $n = O(d)$ samples does not suffice for spectrum estimation.
In fact, this evidence points to $d^{2 - \gamma}$ being insufficient for any constant $\gamma > 0$.
As for the upper bound, the central barrier to adapting our algorithm to the entangled setting is proving an $\ell_\infty$ guarantee for a sample-optimal fully entangled tomography algorithm.
Overall, we still lack formal proofs beyond the upper bound of $O(d^2)$ and the lower bound of $\Omega(d)$; closing this gap remains an interesting open problem.

\section{Preliminaries}\label{sec:prelim}

We use \textbf{boldface} to denote random variables, and define $[d] = \{1, \dots, d\}$.

\subsection{Classical and quantum distances}

\begin{definition}[Total variation distance]
    The \emph{total variation (TV) distance} between two vectors $x, y \in \R^d$ is defined as
    \begin{equation*}
        \dtv{x}{y} = \frac12 \sum_{i=1}^d |x_i - y_i|. 
    \end{equation*}
\end{definition}

\begin{definition}[Schatten $k$-norm]
    Let $M \in \mathbb{C}^{d \times d}$ be a Hermitian matrix with eigenvalues $\lambda_1, \cdots, \lambda_d$. The \emph{Schatten $k$-norm} is defined as
    \begin{equation*}
        \lVert M \rVert_k = \left(\sum_{i=1}^d |\lambda_i|^k\right)^{1/k}.
    \end{equation*}
    In particular, the Schatten-$\infty$ norm $\norm*{M}_\infty = \max \{|\lambda_1|, \cdots, |\lambda_d|\}$ is also known as the \emph{operator norm}.
\end{definition}

\begin{definition}[Trace distance]
    \label{def:trace-dist}
    The \emph{trace distance} between two density matrices $\rho$ and $\sigma$ is defined as
    \begin{equation*}
        \dtr(\rho,\sigma) = \frac{1}{2}\lVert \rho - \sigma \rVert_1 = \max_{\text{projectors}~\Pi} \left\{\tr\left(\Pi(\rho - \sigma)\right)\right\}.
    \end{equation*}
\end{definition}

\begin{definition}[Fidelity]
    The \emph{fidelity} of the density matrices $\rho$ and $\sigma$ is defined as
    \begin{equation*}
        \fidelity(\rho, \sigma) = \norm{\sqrt{\rho} \sqrt{\sigma}}_1 = \tr\sqrt{\sqrt{\rho} \sigma \sqrt{\rho}}.
    \end{equation*}
\end{definition}

Our version of the fidelity is sometimes referred to as the ``square root fidelity''. In~\Cref{sec:lmm-fails}, we will compute the fidelity and trace distance of sub-normalized density matrices. It is not hard to verify that the definitions above can be extended to any pair of matrices, as long as they are PSD.
Fidelity and trace distance are related by the following inequalities, which can be found in~\cite[Section 9.2]{NC10}.

\begin{lemma}[Fuchs-van de Graaf inequalities]
    \label{lem:trace-vs-fidelity}
    The trace distance and fidelity are related as follows:
    \begin{equation*}
        1 - \fidelity(\rho, \sigma) \leq \dtr(\rho, \sigma) \leq \sqrt{1 - \fidelity(\rho, \sigma)^2}.
    \end{equation*}
\end{lemma}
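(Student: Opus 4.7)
The plan is to reduce both inequalities to the pure-state case via Uhlmann's theorem, which guarantees that for any $\rho, \sigma$ there exist purifications $\ket{\psi}_{AB}, \ket{\phi}_{AB}$ on a joint system with $\langle\psi|\phi\rangle = \fidelity(\rho, \sigma)$ (taken real and nonnegative).

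I would first establish the pure-state case directly: for any two pure states with overlap $|\langle\psi|\phi\rangle| = \fidelity$, the difference $\ketbra{\psi} - \ketbra{\phi}$ is a traceless Hermitian operator supported on the two-dimensional subspace $\mathrm{span}(\ket{\psi}, \ket{\phi})$; writing $\ket{\phi} = \fidelity\ket{\psi} + \sqrt{1-\fidelity^2}\ket{\psi^{\perp}}$ in an orthonormal basis of this subspace yields a $2 \times 2$ matrix whose eigenvalues compute to $\pm\sqrt{1-\fidelity^2}$, so $\dtr(\ketbra{\psi}, \ketbra{\phi}) = \sqrt{1-\fidelity^2}$. For the upper bound on mixed states, I would choose Uhlmann purifications achieving $\langle \psi | \phi \rangle = \fidelity(\rho, \sigma)$; since partial trace is a CPTP map and trace distance is non-increasing under CPTP maps (a standard fact following from the variational definition in \Cref{def:trace-dist}), $\dtr(\rho, \sigma) \leq \dtr(\ketbra{\psi}, \ketbra{\phi}) = \sqrt{1 - \fidelity(\rho, \sigma)^2}$.

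The lower bound $1 - \fidelity \leq \dtr$ is the more delicate direction, since monotonicity under partial trace pushes the inequality the wrong way for the mixed-state case. The plan is to reduce it to the classical Fuchs--van de Graaf inequality for probability distributions, namely $\dtv{p}{q} \geq 1 - \sum_i \sqrt{p_i q_i}$; this follows from $|p_i - q_i| = (\sqrt{p_i} + \sqrt{q_i}) \cdot |\sqrt{p_i} - \sqrt{q_i}| \geq (\sqrt{p_i} - \sqrt{q_i})^2$ summed over $i$. Then I would invoke the Fuchs--Caves theorem to produce a POVM $\{E_i\}$ whose outcome distributions $p_i = \tr(E_i\rho), q_i = \tr(E_i\sigma)$ satisfy $\sum_i \sqrt{p_i q_i} = \fidelity(\rho, \sigma)$, proved via the polar decomposition $\sqrt{\rho}\sqrt{\sigma} = U\lvert \sqrt{\rho}\sqrt{\sigma}\rvert$ combined with Cauchy--Schwarz. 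Since $\dtr(\rho, \sigma) \geq \dtv{p}{q}$ for any POVM's outcome distributions, combining these gives $\dtr(\rho, \sigma) \geq 1 - \fidelity(\rho, \sigma)$.

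The main obstacle is the lower bound, which fundamentally requires an extra idea beyond Uhlmann plus monotonicity (since the latter would flip the direction of the inequality we want). Once the Fuchs--Caves measurement is in hand, the remaining steps are elementary; an alternative route that sidesteps Fuchs--Caves is to argue directly in the eigenbasis of the Jordan decomposition of $\rho - \sigma$, but the measurement-based reduction is cleaner.
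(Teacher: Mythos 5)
The paper does not prove this lemma; it states it as a standard fact and cites \cite[Section 9.2]{NC10}. Your proposal is a correct reconstruction of exactly the argument given in that reference: the upper bound via Uhlmann's theorem (choose purifications achieving the fidelity, compute $\dtr$ exactly for pure states, then contract by partial trace using monotonicity of trace distance under CPTP maps), and the lower bound via the classical Fuchs--van de Graaf bound $\dtv{p}{q}\geq 1-\sum_i\sqrt{p_iq_i}$ combined with the Fuchs--Caves achievability of the fidelity by a POVM. Your pure-state computation (eigenvalues $\pm\sqrt{1-\fidelity^2}$ of the traceless $2\times 2$ block) and your observation that monotonicity pushes the wrong way for the lower bound, so a measurement-based reduction is needed, are both accurate. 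One minor caveat worth being explicit about in a full write-up: in the polar decomposition $\sqrt{\rho}\sqrt{\sigma}=U|\sqrt{\rho}\sqrt{\sigma}|$ the factor $U$ is a priori only a partial isometry when $\sqrt{\rho}\sqrt{\sigma}$ is not invertible; one must extend it to a unitary (or restrict to supports) for the Cauchy--Schwarz step in the Fuchs--Caves argument to close cleanly. That detail aside, the proposal is complete and matches the standard proof the paper is implicitly invoking.
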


\subsection{Haar random vectors}

\begin{definition}[The Haar measure]
Let $\mathrm{U}(d)$ be the group of $d \times d$ complex unitary matrices. 
The \emph{Haar measure} on $\mathrm{U}(d)$ is the unique measure with the following property:
if $\bU$ is distributed according to the Haar measure
then for any unitary $V \in \mathrm{U}(d)$,
both $V \cdot \bU$ and $\bU \cdot V$ are distributed according to the Haar measure.
\end{definition}

\begin{definition}[Haar random vectors]
    A \emph{Haar random vector} in $\C^d$ is a vector distributed as $\bU \cdot \ket{1}$, where $\bU$ is a Haar random unitary.
    A \emph{Haar random basis} is a set of orthonormal vectors $\ket{\bu_1}, \ldots, \ket{\bu_d}$
    which are distributed as $\bU \cdot \ket{1}, \ldots, \bU \cdot \ket{d}$.
\end{definition}

\begin{definition}[A representation of the symmetric group]\label{def:rep_sym_group}
    Let $S_n$ be the symmetric group consisting of permutations on $\{1, \ldots, n\}$.
    Given a permutation $\pi \in S_n$, we write
    $P(\pi)$ for the unitary matrix acting on $(\C^d)^{\otimes n}$ acting as follows.
    First, for any $i_1, \ldots, i_n \in [d]$,
    $P(\pi)$ acts on the corresponding basis element by permuting the $n$ registers according to $\pi$:
\begin{equation*}
    P(\pi) \cdot \ket{i_1}\otimes \cdots \otimes\ket{i_n} = \ket{i_{\pi^{-1}(1)}}\otimes \cdots \otimes \ket{i_{\pi^{-1}(n)}}.
\end{equation*}
    We can then define $P(\pi)$ on the whole space $(\C^d)^{\otimes n}$ via linearity.
    As a result, for any $d\times d$ matrices $M_1, \ldots, M_k$, we have that
    \begin{equation*}
        P(\pi^{-1}) \cdot M_{1}\otimes M_2\otimes \cdots \otimes M_k \cdot P(\pi) = M_{\pi(1)} \otimes M_{\pi(2)} \otimes \cdots \otimes M_{\pi(k)}. 
    \end{equation*}
    These matrices form a \emph{representation} of $S_n$, meaning that $P(\pi) \cdot P(\sigma) = P(\pi \cdot \sigma)$ for any $\pi, \sigma \in S_n$.
    When it is clear from context, we will often write $\pi$ in place of $P(\pi)$.
    Finally, in the $n = 2$ case, we will often write $\swap = P((1,2))$.
\end{definition}

We will make use of the following expression appearing in~\cite[Proposition 6]{har13} which expresses the moments of a Haar random vector in terms of the above symmetric group representation.
\begin{proposition}[Moments of a Haar random vector]\label{prop:haar-moment}
Let $\ket{\bu}$ be a Haar random vector in $\C^d$.
Then
\begin{equation*}
    \E_{\bu} \ketbra{\bu}^{\otimes n} = \frac{1}{d(d+1)\cdots (d+n-1)} \cdot \sum_{\pi\in S_n} P(\pi).
\end{equation*}
\end{proposition}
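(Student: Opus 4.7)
The plan is to use a standard symmetry argument based on Schur's lemma and the irreducibility of the symmetric subspace under the diagonal action of the unitary group. Let $M \coloneqq \E_{\bu} \ketbra{\bu}^{\otimes n}$, an operator on $(\C^d)^{\otimes n}$.

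First I would establish two symmetry properties of $M$. (i) Since $\ketbra{\bu}^{\otimes n}$ is manifestly invariant under conjugation by any permutation $P(\sigma)$ (it is a symmetric tensor), $M$ commutes with every $P(\sigma)$ for $\sigma \in S_n$, so $M$ is supported on the symmetric subspace $\mathrm{Sym}^n(\C^d)$, i.e.\ $M = \Pi_{\mathrm{sym}} M \Pi_{\mathrm{sym}}$, where $\Pi_{\mathrm{sym}} = \frac{1}{n!}\sum_{\pi \in S_n} P(\pi)$. (ii) For any fixed unitary $V \in U(d)$, Haar invariance gives that $V\bu$ is distributed as $\bu$, so
\begin{equation*}
V^{\otimes n} M (V^\dagger)^{\otimes n} = \E_{\bu} (V\ketbra{\bu}V^\dagger)^{\otimes n} = M.
\end{equation*}
Thus $M$ commutes with $V^{\otimes n}$ for every $V \in U(d)$.

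Next I would invoke the fact (a consequence of Schur--Weyl duality) that the representation $V \mapsto V^{\otimes n}$ acts irreducibly on the symmetric subspace $\mathrm{Sym}^n(\C^d)$. Since $M$ is supported on $\mathrm{Sym}^n(\C^d)$ and commutes with $V^{\otimes n}$ for all $V$, Schur's lemma forces $M$ to be a scalar multiple of $\Pi_{\mathrm{sym}}$, i.e.\ $M = c \cdot \Pi_{\mathrm{sym}}$ for some constant $c$. To pin down $c$, I would take the trace: $\tr(M) = \E_{\bu} \tr(\ketbra{\bu}^{\otimes n}) = 1$, while $\tr(\Pi_{\mathrm{sym}}) = \dim \mathrm{Sym}^n(\C^d) = \binom{d+n-1}{n}$. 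Hence $c = 1 / \binom{d+n-1}{n} = n! / (d(d+1)\cdots(d+n-1))$, and substituting the definition of $\Pi_{\mathrm{sym}}$ gives
\begin{equation*}
M = \frac{n!}{d(d+1)\cdots(d+n-1)} \cdot \frac{1}{n!}\sum_{\pi \in S_n} P(\pi) = \frac{1}{d(d+1)\cdots(d+n-1)} \sum_{\pi \in S_n} P(\pi),
\end{equation*}
as desired.

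The only nontrivial ingredient here is the irreducibility of $V^{\otimes n}$ on $\mathrm{Sym}^n(\C^d)$, which I would cite from standard representation theory of $U(d)$ rather than reprove. If one wanted a self-contained derivation, an alternative is to directly verify that the right-hand side acts as the identity on $\mathrm{Sym}^n(\C^d)$ (up to the stated scalar) by checking it on the basis of symmetrized tensors $\sum_{\pi} P(\pi)\ket{i_1}\otimes\cdots\otimes \ket{i_n}$ and computing the corresponding Haar integral $\E_{\bu} |\langle i_1|\bu\rangle|^{2 m_1}\cdots|\langle i_d|\bu\rangle|^{2 m_d}$ using the Dirichlet distribution of $(|\langle i|\bu\rangle|^2)_{i\in[d]}$; this is more computational and I would avoid it in favor of the Schur-lemma route above.
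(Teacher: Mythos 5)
Your proof is correct, and it is essentially the same argument as the one in the source the paper cites for this fact ([har13, Proposition 6]); the paper itself does not prove this proposition but imports it as a black box. One small wrinkle: in your step (i), the inference from ``$M$ commutes with every $P(\sigma)$'' to ``$M$ is supported on $\mathrm{Sym}^n(\C^d)$'' is a non sequitur as written---the identity operator also commutes with every $P(\sigma)$ yet is not supported there. The correct observation, which you gesture at with the parenthetical ``(it is a symmetric tensor),'' is that $\ket{\bu}^{\otimes n}$ lies in $\mathrm{Sym}^n(\C^d)$, so $\Pi_{\mathrm{sym}} \ketbra{\bu}^{\otimes n} \Pi_{\mathrm{sym}} = \ketbra{\bu}^{\otimes n}$ pointwise in $\bu$, and hence $\Pi_{\mathrm{sym}} M \Pi_{\mathrm{sym}} = M$ by linearity of expectation. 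With that phrasing tightened, the rest---commutation with $V^{\otimes n}$ from Haar invariance, irreducibility of $\mathrm{Sym}^n(\C^d)$ under $U \mapsto U^{\otimes n}$, Schur's lemma, and fixing the scalar via $\tr(\Pi_{\mathrm{sym}}) = \binom{d+n-1}{n} = d(d+1)\cdots(d+n-1)/n!$---is exactly right and matches the standard derivation.
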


\subsection{The uniform POVM}

If $\ket{\bu} \in \C^d$ is a Haar random vector, then following from~\Cref{prop:haar-moment}, we have
\begin{equation}\label{eq:def-of-M}
    M \coloneqq \E_{\bu}\ketbra{\bu} = \frac{1}{d} \cdot I.
\end{equation}
Alternatively, to see why, 
note that because $\ket{\bu}$ is a Haar random vector, then $U \cdot \ket{\bu}$ is also a Haar random vector, for any unitary matrix $U$. This means that
\begin{equation*}
    M = \E_{\bu}[U \cdot \ketbra{\bu} \cdot U^{\dagger}] = U \cdot M \cdot U^{\dagger}.
\end{equation*}
The only way that $M$ can satisfy this for all unitaries $U$ is if it is a constant multiple of the identity. To compute the scalar, let us simply take the trace of $M$: 
\begin{equation*}
    \tr(M)
    = \tr\Big(\E_{\bu}\ketbra{\bu}\Big)
    = \E_{\bu}\Big[\tr(\ketbra{\bu})\Big]
    = 1.
\end{equation*}
Thus, $M = I/d$, proving \Cref{eq:def-of-M}.
This means that $\E_{\bu}[d\cdot \ketbra{\bu}] = I$,
which we can interpret as giving a decomposition of the identity for a POVM known as the \emph{uniform POVM}.

\begin{definition}[Uniform POVM]
    The \emph{uniform POVM} is the measurement that assigns a uniform probability to all pure state projectors $\ketbra{u}$. Formally, the uniform POVM is
    \begin{equation*}
        \left\{d\cdot \ketbra{u} \cdot \diff u\right\},
    \end{equation*}
    where $\diff u$ is the Haar measure over pure states $\ket{u}\in \C^d$.
\end{definition}

 The uniform POVM is equivalent to the following randomized measurement.
 \begin{enumerate}
     \item Sample a Haar random basis $\ket{\bu_1}, \ldots, \ket{\bu_d}$.
     \item Measure $\rho$ in this basis and let $\ket{\bu_i}$ be the outcome.
     \item Output $\ket{\bu_i}$.
 \end{enumerate}
 Thus, the uniform POVM can be interpreted as measuring $\rho$ in a uniformly random basis, which is perhaps the most natural measurement to perform if one does not have any prior information about $\rho$.

\subsubsection{Moments of the uniform POVM}
 We will need to compute the first and second moments of the outcome vector of the uniform POVM.
 These calculations are standard and we include them for completeness.
 To begin, we will need the following helper lemma.

 \begin{lemma}[Partial trace helper lemma]\label{lem:partial-trace-helper}
     Let $\rho$ be Hermitian.
     Then $\tr_2(\swap \cdot (I \otimes \rho)) = \rho$.
 \end{lemma}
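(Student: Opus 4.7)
The plan is to carry out a direct computation in the standard basis. First I would write the swap operator explicitly as
\begin{equation*}
    \swap = \sum_{i,j=1}^d \ketbra{i}{j}\otimes \ketbra{j}{i},
\end{equation*}
which is the standard expression one obtains by checking its action on basis vectors $\ket{a}\otimes\ket{b}$. Multiplying on the right by $I\otimes \rho$ then yields
\begin{equation*}
    \swap \cdot (I\otimes \rho) = \sum_{i,j=1}^d \ketbra{i}{j}\otimes \ketbra{j}{i}\rho.
\end{equation*}

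Next I would apply $\tr_2$ term by term, using linearity together with the basic identity $\tr_2(A\otimes B) = \tr(B)\cdot A$. This gives
\begin{equation*}
    \tr_2\bigl(\swap\cdot (I\otimes\rho)\bigr)
    = \sum_{i,j=1}^d \tr(\ketbra{j}{i}\rho)\cdot \ketbra{i}{j}
    = \sum_{i,j=1}^d \bra{i}\rho\ket{j}\cdot \ketbra{i}{j},
\end{equation*}
where I used the cyclic property of the trace to rewrite $\tr(\ketbra{j}{i}\rho) = \bra{i}\rho\ket{j}$. Recognizing the right-hand side as the standard basis expansion of $\rho$ completes the derivation.

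This argument is essentially just bookkeeping, and there is no real obstacle: the only ingredients are the basis expansion of $\swap$ and the formula $\tr_2(A\otimes B) = \tr(B)A$. I would also note in passing that Hermiticity of $\rho$ is not actually needed; the identity holds for an arbitrary $d\times d$ matrix, which can be useful in later applications where the argument of $\tr_2(\swap\,\cdot)$ is not manifestly a density operator.
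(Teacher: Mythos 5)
Your proof is correct, and it takes a genuinely different (and slightly more general) route than the paper's. The paper expands $\rho$ in its eigenbasis via $\rho = \sum_i \alpha_i \ketbra{v_i}$, writes $I$ in the same basis, and then chases $\swap$ through the resulting rank-one tensors; this only works because $\rho$ is Hermitian (hence has an eigenbasis). You instead work entirely in a fixed computational basis by writing $\swap = \sum_{i,j}\ketbratwo{i}{j}\otimes\ketbratwo{j}{i}$ and applying the universal identity $\tr_2(A\otimes B) = \tr(B)\,A$. Your route buys generality --- as you correctly observe, the identity $\tr_2(\swap\cdot(I\otimes M)) = M$ holds for any $d\times d$ matrix $M$, with no Hermiticity assumption --- whereas the paper's eigenbasis approach is perhaps marginally more in keeping with the density-matrix viewpoint used throughout the rest of their preliminaries. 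Both are short and elementary, and your observation that Hermiticity is superfluous is a mild improvement (even though the paper only ever applies the lemma to density matrices and closely related Hermitian objects).
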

 \begin{proof}
     Let
     \begin{equation*}
        \rho = \sum_{i=1}^d \alpha_i \cdot \ketbra{v_i}
     \end{equation*}
     be the eigendecomposition of $\rho$.
     We can expand the identity in this basis as well, i.e.\ $I = \sum_{i=1}^d \ketbra{v_i}$. Then
     \begin{align*}
         \tr_2(\swap\cdot(I \otimes \rho))
         &= \tr_2\Big(\swap \cdot \Big(\sum_{i=1}^d \ketbra{i} \otimes \sum_{j=1}^d \alpha_j \cdot \ketbra{j}\Big)\Big)\\
         &= \sum_{i, j = 1}^d \alpha_j \cdot \tr_2( \swap \cdot (\ketbra{i}\otimes \ketbra{j}))
         = \sum_{i, j=1}^d \alpha_j \cdot \tr_2(\ketbratwo{j}{i} \otimes \ketbratwo{i}{j}).
     \end{align*}
     Now the partial trace is simple enough that we can calculate it directly:
     \begin{equation*}
        \tr_2(\ketbratwo{j}{i} \otimes \ketbratwo{i}{j})
        = \ketbratwo{j}{i} \cdot \tr(\ketbratwo{i}{j})
        = \left\{\begin{array}{cl}
                \ketbra{i} & \text{if } i = j,\\
                0 & \text{otherwise}.
                \end{array}\right.
     \end{equation*}
     Thus,
     \begin{equation*}
         \tr_2(\swap\cdot(I \otimes \rho))
         = \sum_{i=1}^d \alpha_i \cdot \ketbra{i}
         = \rho. \qedhere
     \end{equation*}
 \end{proof}

 Next, we give a formula for the $k$-th moment of the uniform POVM.

 \begin{lemma}[$k$-th moment formula]\label{lem:moment-helper}
     Let $\rho \in \C^{d \times d}$ be a density matrix.
     Suppose we measure $\rho$ with the uniform POVM
     and receive outcome $\ket{\bu} \in \C^d$.
     Then
     \begin{equation*}
         \E_{\bu} \ketbra{\bu}^{\otimes k}
         = \frac{1}{(d+1) \cdots (d+k)} \cdot \sum_{\pi \in S_{k+1}} \tr_{k+1}(\pi \cdot (I^{\otimes k} \otimes \rho)).
     \end{equation*}
 \end{lemma}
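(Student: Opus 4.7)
The plan is to express the expectation as an integral over the Haar measure using the density of the uniform POVM, fold the Born-rule weighting $d\langle \bu|\rho|\bu\rangle$ into an additional tensor factor, and then invoke \Cref{prop:haar-moment} on $k+1$ copies.

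First, recall that the uniform POVM produces outcome $\ket{\bu}$ with density $d \cdot \bra{\bu}\rho\ket{\bu}$ with respect to the Haar measure $\mathrm{d}u$. Hence
\begin{equation*}
    \E_{\bu} \ketbra{\bu}^{\otimes k} = d\cdot \int \bra{u}\rho\ket{u}\cdot \ketbra{u}^{\otimes k}\, \mathrm{d}u.
\end{equation*}
The next key observation is that the scalar $\bra{u}\rho\ket{u}$ can itself be written as a partial trace: namely $\bra{u}\rho\ket{u} = \tr(\ketbra{u}\cdot\rho)$, and more importantly,
\begin{equation*}
    \bra{u}\rho\ket{u}\cdot \ketbra{u}^{\otimes k} = \tr_{k+1}\!\bigl(\ketbra{u}^{\otimes (k+1)}\cdot (I^{\otimes k}\otimes \rho)\bigr),
\end{equation*}
since $\ketbra{u}^{\otimes(k+1)}(I^{\otimes k}\otimes \rho) = \ketbra{u}^{\otimes k}\otimes(\ketbra{u}\cdot \rho)$ and tracing out the last register multiplies the first $k$ factors by $\tr(\ketbra{u}\cdot \rho)$.

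Substituting this identity and pulling the partial trace and the matrix $I^{\otimes k}\otimes \rho$ outside the integral gives
\begin{equation*}
    \E_{\bu} \ketbra{\bu}^{\otimes k} = d\cdot \tr_{k+1}\!\left(\left(\int \ketbra{u}^{\otimes (k+1)}\, \mathrm{d}u\right)\cdot (I^{\otimes k}\otimes \rho)\right).
\end{equation*}
Now apply \Cref{prop:haar-moment} with $n = k+1$ to evaluate the integral as
\begin{equation*}
    \int \ketbra{u}^{\otimes(k+1)}\,\mathrm{d}u = \frac{1}{d(d+1)\cdots(d+k)}\sum_{\pi\in S_{k+1}} P(\pi).
\end{equation*}
Plugging in, the prefactor $d$ cancels one factor of $d$ in the denominator, yielding exactly the claimed formula after using linearity of the partial trace.

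There is no real obstacle here: the only thing to be careful about is the bookkeeping between the Born-rule weight and the tensor structure, i.e.\ that the scalar $\bra{u}\rho\ket{u}$ really does correspond to the $(k+1)$-th tensor factor after tracing it out against $\rho$. Once that identification is made, the computation is a direct application of the Haar moment formula.
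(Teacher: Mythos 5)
Your proof is correct and follows essentially the same approach as the paper's: you rewrite the Born-rule density as a $\tr_{k+1}$ of a $(k+1)$-fold tensor product and then invoke the Haar moment formula (\Cref{prop:haar-moment}) on $k+1$ copies. The paper performs the same sequence of manipulations.
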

 \begin{proof}
     Measuring $\rho$ with the uniform POVM produces
      $\ket{u} \in \C^d$ with measure $d \cdot \tr(\ketbra{u} \cdot \rho) \cdot \diff u$.
     Thus,
     \begin{align*}
         \E_{\bu} \ketbra{\bu}^{\otimes k}
         &= \int_u \ketbra{u}^{\otimes k} \cdot (d \cdot \tr(\ketbra{u} \cdot \rho) \cdot \diff u)\\
         &= d \cdot \int_u \tr_{k+1}(\ketbra{u}^{\otimes k} \otimes (\ketbra{u} \cdot \rho)) \cdot \diff u\\
         &= d \cdot \int_u \mathrm{tr}_{k+1}(\ketbra{u}^{\otimes k+1} \cdot (I^{\otimes k} \otimes \rho)) \cdot \diff u\\
         &= d \cdot \tr_{k+1}\Big(\Big(\int_u \ketbra{u}^{\otimes k+1} \cdot \diff u\Big) \cdot (I^{\otimes k} \otimes \rho)\Big)\\
         &= d \cdot \tr_{k+1}\Big(\Big(\frac{1}{d(d+1) \cdots (d+k)} \cdot\sum_{\pi \in S_{k+1}} \pi\Big) \cdot (I^{\otimes k} \otimes \rho)\Big)\tag{by \Cref{prop:haar-moment}}\\
         &= \frac{1}{(d+1) \cdots (d+k)} \cdot \sum_{\pi \in S_{k+1}} \tr_{k+1}(\pi \cdot (I^{\otimes k} \otimes \rho)).
    \end{align*}
    This completes the proof.
 \end{proof}

 Now we specialize \Cref{lem:moment-helper}
  to derive explicit expressions for the first and second moments.

 \begin{proposition}[First moment of the uniform POVM]\label{prop:uniform-POVM-first-moment}
     Let $\rho \in \C^{d \times d}$ be a density matrix.
     Suppose we measure $\rho$ with the uniform POVM
     and receive outcome $\ket{\bu} \in \C^d$.
     Then
     \begin{equation*}
         \E_{\bu} \ketbra{\bu} = \Big(\frac{1}{d+1}\Big) \cdot \rho + \Big(\frac{d}{d+1}\Big) \cdot (I/d).
     \end{equation*}
 \end{proposition}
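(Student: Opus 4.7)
The plan is to apply \Cref{lem:moment-helper} directly with $k = 1$, which reduces the computation to summing over the two elements of $S_2$. Specifically, I would write
\begin{equation*}
    \E_{\bu} \ketbra{\bu} = \frac{1}{d+1} \sum_{\pi \in S_2} \tr_2(\pi \cdot (I \otimes \rho)) = \frac{1}{d+1}\bigl(\tr_2(I \otimes \rho) + \tr_2(\swap \cdot (I \otimes \rho))\bigr),
\end{equation*}
since $S_2 = \{e, (1,2)\}$ and $P(e) = I^{\otimes 2}$, $P((1,2)) = \swap$.

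Next, I would evaluate each of the two terms. For the identity permutation, $\tr_2(I \otimes \rho) = I \cdot \tr(\rho) = I$, using only that $\rho$ has unit trace. For the swap term, the calculation is exactly the content of \Cref{lem:partial-trace-helper}, giving $\tr_2(\swap \cdot (I \otimes \rho)) = \rho$.

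Substituting both back yields
\begin{equation*}
    \E_{\bu} \ketbra{\bu} = \frac{1}{d+1}(I + \rho) = \frac{1}{d+1}\rho + \frac{d}{d+1} \cdot \frac{I}{d},
\end{equation*}
which is the stated formula, expressed as a convex combination of $\rho$ and the maximally mixed state $I/d$.

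There is no real obstacle here: once \Cref{lem:moment-helper} and \Cref{lem:partial-trace-helper} are in hand, the proof is a two-line specialization to $k=1$. The only thing worth double-checking is the coefficient in \Cref{lem:moment-helper}, namely that with $k=1$ the prefactor is $1/(d+1)$ (not $1/(d(d+1))$), so that after summing the two terms one recovers a properly normalized density matrix; this is consistent with $\tr(\E_{\bu}\ketbra{\bu}) = 1$.
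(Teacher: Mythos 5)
Your proof is correct and follows essentially the same route as the paper: specialize \Cref{lem:moment-helper} to $k=1$, evaluate the identity term as $I$ and the swap term via \Cref{lem:partial-trace-helper} as $\rho$, and rewrite $I = d\cdot(I/d)$.
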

 \begin{proof}
     By \Cref{lem:moment-helper},
     \begin{align*}
         \E_{\bu} \ketbra{\bu}
         &= \Big(\frac{1}{d+1}\Big) \cdot \tr_2(I \otimes \rho)
            + \Big(\frac{1}{d+1}\Big)\cdot \tr_2(\swap \cdot (I \otimes \rho))\\
        &= \Big(\frac{1}{d+1}\Big) \cdot I + \Big(\frac{1}{d+1}\Big) \cdot \rho,
     \end{align*}
     where the second step uses \Cref{lem:partial-trace-helper}.
     The proposition now follows by rewriting $I$ as $d \cdot (I/d)$.
 \end{proof}

  \begin{proposition}[Second moment of the uniform POVM]\label{prop:uniform-POVM-second-moment}
     Let $\rho \in \C^{d \times d}$ be a density matrix.
     Suppose we measure $\rho$ with the uniform POVM
     and receive outcome $\ket{\bu} \in \C^d$.
     Then
     \begin{equation*}
         \E\ketbra{\bu}^{\otimes 2} = \frac{1}{(d+1)(d+2)}
         \cdot (I + \swap) \cdot \Big(I \otimes I + \rho \otimes I + I \otimes \rho\Big).
     \end{equation*}
 \end{proposition}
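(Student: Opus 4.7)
The plan is to specialize the general $k$-th moment formula of \Cref{lem:moment-helper} to the case $k=2$, which reduces the claim to computing
\[
\E \ketbra{\bu}^{\otimes 2} = \frac{1}{(d+1)(d+2)} \sum_{\pi \in S_3} \tr_3\bigl(\pi \cdot (I \otimes I \otimes \rho)\bigr).
\]
Hence it suffices to evaluate each of the six summands and then factor the result. I would organize the six permutations according to the role played by the third register, since this governs how the $\rho$ sitting in register $3$ gets propagated by the partial trace.

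\textbf{Permutations fixing $3$.} For $\pi = e$ and $\pi = (12)$, the representation factors as $\sigma \otimes I$ on registers $\{1,2\}$ and $\{3\}$ respectively, so $\pi \cdot (I \otimes I \otimes \rho) = \sigma \otimes \rho$ and the partial trace contributes $\sigma \cdot \tr(\rho) = \sigma$. These give the two terms $I \otimes I$ and $\swap$.

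\textbf{Transpositions involving $3$.} For $\pi = (23)$, I write $P((23)) = I \otimes \swap$, so that the expression becomes $I \otimes (\swap \cdot (I \otimes \rho))$, and \Cref{lem:partial-trace-helper} immediately collapses the second factor to $\rho$, giving $I \otimes \rho$. For $\pi = (13)$, the register being traced out now interacts with register $1$; I would compute the matrix elements directly using the rule $P(\pi)\ket{i_1 i_2 i_3} = \ket{i_{\pi^{-1}(1)} i_{\pi^{-1}(2)} i_{\pi^{-1}(3)}}$ from \Cref{def:rep_sym_group} to verify that the result is $\rho \otimes I$ (alternatively, by a symmetry argument swapping the roles of registers $1$ and $2$ relative to the $(23)$ calculation).

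\textbf{Three-cycles.} For $\pi = (123)$ and $\pi = (132)$, I would again compute matrix entries directly from the definition of $P(\pi)$. A short bookkeeping calculation of $\bra{i_1 i_2 k} P(\pi) (I \otimes I \otimes \rho) \ket{j_1 j_2 k}$ summed over $k$ yields $\swap \cdot (I \otimes \rho)$ for $(123)$ and $\swap \cdot (\rho \otimes I)$ for $(132)$.

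\textbf{Assembling the sum.} Adding the six contributions produces
\[
I\otimes I + \swap + I\otimes \rho + \rho \otimes I + \swap(I\otimes \rho) + \swap(\rho\otimes I),
\]
which factors as $(I + \swap) \cdot (I\otimes I + \rho \otimes I + I\otimes \rho)$. Dividing by $(d+1)(d+2)$ completes the proof. The main obstacle is purely bookkeeping: getting the 3-cycle calculations right, since unlike the transposition $(23)$ they cannot be cleanly reduced to \Cref{lem:partial-trace-helper}, and one must be careful about the convention $P(\pi)\ket{\mathbf{i}} = \ket{\pi^{-1}(\mathbf{i})}$ when identifying the resulting matrices with $\swap\cdot(I\otimes \rho)$ and $\swap\cdot(\rho\otimes I)$.
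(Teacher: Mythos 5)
Your proposal is correct — the six trace computations all check out, the $(123)$ and $(132)$ cases do indeed yield $\swap\cdot(I\otimes\rho)$ and $\swap\cdot(\rho\otimes I)$, and your final factorization assembles to the stated identity. The only real difference from the paper is organizational: where you expand $\sum_{\pi\in S_3}\pi$ into six terms and treat each individually, the paper first factors the group sum as $\sum_{\pi\in S_3}\pi = (e + (1,2)) \cdot (e + (1,3) + (2,3))$, then observes that $(e + (1,2))$ acts only on registers $1$ and $2$ and therefore commutes with $\tr_3(\cdot)$, so it can be pulled out before taking the partial trace. That leaves only the three terms $\tr_3(\pi\cdot(I\otimes I\otimes\rho))$ for $\pi \in \{e,(1,3),(2,3)\}$, each of which is an immediate application of \Cref{lem:partial-trace-helper}. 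This precisely sidesteps the step you yourself flag as the main obstacle — the direct matrix-element calculation of the $3$-cycles — because the $3$-cycles never appear except implicitly as products $(1,2)(1,3)$ and $(1,2)(2,3)$ inside the factored form. Your version is correct but requires more hand computation; the paper's version buys simplicity by doing a small amount of group theory up front.
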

 \begin{proof}
     By \Cref{lem:moment-helper},
     \begin{equation}\label{eq:plug-in-moment-helper}
         \E_{\bu} \ketbra{\bu}^{\otimes 2}
         = \frac{1}{(d+1)(d+2)} \cdot  \tr_3\Big(\sum_{\pi \in S_3}\pi \cdot (I \otimes I \otimes \rho)\Big).
     \end{equation}
     The permutations in $S_3$ can be written as
     $e, (1,3), (2,3)$
     and $(1,2) \cdot e, (1, 2) \cdot (1,3), (1, 2) \cdot (2,3)$.
     Hence,
     \begin{equation*}
         \sum_{\pi \in S_3}\pi
         = (e + (1, 2)) \cdot (e + (1, 3) + (2, 3)).
     \end{equation*}
     Thus,
     \begin{align*}
         \eqref{eq:plug-in-moment-helper}
         &= \frac{1}{(d+1)(d+2)}
         \cdot \tr_3\Big(\Big((e + (1, 2)) \cdot (e + (1, 3) + (2, 3))\Big) \cdot (I \otimes I \otimes \rho)\Big)\\
         &= \frac{1}{(d+1)(d+2)}
         \cdot (e + (1,2)) \cdot \tr_3((e + (1, 3) + (2, 3)) \cdot (I \otimes I \otimes \rho))\\
         &= \frac{1}{(d+1)(d+2)}
         \cdot (e + (1,2)) \cdot \Big(\tr_3(I \otimes I \otimes \rho) + \tr_3((1,3) \cdot(I \otimes I \otimes \rho))+ \tr_3((2,3) \cdot(I \otimes I \otimes \rho))\Big)\\
         &= \frac{1}{(d+1)(d+2)}
         \cdot (e + (1,2)) \cdot \Big(I \otimes I + \rho \otimes I + I \otimes \rho\Big). \tag{by \Cref{lem:partial-trace-helper}}
     \end{align*}
     In the second equality we used the fact that $(e + (1,2))$ only acts on the first two registers and hence can be pulled out of the $\tr_3(\cdot)$.
     This completes the proof.
 \end{proof}

\subsection{The uniform POVM tomography algorithm}

Suppose we have one copy of a density matrix $\rho \in \C^{d \times d}$ and we want to learn $\rho$.
Since we do not have any prior information about $\rho$,
a natural thing to do is to measure $\rho$ with the uniform POVM.
If $\ket{\bu} \in \C^d$ is the measurement outcome,
we might hope to use $\ketbra{\bu}$ as our estimator for $\rho$.
However, \Cref{prop:uniform-POVM-first-moment} shows that this is not a good idea, even in expectation.
In particular, the expectation
\begin{equation*}
         \E_{\bu} \ketbra{\bu} = \Big(\frac{1}{d+1}\Big) \cdot \rho + \Big(\frac{d}{d+1}\Big) \cdot (I/d)
\end{equation*}
is mostly noise (the second term), but it does have a small amount of signal (the first term).
Correcting for this noise
suggests that a better estimator is $(d+1) \cdot \ketbra{\bu} - I$, and indeed it is an \emph{unbiased estimator} for $\rho$:
\begin{equation*}
    \E[(d+1) \cdot \ketbra{\bu} - I] = \rho.
\end{equation*}
This motivates the following natural uniform POVM tomography algorithm.

\begin{definition}[Uniform POVM tomography algorithm] \label{def:uniform-povm-alg}
    Given $n$ copies of a state $\rho$, the \emph{uniform POVM tomography algorithm} works as follows.
    \begin{enumerate}
        \item Measure each copy of $\rho$ with the uniform POVM $\{d \cdot \ketbra{u} \cdot \diff u\}$.
        \item Set $\brho_i = (d+1)\cdot\ketbra{\bu_i} - I$, where $\ket{\bu_i}$ is the $i$-th measurement outcome.
        \item Output $\widehat{\brho} = \frac{1}{n}\cdot (\brho_1 + \cdots + \brho_n)$.
    \end{enumerate}
\end{definition}

From the above discussion, each $\brho_i$ is an unbiased estimator for $\rho$, i.e.\ $\E \brho_i = \rho$.
Extending this to $\widehat{\brho}$ using linearity expectation, we have the following proposition.

\begin{proposition}[The uniform POVM tomography algorithm gives an unbiased estimator]\label{prop:uniform-povm-is-an-unbiased-estimator}
    Let $\widehat{\brho}$ be the estimator produced by performing the uniform POVM tomography algorithm on $\rho$. Then
    $\E[\widehat{\brho}] = \rho$.
\end{proposition}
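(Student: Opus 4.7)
The plan is to prove this directly by applying the first moment formula for the uniform POVM (\Cref{prop:uniform-POVM-first-moment}) to each individual estimator $\brho_i$, and then invoking linearity of expectation to extend the result to the average $\widehat{\brho}$. There is essentially one calculation to do, namely verifying that the shift-and-rescale correction $\brho_i = (d+1)\cdot\ketbra{\bu_i} - I$ exactly cancels the bias introduced by the noisy first moment.

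Concretely, I would first fix an index $i$ and compute $\E[\brho_i]$. By definition $\brho_i = (d+1) \cdot \ketbra{\bu_i} - I$, and $\ket{\bu_i}$ is the outcome of measuring an independent copy of $\rho$ with the uniform POVM. Applying \Cref{prop:uniform-POVM-first-moment},
\begin{equation*}
    \E[\brho_i] = (d+1) \cdot \E_{\bu_i}\ketbra{\bu_i} - I = (d+1)\cdot\Bigl(\frac{1}{d+1}\cdot\rho + \frac{d}{d+1}\cdot(I/d)\Bigr) - I = \rho + I - I = \rho.
\end{equation*}

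Having established that each individual estimator $\brho_i$ is unbiased, I would then conclude by linearity of expectation:
\begin{equation*}
    \E[\widehat{\brho}] = \frac{1}{n}\sum_{i=1}^n \E[\brho_i] = \frac{1}{n}\cdot n \cdot \rho = \rho.
\end{equation*}
There is no real obstacle here; the entire content of the proposition is bundled into the first-moment calculation already carried out in \Cref{prop:uniform-POVM-first-moment}, and the correction factors $(d+1)$ and $-I$ in the definition of the estimator were chosen precisely to invert that affine relationship. The only thing one needs to be mindful of is that the $n$ measurement outcomes $\ket{\bu_1}, \ldots, \ket{\bu_n}$ arise from independent copies of $\rho$, which ensures that the linearity-of-expectation step is valid copy-by-copy (though independence is not strictly needed for unbiasedness, only for later variance bounds).
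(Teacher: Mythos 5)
Your proof is correct and matches the paper's approach exactly: the paper also computes $\E[\brho_i] = \rho$ directly from \Cref{prop:uniform-POVM-first-moment} and then applies linearity of expectation to the average. Nothing further to add.
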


The uniform POVM tomography algorithm was introduced independently by Krishnamurthy and Wright~\cite[Section 5.1]{Wri16} and Guta et al.~\cite{GKKT20}.
Both works showed that the $\widehat{\brho}$ produced by the uniform POVM tomography algorithm is $\epsilon$-close to $\rho$ with high probability once $n = O(d^3/\epsilon^2)$.
This is optimal among all algorithms which use unentangled measurements, as \cite{CHL+23} showed that $n = \Omega(d^3/\epsilon^2)$ copies are required to perform trace distance tomography with unentangled measurements.
Krishnamurthy and Wright achieve this by first showing that $\widehat{\brho}$ is close to $\rho$ in $\ell_2$ distance; 
Guta et al.\ instead show that $\widehat{\brho}$ is close to $\rho$ in the stronger $\ell_\infty$ distance,
and they can use this to derive various additional interesting consequences,
such as an $n = O(d r^2/\epsilon^2)$ tomography algorithm in the case when $\rho$ is promised to be rank $r$.
We will need the following operator norm bound from their work.

\begin{theorem}[{\cite[Theorem 5]{GKKT20}}]
\label{thm:opnormconcentration}
    There exists a universal constant $C_1>0$ so that for all $n$, the output of the uniform POVM tomography algorithm satisfies
    \begin{equation*}
        \lVert \widehat{\brho} - \rho \rVert_\infty \leq C_1 \cdot \sqrt{d/n} \qquad \text{with probability }0.99. 
    \end{equation*}
\end{theorem}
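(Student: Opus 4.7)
The plan is to apply matrix Bernstein to the sum $\widehat{\brho} - \rho = \frac{1}{n}\sum_{i=1}^n (\brho_i - \rho)$, where $\brho_i = (d+1)\ketbra{\bu_i} - I$ is the single-copy estimator. By \Cref{prop:uniform-povm-is-an-unbiased-estimator}, each summand is mean-zero and Hermitian, and the $\brho_i$'s are i.i.d., so matrix Bernstein is the natural tool. The proof then reduces to computing the two parameters that control the bound: a uniform operator-norm bound $R$ on each summand, and a bound $v$ on the matrix variance $\|\E(\brho_i-\rho)^2\|_\infty$.

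For the uniform norm, since $(d+1)\ketbra{\bu_i}$ has eigenvalues $d+1$ (once) and $0$ ($d-1$ times), the eigenvalues of $\brho_i$ are $d$ and $-1$, so $\|\brho_i\|_\infty = d$ and $\|\brho_i - \rho\|_\infty \leq d + 1$. For the variance, I would expand
\begin{equation*}
\brho_i^2 = (d+1)^2 \ketbra{\bu_i} - 2(d+1)\ketbra{\bu_i} + I = (d^2 - 1)\ketbra{\bu_i} + I,
\end{equation*}
and then use \Cref{prop:uniform-POVM-first-moment}, which gives $\E \ketbra{\bu_i} = \tfrac{1}{d+1}(\rho + I)$, to obtain
\begin{equation*}
\E \brho_i^2 = (d-1)(\rho + I) + I = (d-1)\rho + dI.
\end{equation*}
Therefore $\|\E(\brho_i-\rho)^2\|_\infty = \|(d-1)\rho + dI - \rho^2\|_\infty \leq 2d$, and summing over $i$ gives $v = O(nd)$.

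Feeding these into matrix Bernstein yields, for any $s > 0$,
\begin{equation*}
\Pr\!\left[\Big\|\sum_{i=1}^n (\brho_i - \rho)\Big\|_\infty \geq s\right] \leq 2d \exp\!\left(-\frac{s^2/2}{O(nd) + O(ds)}\right).
\end{equation*}
Dividing through by $n$ and setting $s = C_1 n \sqrt{d/n} = C_1 \sqrt{nd}$, the exponent is $-\Omega(C_1^2)$ (provided $n \gtrsim d$, which is the relevant regime since $\sqrt{d/n} < 1$ is a nontrivial bound), so taking $C_1$ a sufficiently large absolute constant makes the tail probability at most $0.01$.

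The main obstacle is that a literal application of matrix Bernstein at constant failure probability yields an extra $\sqrt{\log d}$ factor, which is absent from the stated bound; removing it (as Guta et al.\ do) likely requires either a sharper moment-method argument exploiting the specific structure of $\brho_i$ (for instance, via the Schatten $2k$-norm combined with the fact that $\ketbra{\bu_i}$ is rank one), or a tailored calculation that controls higher moments of $\sum_i(\brho_i-\rho)$ directly. The variance and uniform-norm estimates above are routine; it is this last step that requires extra care.
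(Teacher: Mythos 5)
The paper does not prove this statement; it cites it verbatim as Theorem 5 of Guta, Kahn, Kueng, and Tropp~\cite{GKKT20}, so there is no internal proof to compare against. Your matrix Bernstein route is the natural first attempt, and the ingredient computations are correct: the eigenvalues of $\brho_i = (d+1)\ketbra{\bu_i} - I$ are $d$ and $-1$, giving $\|\brho_i - \rho\|_\infty \leq d+1$; and $\E\brho_i^2 = (d-1)\rho + dI$ follows from \Cref{prop:uniform-POVM-first-moment} exactly as you write, so the matrix-variance proxy for the sum is $O(nd)$. You have also correctly diagnosed the real gap: matrix Bernstein carries a $\log d$ penalty from its dimensional prefactor, and for constant failure probability this pushes the deviation from $O(\sqrt{d/n})$ up to $O(\sqrt{d\log d/n})$; this is not removable within the Bernstein framework, because the variance proxy is genuinely $\Theta(d)$ (driven by the $dI$ summand, independently of $\rho$).

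Your guessed fix is also the right one. The argument in~\cite{GKKT20} avoids the log by bounding Schatten $2p$-moments $\E\tr\bigl[(\widehat{\brho}-\rho)^{2p}\bigr]$ directly for $p = \Theta(\log d)$ and then applying Markov; at that choice of $p$, the dimensional factor $d^{1/(2p)}$ is absorbed into a universal constant, which is precisely what a Bernstein-style union over eigenvectors cannot do. This does require a dedicated moment calculation that exploits the rank-one structure of $\ketbra{\bu_i}$ and the Haar averaging, so it is not a routine patch. One small observation: the $\sqrt{\log d}$-weaker bound your attempt does establish would not actually derail the downstream use in \Cref{thm:bucket-algo}---it would only add one extra $\log d$ to the final sample complexity, still comfortably below the $\Omega(d^3)$ tomography lower bound---but it does not recover the theorem as stated.
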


\noindent We note that a similar statement appears as Theorem 5.4 in \cite{CHL+23}, except with a slightly weaker bound of $C_1 \cdot \max\{d/n, \sqrt{d/n}\}$ on the right-hand side.

Finally, we will need the following expression for the second moment of the $n = 1$ uniform POVM tomography algorithm.

\begin{proposition}[Second moment of the uniform POVM tomography algorithm]\label{prop:uniform-povm-tomograph-second-moment}
    Let $\rho \in \C^{d \times d}$ be a density matrix.
     Suppose we measure $\rho$ with the uniform POVM
     and receive outcome $\ket{\bu} \in \C^d$.
     Let $\widehat{\brho} = (d+1)\cdot \ketbra{\bu} - I$.
     Then
     \begin{equation*}
         \E[\widehat{\brho} \otimes \widehat{\brho}]= \frac{1}{d+2}
         \cdot ((d+1) \cdot \swap - I) \cdot \Big(I \otimes I + \rho \otimes I + I \otimes \rho\Big).
     \end{equation*}
\end{proposition}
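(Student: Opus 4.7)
The plan is to expand $\widehat{\brho}\otimes \widehat{\brho}$, take expectations term-by-term using the first- and second-moment formulas for $\ketbra{\bu}$ already established, and then simplify algebraically.

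First I would expand
\begin{equation*}
    \widehat{\brho}\otimes\widehat{\brho} = (d+1)^2 \ketbra{\bu}^{\otimes 2} - (d+1)\bigl(\ketbra{\bu}\otimes I + I\otimes\ketbra{\bu}\bigr) + I\otimes I.
\end{equation*}
Taking expectations and invoking \Cref{prop:uniform-POVM-first-moment} gives $(d+1)\E[\ketbra{\bu}] = \rho + I$, so the two cross terms contribute $-(\rho\otimes I + I\otimes\rho + 2\, I\otimes I)$. Combined with the final $+I\otimes I$, these lower-order pieces total $-(I\otimes I + \rho\otimes I + I\otimes \rho)$, which is exactly $-A$ where $A \coloneqq I\otimes I + \rho\otimes I + I\otimes \rho$.

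Next I would apply \Cref{prop:uniform-POVM-second-moment} to the leading term, obtaining
\begin{equation*}
    (d+1)^2\,\E[\ketbra{\bu}^{\otimes 2}] = \frac{d+1}{d+2}(I+\swap)\cdot A.
\end{equation*}
So the whole expectation equals $\left(\tfrac{d+1}{d+2}(I+\swap) - I\right) A$. The key observation is that the common factor $A$ can be pulled out on the right, since both terms are left-multiplications. It then remains to simplify the scalar/operator prefactor:
\begin{equation*}
    \frac{d+1}{d+2}(I+\swap) - I = \frac{(d+1)(I+\swap) - (d+2)I}{d+2} = \frac{(d+1)\swap - I}{d+2},
\end{equation*}
which yields exactly the claimed form $\frac{1}{d+2}\bigl((d+1)\swap - I\bigr)\cdot A$.

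There is no real obstacle here — everything reduces to straightforward linearity of expectation together with the two previously established moment formulas. The only thing worth being careful about is the arithmetic in the prefactor simplification and making sure the factor $A$ really does appear on the right of both terms (it does, because in \Cref{prop:uniform-POVM-second-moment} the symmetric group sum is written with $(I+\swap)$ on the left and $A$ on the right, and the subtracted $I\otimes I + \rho\otimes I + I\otimes \rho$ is literally $A$).
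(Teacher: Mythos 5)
Your proposal is correct and mirrors the paper's own proof: both expand $\widehat{\brho}\otimes\widehat{\brho}$, apply \Cref{prop:uniform-POVM-first-moment} to the cross terms and \Cref{prop:uniform-POVM-second-moment} to the leading term, and then simplify the common right factor $I\otimes I + \rho\otimes I + I\otimes\rho$. The only cosmetic difference is that you name this common factor $A$ up front, which slightly streamlines the final algebra.
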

\begin{proof}
    Expanding $\widehat{\brho}$ according to its definition,
    \begin{align*}
        \E[\widehat{\brho} \otimes \widehat{\brho}]
        & = \E[((d+1) \cdot \ketbra{\bu} - I) \otimes ((d+1) \cdot\ketbra{\bu} - I)]\\
        & = (d+1)^2 \cdot \E\ketbra{\bu}^{\otimes 2} - (d+1) \cdot I \otimes \E \ketbra{\bu}
        - (d+1) \cdot \E \ketbra{\bu} \otimes I
        + I \otimes I.
    \end{align*}
    By \Cref{prop:uniform-POVM-second-moment}, the first term is equal to
    \begin{equation*}
        (d+1)^2 \cdot \E\ketbra{\bu}^{\otimes 2}
        = \frac{d+1}{d+2}
         \cdot (I + \swap) \cdot \Big(I \otimes I + \rho \otimes I + I \otimes \rho\Big).
    \end{equation*}
    By \Cref{prop:uniform-POVM-first-moment}, the second and third terms are equal to
    \begin{align*}
        (d+1) \cdot I \otimes \E \ketbra{\bu}
        + (d+1) \cdot \E \ketbra{\bu} \otimes I
        &= I \otimes (I + \rho) + (I + \rho) \otimes I\\
        &= 2 \cdot I \otimes I + \rho \otimes I + I \otimes \rho.
    \end{align*}
    Putting everything together,
    \begin{align*}
        \E[\widehat{\brho} \otimes \widehat{\brho}]
        &= \frac{d+1}{d+2}
         \cdot (I + \swap) \cdot \Big(I \otimes I + \rho \otimes I + I \otimes \rho\Big)
         - (I \otimes I + \rho \otimes I + I \otimes \rho)\\
         &= \frac{d+1}{d+2}
         \cdot \swap \cdot \Big(I \otimes I + \rho \otimes I + I \otimes \rho\Big)
         - \frac{1}{d+2} \cdot (I \otimes I + \rho \otimes I + I \otimes \rho)\\
         &= \frac{1}{d+2}
         \cdot ((d+1) \cdot \swap - I) \cdot \Big(I \otimes I + \rho \otimes I + I \otimes \rho\Big).
    \end{align*}
    This completes the proof.
\end{proof}

\section{Moment estimation}\label{sec:moment_estimation}

Given a $d$-dimensional quantum state $\sigma$, we define a natural estimator $\bZ_k$ for its $k$-th moment $\tr(\sigma^k)$ based on the uniform POVM.

\begin{definition}[Moment estimator]\label{def:moment_estimator}
    Suppose we have $n$ copies of $\sigma$. Let $k \leq n$ be a positive integer. For each $1 \leq i \leq n$, perform the uniform POVM on the $i$-th copy of $\sigma$. Let $\ket{\bu_i}$ be the outcome, and set $\widehat{\bsigma}_i = (d+1)\cdot \ketbra{\bu_i}- I$. The $k$-moment estimator is defined as
    \begin{equation*}
        \bZ_k \coloneqq \frac{1}{n(n-1)\cdots (n-k+1)}\cdot\sum_{\text{distinct }i_1, i_2, \ldots, i_k \in [n]} \tr\left(\widehat{\bsigma}_{i_1} \widehat{\bsigma}_{i_2} \cdots \widehat{\bsigma}_{i_k}\right).
    \end{equation*}
\end{definition}
Since each $\widehat{\bsigma}_i$ is an independent, unbiased estimator for $\sigma$, $\bZ_k$ is an unbiased estimator for $\tr(\sigma^k)$. 
Indeed, it is the natural unbiased estimator for $\tr(\sigma^k)$ suggested by U-statistics.
As mentioned in the introduction, related estimators have appeared in the literature before; for example, it can be viewed as a special case of an estimator for nonlinear functions of $\sigma$ proposed in~\cite{HKP20}.
In addition, a related estimator for $\tr(\rho\sigma)$, where $\rho$ and $\sigma$ are two distinct quantum states, was proposed in~\cite{ALL22};
we will compare the performance of their estimator when $\sigma = \rho$ with our $k = 2$ estimator below.

Our estimator can be viewed as a natural quantum analogue of the classical collision-based moment estimator from \Cref{eq:collision-estimator} above.
One difference between these estimators, however,
is that in the classical estimator it suffices to sum over only those indices $i_1 < \cdots <i_k$ which are arranged in increasing order,
whereas in our quantum estimator we sum over all distinct $i_1, \ldots, i_k$, which need not be arranged in increasing order.
This is because in the classical setting, the indicator function $\mathbbm{1}[\bx_{i_1}=\bx_{i_2}=\cdots=\bx_{i_k}]$ is invariant under permuting its indices,
and so summing over all distinct $i_1, \ldots, i_k$ yields the same estimator as summing over all increasing $i_1 < \cdots <i_k$.
However, in the quantum setting, the estimators $\widehat{\bsigma}_i$ need not commute with each other, and so in general it is the case that
\begin{equation*}
    \tr\left(\widehat{\bsigma}_{i_1} \widehat{\bsigma}_{i_2} \cdots \widehat{\bsigma}_{i_k}\right)
    \neq \tr \left(\widehat{\bsigma}_{i_{\pi(1)}} \widehat{\bsigma}_{i_{\pi(2)}} \cdots \widehat{\bsigma}_{i_{\pi(k)}}\right),
    \qquad \text{for }\pi\in S_k,
\end{equation*}
with the one exception of the $k = 2$ case.
Hence, summing over only those indices in which $i_1 < \cdots <i_k$ would actually yield a different and, we believe, worse estimator.
One additional subtlety arising from the noncommutativity of the $\widehat{\bsigma}_i$'s is that the $\tr\left(\widehat{\bsigma}_{i_1} \widehat{\bsigma}_{i_2} \cdots \widehat{\bsigma}_{i_k}\right)$ terms are, in general, complex-valued.
However, because each term appears in the sum with its complex conjugate 
$\tr\left(\widehat{\bsigma}_{i_k} \widehat{\bsigma}_{i_{k-1}} \cdots \widehat{\bsigma}_{i_1}\right)$,
the overall estimator $\bZ_k$ is still real-valued.

Since $\bZ_k$ is an unbiased estimator for $\tr(\sigma^k)$,
our main goal is to show that it concentrates well around its mean.
To do this, we will bound its variance.
This entails bounding the expression $\E[\bZ_k^2]$, which involves terms like
\begin{equation*}
    \E \tr\left(\widehat{\bsigma}_{i_1} \widehat{\bsigma}_{i_2} \cdots \widehat{\bsigma}_{i_k}\right) \cdot \tr\left(\widehat{\bsigma}_{j_1} \widehat{\bsigma}_{j_2} \cdots \widehat{\bsigma}_{j_k}\right). 
\end{equation*}
When $i_1,\ldots,i_k,j_1,\ldots,j_k$ are all distinct, this term equals $(\tr(\sigma^k))^2 = (\E[\bZ_k])^2$. However, when the sample indices $\{i_1,\ldots,i_k\}$ and $\{j_1,\ldots,j_k\}$ intersect nontrivially, the non-commutativity of $\widehat{\bsigma}_i$ makes it challenging to analyze it directly. 
Nevertheless, we are able to prove the following bound on the variance of our estimator.

\begin{theorem}\label{thm:generalmomentvar}
    For any positive integer $k$ at most $n/2$, the variance of $\bZ_k$ is at most
    \begin{align*}
        \frac{24^k}{d} \sum_{j=0}^{k-1} \left(\frac{k d}{n}\right)^{k-j} \tr(\sigma^{2j}).
    \end{align*}
\end{theorem}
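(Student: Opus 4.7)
The plan is to directly compute $\Var[\bZ_k] = \E[\bZ_k^2] - \E[\bZ_k]^2$ by expanding the square, grouping terms by the size $t$ of the overlap between the two index tuples, and bounding each group separately. Expanding gives
\begin{equation*}
    \E[\bZ_k^2] = \frac{1}{(n)_k^2}\sum_{\vec i,\vec j} \E \tr(\widehat{\bsigma}_{i_1}\cdots\widehat{\bsigma}_{i_k})\cdot \tr(\widehat{\bsigma}_{j_1}\cdots \widehat{\bsigma}_{j_k}),
\end{equation*}
where $(n)_k = n(n-1)\cdots(n-k+1)$ and the sum is over pairs of tuples of distinct indices. I would partition this sum according to $t = |\{i_1,\ldots,i_k\}\cap \{j_1,\ldots,j_k\}|$. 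The $t=0$ case is harmless: the two trace products are independent, each factor has expectation $\tr(\sigma^k)$, so these terms contribute at most $\E[\bZ_k]^2$ and will be absorbed when we subtract it off. The variance then equals the sum over $t \geq 1$.

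For a fixed pair $(\vec i,\vec j)$ with overlap $t$, I would rewrite the product of traces as a single trace on $(\C^d)^{\otimes 2k}$ using the permutation-matrix trick mentioned in the technical overview:
\begin{equation*}
    \tr(\widehat{\bsigma}_{i_1}\cdots\widehat{\bsigma}_{i_k})\cdot \tr(\widehat{\bsigma}_{j_1}\cdots\widehat{\bsigma}_{j_k}) = \tr\bigl(P(\pi)\cdot \widehat{\bsigma}_{i_1}\otimes\cdots\otimes\widehat{\bsigma}_{i_k}\otimes\widehat{\bsigma}_{j_1}\otimes\cdots\otimes\widehat{\bsigma}_{j_k}\bigr),
\end{equation*}
for the appropriate $\pi\in S_{2k}$ (a product of two $k$-cycles). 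Now I take the expectation of the tensor product inside. The $2(k-t)$ registers corresponding to indices appearing only once contribute factors of $\sigma$ by \Cref{prop:uniform-povm-is-an-unbiased-estimator}, while the $t$ pairs of registers corresponding to shared indices contribute factors of $\E[\widehat{\bsigma}\otimes\widehat{\bsigma}]$, which by \Cref{prop:uniform-povm-tomograph-second-moment} equals
\begin{equation*}
    \tfrac{1}{d+2}\bigl((d+1)\swap - I\bigr)\bigl(I\otimes I + \sigma\otimes I + I\otimes \sigma\bigr).
\end{equation*}
Expanding this sum across all $t$ shared registers produces at most $6^t$ monomials, each of which is a constant times a product of $\swap$'s or $I$'s times a tensor of $\sigma$'s and $I$'s. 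The absolute value of the prefactor of each monomial is at most $(d+1)^t/(d+2)^t \leq 1$. Composing with $P(\pi)$ collapses everything to a trace of the form $\tr(\sigma^{2(k-t)+s}\otimes\sigma^{\cdots}\otimes\cdots)$ along the cycles of the resulting permutation, whose value is bounded crudely by $d^{c-1}\tr(\sigma^{2(k-t)})$ where $c$ is the number of cycles. Cycle counting together with the trivial bound $\tr(\sigma^m)\leq 1$ for $m\geq 1$ yields the per-term bound $6^k d^{t-1}\tr(\sigma^{2(k-t)})$ claimed in the technical overview. This is the main obstacle: verifying that after the full expansion one never picks up more than $d^{t-1}$ extra factors of $d$ from identity traces.

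Finally, I would count and assemble. The number of ordered pairs $(\vec i,\vec j)$ with overlap exactly $t$ equals $(n)_k\cdot \binom{k}{t}^2 t!\cdot (n-k)_{k-t}$ (choose $\vec i$, then pick the $t$ positions and values for the overlap and the remaining $k-t$ fresh indices in $\vec j$). Using $k\leq n/2$ to bound $(n-k)_{k-t}/(n)_k \leq 2^k/n^t$ and the elementary inequality $\binom{k}{t}^2 t! \leq 2^k k^t$, the variance is at most
\begin{equation*}
    \sum_{t=1}^{k} \binom{k}{t}^2 t!\cdot \frac{2^k}{n^t}\cdot 6^k d^{t-1}\tr(\sigma^{2(k-t)}) \leq 24^k \sum_{t=1}^k \frac{k^t d^{t-1}}{n^t}\tr(\sigma^{2(k-t)}),
\end{equation*}
which is precisely the desired bound after reindexing $j = k-t$.
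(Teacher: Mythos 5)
Your plan follows the same skeleton as the paper's proof: expand $\E[\bZ_k^2]$, partition by the overlap size $t$, rewrite the product of traces as a single trace over $(\C^d)^{\otimes 2k}$ via a permutation, substitute the second moment formula for $\E[\widehat{\bsigma}^{\otimes 2}]$ at the shared pairs, expand into $6^t$ signed monomials, and then count and assemble. The combinatorial assembly at the end (counting pairs with overlap $t$, and the bounds $(n-k)_{k-t}/(n)_k \le 2^k/n^t$ and $\binom{k}{t}^2 t! \le 2^k k^t$) is correct and matches the paper up to constants.

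The gap is exactly at the place you flag as "the main obstacle," but your sketched resolution of it is wrong. You propose to bound every monomial's coefficient in absolute value by $1$ and then use cycle counting, claiming the per-monomial bound $6^k d^{t-1}\tr(\sigma^{2(k-t)})$. This fails. Consider $t=1$: the monomial in which the single shared pair picks the $\tfrac{I\otimes I}{d+2}$ factor from $\frac{(d+1)\swap-I}{d+2}$ leaves the permutation untouched as a product of two disjoint $k$-cycles, so $c=2 > t$; the trace is $\tr(\sigma^{k-1})^2$, which your crude ``$d^{c-1}$'' bound turns into $d\cdot\tr(\sigma^{2(k-1)})$ --- a factor of $d$ too large. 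The saving factor of $1/d$ has to come from the $\tfrac{1}{d+2}$ coefficient that you have just discarded. The paper addresses exactly this: it bounds the $I/(d+2)$ pieces by $I/d$ rather than by $I$, establishes the cycle-count bound $c\le t$ only for monomials containing at least one SWAP (this is \Cref{lem:cycles_swaps}, whose hypothesis is $T\neq\emptyset$), and treats the all-$I$ (``$T=\emptyset$'') monomial separately, using its retained $d^{-(k-j)}$ coefficient. So the two halves of your per-monomial claim --- ``coefficient $\le 1$'' and ``cycle count $\le t$'' --- cannot both be invoked simultaneously on the one monomial where the bound is tight.

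Two secondary omissions: (i) the inequality $c\le t$ itself is a nontrivial combinatorial fact about the cycle structure of (two disjoint $k$-cycles) composed with a nonempty product of SWAPs; the paper proves it as a standalone lemma, and your proposal neither states nor proves it. (ii) To pass from $\tr(\sigma^{p_1})\cdots\tr(\sigma^{p_c})$ to $d^{c-1}\tr(\sigma^{p_1+\cdots+p_c})$ one needs something like Chebyshev's sum inequality together with the monotonicity $\tr(\sigma^a)\le\tr(\sigma^b)$ for $a\ge b$; the paper packages this as \Cref{lem:perm_PSD}. Neither omission is a structural problem, but they are not ``crude'' or ``trivial'' and deserve explicit treatment. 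Your choice to bound signed monomials by absolute values (rather than dropping the negative ones via positivity, as the paper does) is looser but not incorrect, and would be absorbed into the constant.
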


To understand this bound, let us consider an example.
When $k = 2$, $\bZ_k$ is an unbiased estimator for $\tr(\sigma^2)$, the purity of $\sigma$,
and \Cref{thm:generalmomentvar} bounds its variance by
\begin{equation*}
    \frac{24^2}{d} \cdot \Big(\Big(\frac{2d}{n}\Big)^2 \cdot \tr(\sigma^0) + \Big(\frac{2d}{n}\Big)^1 \cdot \tr(\sigma^2)\Big)
    = O\Big(\frac{d^2}{n^2} + \frac{\tr(\sigma^2)}{n}\Big),
\end{equation*}
where we have used the fact that $\tr(\sigma^0) = d$ for any state $\sigma$.
Applying Chebyshev's inequality, this allows us to estimate $\tr(\sigma^2)$ up to error
\begin{equation}\label{eq:k=2-case}
    O\Big(\frac{d}{n} + \sqrt{\frac{\tr(\sigma^2)}{n}}\Big)
\end{equation}
with probability 99\%.
As $\tr(\sigma^2)\leq 1$ for all $\sigma$, we can upper bound this error by $O(d/n + 1/\sqrt{n})$.
This means that $\bZ_2$ is $\epsilon$-close to $\tr(\sigma^2)$ with probability 99\% once $n = O(d/\epsilon + 1/\epsilon^2)$, which scales as $O(d/\epsilon)$ when $\epsilon \geq 1/d$ and as $O(1/\epsilon^2)$ when $\epsilon \leq 1/d$. This gives an \emph{additive error} guarantee, in the sense that it promises that $\bZ_2 = \tr(\sigma^2) \pm \epsilon$.
However, it is useful to keep the $\tr(\sigma^2)$ term in \Cref{eq:k=2-case} around, rather than upper bounding it by $1$, because its presence allows us to also achieve a \emph{multiplicative error} guarantee as well, in the sense that
\begin{equation*}
    (1-\delta) \cdot \tr(\sigma^2) \leq \bZ_2 \leq (1+\delta) \cdot \tr(\sigma^2).
\end{equation*}
Writing $\bZ_2 = \tr(\sigma^2) + \Delta$, this is equivalent to asking that $|\Delta|/\tr(\sigma^2) \leq \delta$.
Applying our bound on $\Delta$ from \Cref{eq:k=2-case}, we have that
\begin{equation*}
    \frac{|\Delta|}{\tr(\sigma^2)}
    \leq O\left(\frac{d}{\tr(\sigma^2) \cdot n} + \sqrt{\frac{1}{\tr(\sigma^2)\cdot n}}\right)
    \leq O\Big(\frac{d^2}{n} + \sqrt{\frac{d}{n}}\Big),
\end{equation*}
where in the last step we have used the fact that $\tr(\sigma^2) \geq 1/d$ always, where equality holds when $\sigma = I/d$ is maximally mixed.
This is at most $\delta$ once $n = O(d^2/\delta + d/\delta^2)$,
and so this algorithm achieves a multiplicative error guarantee given this many copies.
(Note that upper bounding $\tr(\sigma^2) \leq 1$ would have yielded a worse sample complexity of $O(d^2/\delta^2)$.)

As mentioned above, an estimator quite similar to our $\bZ_2$ was studied by Anshu, Landau, and Liu~\cite{ALL22}
for the task of estimating $\tr(\rho\sigma)$, given copies of two quantum states $\rho$ and $\sigma$.
Theirs is also an unbiased estimator, and they prove a variance bound of
$O(d^2/n + 1/n)$~\cite[Equation (180)]{ALL22}.
In fact, their estimator can be shown to have a stronger variance bound of $O(d^2/n+\tr(\rho\sigma)/n)$, matching that of our estimator when $\rho = \sigma$.
As also mentioned above, our estimator $\bZ_2$ can also be viewed as a special case of the estimators for quadratic functions from~\cite{HKP20} (simply set their $O_i = \swap$). However, they do not prove explicit variance or sample complexity bounds for these estimators.

As a corollary of~\Cref{thm:generalmomentvar}, we derive the following multiplicative error bounds for estimating the $k$-th moment. 

\begin{corollary}[Multiplicative-error moment estimator]\label{cor:multi_error_moment_estimator}
    For any quantum state $\sigma$ of dimension $d$ and a fixed positive integer $k\geq 2$, with probability $0.99$, $\bZ_k$ can estimate $\tr(\sigma^k)$ to multiplicative error $\delta$ using
    \begin{equation*}
        n = O\left(\max \left\{\frac{d^{2-2/k}}{\delta^2}, \frac{d^{3 - 2/k}}{\delta^{2/k}} \right\}\right)
    \end{equation*}
    copies of $\sigma$. 
\end{corollary}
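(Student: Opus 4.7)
The plan is to apply Chebyshev's inequality to the variance bound in \Cref{thm:generalmomentvar}. Since $\bZ_k$ is an unbiased estimator for $\tr(\sigma^k)$, Chebyshev guarantees $|\bZ_k - \tr(\sigma^k)| \leq 10\sqrt{\Var[\bZ_k]}$ with probability $0.99$, so it suffices to show that
\[
    \Var[\bZ_k] \;\leq\; c\,\delta^2\,\tr(\sigma^k)^2
\]
for an absolute constant $c > 0$, under the stated copy complexity. Plugging in the bound
\[
    \Var[\bZ_k] \;\leq\; \frac{24^k}{d}\sum_{j=0}^{k-1}\Bigl(\frac{kd}{n}\Bigr)^{k-j}\tr(\sigma^{2j}),
\]
the task reduces to controlling each ratio $\tr(\sigma^{2j})/\tr(\sigma^k)^2$ by a suitable power of $d$.

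To control these ratios I will use three standard spectral inequalities: (i) H\"{o}lder's inequality gives $\tr(\sigma^{2j}) \leq \tr(\sigma^k)^{2j/k}\,d^{\,1-2j/k}$ when $2j \leq k$; (ii) monotonicity of $\ell_p$-norms on the eigenvalue vector gives $\tr(\sigma^{2j}) \leq \tr(\sigma^k)^{2j/k}$ when $2j \geq k$; and (iii) the power-mean inequality applied to $\tr(\sigma) = 1$ gives the uniform lower bound $\tr(\sigma^k) \geq d^{\,1-k}$, which I use to convert the resulting negative powers of $\tr(\sigma^k)$ into positive powers of $d$. A short computation combining (i) and (iii) yields $\tr(\sigma^{2j})/\tr(\sigma^k)^2 \leq d^{\,2k-1-2j}$ in the first regime, and an analogous calculation using (ii) and (iii) gives the matching bound in the second regime.

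Substituting these bounds back into the variance inequality and reindexing with $m = k-j$, the normalized variance $\Var[\bZ_k]/\tr(\sigma^k)^2$ is bounded by a sum of $k$ terms of the form $O_k(1) \cdot d^{a_m}/n^m$ with explicit exponents $a_m$. Requiring each term to be at most $O(\delta^2/k)$ gives a constraint $n \geq O_k(d^{a_m/m}/\delta^{2/m})$ for each $m$, and the final requirement on $n$ is the maximum over $m \in \{1, \dots, k\}$. After checking monotonicity of these constraints in $m$ separately in each regime, the two ``extreme'' terms turn out to be binding: $m = k$ (i.e., $j=0$) forces $n \gtrsim d^{\,3-2/k}/\delta^{2/k}$, and $m = 1$ (i.e., $j=k-1$) forces $n \gtrsim d^{\,2-2/k}/\delta^2$. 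Taking the max of these two reproduces the copy complexity in the corollary, with all $k$-dependent constants (the $24^k$ from the variance bound, a factor $k^k$, and Chebyshev's constant) absorbed into the $O(\cdot)$ since $k$ is held fixed.

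The main technical nuisance I anticipate is verifying that the intermediate terms (those with $1 < m < k$) really are dominated by the two extremes, since the exponent $a_m$ is piecewise defined across the threshold $2j = k$ and one has to consider separately the regimes where $\delta$ is larger or smaller than a polynomial threshold in $d$. Once this bookkeeping is checked, the argument is a routine combination of the variance bound from \Cref{thm:generalmomentvar}, Chebyshev's inequality, and the standard H\"{o}lder/power-mean inequalities above, with no further quantum-specific input.
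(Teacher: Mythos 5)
Your proposal is correct and follows essentially the same route as the paper: apply Chebyshev to the variance bound from \Cref{thm:generalmomentvar}, split into the cases $2j \leq k$ and $2j \geq k$, bound $\tr(\sigma^{2j})/\tr(\sigma^k)^2$ by $d^{2(k-j)-1}$ and $d^{(2-2/k)(k-j)}$ respectively using norm interpolation together with $\tr(\sigma^k)\geq d^{1-k}$, and then observe that the extreme indices $j=0$ and $j=k-1$ dominate. The only cosmetic difference is attribution: what you call H\"{o}lder and power-mean, the paper derives via Jensen's inequality applied to $x\mapsto x^{k/(2j)}$; and the paper explicitly exhibits the intermediate constraint $d^{3-4/k}/\delta^{4/k}$ and shows it is never binding, which is exactly the bookkeeping step you flag and defer.
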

\begin{proof}
Since $\E \bZ_k = \tr(\sigma^k)$, using Chebyshev's inequality, we have that for any $\delta > 0$, 
\begin{equation*}
    \Pr \left[ \ABS{\bZ_k - \tr(\sigma^k)} \geq \delta\cdot \tr(\sigma^k)\right] = \Pr \left[ \ABS{\bZ_k - \tr(\sigma^k)} \geq \frac{\delta \cdot\tr(\sigma^k)}{\sqrt{\Var[\bZ_k]}}   \sqrt{\Var[\bZ_k]}\right] \leq \frac{\Var[\bZ_k]}{\delta^2 \cdot (\tr(\sigma^k))^2}. 
\end{equation*}
For any normalized quantum state $\sigma$, let us consider two cases. First, when $2j\geq k$, by the monotonicity of norms,  we have
\begin{equation*}
    \tr(\sigma^{2j})^{1/(2j)}\leq \tr(\sigma^{k})^{1/k}.
\end{equation*}
Since we always have $j\leq k-1$ in the expression for $\Var[\bZ_k]$, this implies that
\begin{equation*}
    \frac{\tr(\sigma^{2j})}{(\tr(\sigma^k))^2}\leq \tr(\sigma^k)^{2j/k - 2} = \left(\frac{1}{\tr(\sigma^k)}\right)^{\frac{2}{k}(k-j)}
    \leq d^{(k-1)\frac{2}{k}(k-j)} = d^{(2-2/k)(k-j)},
\end{equation*}
where we used $\tr(\sigma^k) \geq \tr((I/d)^k)= d^{1-k}$ for the last inequality. 
For the second case, when $2j\leq k$, we have that $g(x)=x^{k/(2j)}$ is a convex function. Moreover, let the random variable $\bX$ take the value of $\alpha_i^{2j}$ with uniform probability $1/d$ for all $i\in[d]$. It then follows from Jensen's inequality $g(\E[\bX]) \leq \E g(\bX)$ that
\begin{equation*}
    \left(\frac{\tr(\sigma^{2j})}{d}\right)^{k/(2j)} \leq \frac{\tr(\sigma^{k})}{d}. 
\end{equation*}
Then
\begin{equation*}
    \frac{\tr(\sigma^{2j})}{(\tr(\sigma^k))^2}\leq (\tr(\sigma^k))^{2j/k-2} \cdot d^{1 - 2j/k} \leq d^{(k-1)\frac{2}{k}(k-j) + 1-2j/k} = d^{2(k-j)-1}.
\end{equation*}
Together with \Cref{thm:generalmomentvar}, we have that
\begin{align*}
    \frac{\Var[\bZ_k]}{(\tr(\sigma^k))^2}&\leq \frac{24^k}{d} \sum_{j=\lceil k/2 \rceil}^{k-1} \left( \frac{kd^{3 - 2/k}}{n} \right)^{k-j} + \frac{24^k}{d^2} \sum_{j=0}^{\lceil k/2 \rceil - 1} \left( \frac{kd^{3}}{n} \right)^{k-j}. 
\end{align*}
Note that for any integers $b\geq a$, 
\begin{align*}
    \sum_{i=a}^b x^i &\leq \begin{cases}
        (b-a)\cdot x^b, & \text{if }x\geq 1, \\
        (b-a)\cdot x^a, & \text{if } x< 1.
    \end{cases}\\
    &\leq \begin{cases}
        (b-a)\cdot x^{b+1/2}, & \text{if }x\geq 1, \\
        (b-a)\cdot x^{a-1/2}, & \text{if } x< 1.
    \end{cases}
\end{align*}
Therefore, when $k$ is constant, we have that
\begin{align*}
    \frac{24^k}{d} \sum_{j=\lceil k/2 \rceil}^{k-1} \left( \frac{kd^{3 - 2/k}}{n} \right)^{k-j} &\leq O\left(\max \left\{\frac{1}{d}\cdot \left( \frac{d^{3-2/k}}{n}\right)^{ \lfloor k/2 \rfloor }, \ \frac{1}{d}\cdot \frac{d^{3-2/k}}{n}\right\}\right) \\
    &\leq O\left(\max \left\{\frac{1}{d}\cdot \left( \frac{d^{3-2/k}}{n}\right)^{ k/2 }, \ \frac{1}{d}\cdot \frac{d^{3-2/k}}{n}\right\}\right)
\end{align*}
and
\begin{align*}
    \frac{24^k}{d^2} \sum_{j=0}^{\lceil k/2 \rceil - 1} \left( \frac{kd^{3}}{n} \right)^{k-j} &\leq O\left(\max \left\{
    \frac{1}{d^2}\cdot \left(\frac{d^3}{n}\right)^k, \ \frac{1}{d^2}\cdot \left(\frac{d^3}{n}\right)^{\lfloor k/2 \rfloor + 1}
    \right\}\right) \\
    &\leq O\left(\max \left\{
    \frac{1}{d^2}\cdot \left(\frac{d^3}{n}\right)^k, \ \frac{1}{d^2}\cdot \left(\frac{d^3}{n}\right)^{k/2}
    \right\}\right).
\end{align*}
In the second inequality, we used the fact that the second term only dominates when $d^3/n \leq 1$, and reducing its power gives an upper bound in that case.
This means that with probability $99\%$, 
$\bZ_k$ can estimate the $k$-th moment of $\sigma$ to multiplicative error $\delta$ provided that 
\begin{equation*}
    \max \left\{\frac{1}{d}\cdot \left( \frac{d^{3-2/k}}{n}\right)^{ k/2 }, \ \frac{1}{d}\cdot \frac{d^{3-2/k}}{n}, \ \frac{1}{d^2}\cdot \left(\frac{d^3}{n}\right)^k, \ \frac{1}{d^2}\cdot \left(\frac{d^3}{n}\right)^{ k/2 } \right\} \leq O(\delta^2),
\end{equation*}
which can be simplified as
\begin{equation*}
    n = O\left(\max \left\{\frac{d^{2-2/k}}{\delta^2}, \frac{d^{3 - 4/k}}{\delta^{4/k}}, \frac{d^{3 - 2/k}}{\delta^{2/k}} \right\}\right) = O\left(\max \left\{\frac{d^{2-2/k}}{\delta^2}, \frac{d^{3 - 2/k}}{\delta^{2/k}} \right\}\right). 
\end{equation*}
Note that here we drop the second term $\frac{d^{3 - 4/k}}{\delta^{4/k}}$ because it never dominates: $\frac{d^{3 - 4/k}}{\delta^{4/k}} \geq \frac{d^{3 - 2/k}}{\delta^{2/k}}$ if and only if $\delta \leq 1/d$. But when $\delta \leq 1/d$, we always have $\frac{d^{3 - 4/k}}{\delta^{4/k}} \leq \frac{d^{2-2/k}}{\delta^2}$ when $k\geq 2$. 
\end{proof}

\subsection{Application: quantum R\'{e}nyi entropy}

\begin{definition}[Quantum R\'{e}nyi entropy]
    Let $k$ be a positive real number. The \emph{quantum R\'{e}nyi entropy of order $k$} of a density matrix $\sigma$ is defined as
    \begin{equation*}
        S_{k}(\sigma) = \frac{1}{1-k} \log \tr(\sigma^k).
    \end{equation*}
\end{definition}

Due to the relationship between moment estimation and quantum R\'{e}nyi entropy, our multiplicative-error moment estimator from~\Cref{cor:multi_error_moment_estimator} can be used to obtain an additive-error approximation to the quantum R\'{e}nyi entropy for fixed integers $k$.

\begin{corollary}[Additive-error R\'{e}nyi entropy estimator]
    \label{cor:add_error_renyi_entropy_estimator}
    For any quantum state $\sigma$ of dimension $d$ and a fixed positive integer $k$, with probability $0.99$, the quantity $\frac{1}{1-k}\log \bZ_k$ can estimate $S_{k}(\sigma)$ up to an additive error $\delta < 1/2$ using
    \begin{equation*}
        n = O\left(\max \left\{\frac{d^{2-2/k}}{\delta^2}, \frac{d^{3 - 2/k}}{\delta^{2/k}} \right\}\right)
    \end{equation*}
    copies of $\sigma$.
\end{corollary}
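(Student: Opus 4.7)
The plan is to simply reduce additive-error R\'enyi estimation to the multiplicative-error moment estimation of \Cref{cor:multi_error_moment_estimator} via taking logarithms. Concretely, I will invoke \Cref{cor:multi_error_moment_estimator} with multiplicative error parameter $\delta' \coloneqq c \cdot (k-1) \cdot \delta$ for a sufficiently small absolute constant $c$, so that with probability $0.99$,
\begin{equation*}
    (1-\delta') \cdot \tr(\sigma^k) \leq \bZ_k \leq (1+\delta') \cdot \tr(\sigma^k).
\end{equation*}
Taking logs and subtracting $\log \tr(\sigma^k)$ gives $\log(1-\delta') \leq \log \bZ_k - \log \tr(\sigma^k) \leq \log(1+\delta')$. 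For $\delta' \leq 1/2$, the elementary bounds $|\log(1 \pm \delta')| \leq 2 \delta'$ then yield $|\log \bZ_k - \log \tr(\sigma^k)| \leq 2 \delta'$.

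Dividing by $|1-k| = k-1$ (using $k \geq 2$) gives
\begin{equation*}
    \Big| \tfrac{1}{1-k}\log \bZ_k - S_k(\sigma) \Big| \leq \tfrac{2 \delta'}{k-1} = 2c \cdot \delta,
\end{equation*}
which is at most $\delta$ for $c \leq 1/2$. Since $\delta' = \Theta_k(\delta)$ for fixed $k$, plugging $\delta'$ into the sample complexity of \Cref{cor:multi_error_moment_estimator} only changes the hidden constants, giving the claimed bound
\begin{equation*}
    n = O\left(\max \Big\{ \tfrac{d^{2-2/k}}{\delta^2}, \tfrac{d^{3-2/k}}{\delta^{2/k}} \Big\} \right).
\end{equation*}

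There is essentially no obstacle: the main content is already contained in \Cref{cor:multi_error_moment_estimator}, and the only subtlety is verifying that the constraint $\delta' \leq 1/2$ is compatible with the hypothesis $\delta < 1/2$ (which it is, since $c \leq 1$ and $k-1$ will be absorbed into constants when $k$ is fixed). I should also briefly note that the case $k=1$ is excluded since $S_1$ would be the von Neumann entropy, requiring a separate treatment, but the statement assumes integer $k \geq 2$ implicitly through its use of \Cref{cor:multi_error_moment_estimator}.
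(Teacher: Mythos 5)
Your proposal is correct and follows the same route as the paper's proof: invoke the multiplicative-error moment estimate of \Cref{cor:multi_error_moment_estimator}, take logarithms, and use elementary bounds on $\log(1\pm x)$ before dividing by $k-1$. The only cosmetic difference is that you rescale the multiplicative parameter to $\delta' = c(k-1)\delta$ up front, whereas the paper calls the moment estimator with parameter $\delta$ directly, arrives at an additive error of $\tfrac{\delta}{(1-\delta)(k-1)} = O(\delta)$, and then absorbs the constant at the end by adjusting $n$; both are valid once $k$ is fixed, and both implicitly need $k \geq 2$ so that the division by $k-1$ makes sense.
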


\begin{proof}
    From~\Cref{cor:multi_error_moment_estimator}, we know that with probability $0.99$,
    \begin{equation*}
        \tr(\sigma^k)(1 - \delta) \leq \bZ_k \leq  \tr(\sigma^k)(1 + \delta).
    \end{equation*}
    Taking the logarithm on all sides gives that
    \begin{equation*}
        \log \tr(\sigma^k) + \log(1 - \delta) \leq \log\bZ_k \leq  \log \tr(\sigma^k) + \log(1 + \delta).
    \end{equation*}
    We rewrite the term $\log(1-\delta) = -\log \frac{1}{1-\delta} = -\log\left(1 + \frac{\delta}{1-\delta}\right)$ and use the well-known inequality $\log(1 + x) \leq x$ for all positive $x$ to deduce that
    \begin{equation*}
        \log \tr(\sigma^k) - \frac{\delta}{1-\delta} \leq \log\bZ_k \leq  \log \tr(\sigma^k) + \delta.
    \end{equation*}
    Finally, we multiply all sides by $\frac{1}{1-k}$ to conclude that
    \begin{equation*}
        \ABS{\frac{1}{1-k}\log \bZ_k - S_k(\sigma)} \leq \frac{\delta}{(1-\delta)(k-1)}.
    \end{equation*}
    Since $k$ is a fixed integer and $\delta < 1/2$, the quantity $\frac{1}{1-k}\log \bZ_k$ estimates $S_k(\sigma)$ up to an additive error $O(\delta)$. The statement of the corollary follows by adjusting the number of copies $n$ by a constant.
\end{proof}

\subsection{Helper lemmas: the trace of permutations}\label{subsec:helper_lemmas_trace_perm}

Before diving into the proof of \Cref{thm:generalmomentvar} in \Cref{sec:variance_proof}, let us take a detour to prove some helper lemmas related to the trace of permutations. 
One of the key ingredients in proving \Cref{thm:generalmomentvar} is \Cref{lem:cycles_swaps}, whose proof is purely combinatorial. 

The following lemma is immediate using the tensor network diagram notation. 
\begin{lemma}\label{lem:num_disjoint_cycles}
Let $k$ be a positive integer. 
For any $k$-cycle $\tau\in S_k$ and $d\times d$ matrices $M_1,\ldots, M_k$,
\begin{align*}
    \tr\left(P(\tau^{-1}) \cdot M_{1}\otimes M_2\otimes \cdots \otimes M_k \right) = \tr(M_{\tau(1)}M_{\tau^2(1)}\cdots M_{\tau^{k}(1)}).
\end{align*}
\end{lemma}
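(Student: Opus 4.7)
The plan is to expand the left-hand side in components, apply the explicit action of $P(\tau^{-1})$ from~\Cref{def:rep_sym_group}, and then use the fact that a $k$-cycle has a single orbit to recognize the resulting sum as a matrix-product trace.

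First I would write out
\begin{equation*}
    \tr\bigl(P(\tau^{-1}) \cdot M_1 \otimes \cdots \otimes M_k\bigr) = \sum_{i_1,\ldots,i_k} \bra{i_1,\ldots,i_k}\, P(\tau^{-1})\, (M_1 \otimes \cdots \otimes M_k)\, \ket{i_1,\ldots,i_k}.
\end{equation*}
From~\Cref{def:rep_sym_group}, $P(\tau^{-1}) \ket{j_1,\ldots,j_k} = \ket{j_{\tau(1)},\ldots,j_{\tau(k)}}$. A short index chase (insert a resolution of the identity between $P(\tau^{-1})$ and $M_1\otimes\cdots\otimes M_k$, and use the Kronecker delta $\delta_{j_{\tau(\ell)},\,i_\ell}$ to solve $j_m = i_{\tau^{-1}(m)}$) collapses the right-hand side to
\begin{equation*}
    \sum_{i_1,\ldots,i_k} \prod_{m=1}^{k} (M_m)_{i_{\tau^{-1}(m)},\, i_m} \;=\; \sum_{i_1,\ldots,i_k} \prod_{\ell=1}^{k} (M_{\tau(\ell)})_{i_\ell,\, i_{\tau(\ell)}},
\end{equation*}
where the last equality uses the substitution $\ell = \tau^{-1}(m)$.

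The key structural step uses that $\tau$ is a $k$-cycle. I would rename $j_m \coloneqq i_{\tau^{m-1}(1)}$ for $m = 1,\ldots,k$; the orbit condition guarantees that $(i_1,\ldots,i_k)\leftrightarrow(j_1,\ldots,j_k)$ is a bijection. The factor indexed by $\ell = \tau^{m-1}(1)$ then becomes $(M_{\tau^m(1)})_{j_m,\, j_{m+1}}$, with indices taken cyclically (using $\tau^k(1) = 1$ to close the loop), so the product telescopes into
\begin{equation*}
    (M_{\tau(1)})_{j_1, j_2}\, (M_{\tau^2(1)})_{j_2, j_3} \cdots (M_{\tau^{k-1}(1)})_{j_{k-1}, j_k}\, (M_{\tau^k(1)})_{j_k, j_1}.
\end{equation*}
Summing over the $j_m$'s is by definition $\tr(M_{\tau(1)} M_{\tau^2(1)} \cdots M_{\tau^k(1)})$, which is the right-hand side.

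There is no serious obstacle: as the paper notes, the statement is immediate from the tensor-network picture, where $P(\tau^{-1})$ wires the top legs of the $M_j$'s according to $\tau$ and the outer trace wires the bottom legs back to the top, so that a single $k$-cycle fuses all wires into one closed loop visiting $M_{\tau(1)}, M_{\tau^2(1)}, \ldots, M_{\tau^k(1)}$ in order. The only care needed is to keep the convention of~\Cref{def:rep_sym_group} straight, since it already contains an inverse: $P(\tau^{-1})$ acts on indices by $\tau$ itself, which is exactly why the conclusion runs through $\tau(1), \tau^2(1), \ldots$ rather than their inverses.
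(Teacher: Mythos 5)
Your proof is correct, and it is essentially the argument the paper gestures at: the paper simply declares the lemma ``immediate using the tensor network diagram notation,'' and your index chase is precisely that diagram unrolled into components (with the convention from \Cref{def:rep_sym_group} handled correctly, including the absorbed inverse). Nothing further is needed.
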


We will make frequent use of the special case when $M_1=\cdots=M_k = I$. 
For any permutation $\pi\in S_k$, let $c(\pi)$ denote the number of disjoint cycles in the cycle decomposition of $\pi$. For example, $c(\pi)=1$ if $\pi$ is a $k$-cycle. 
By~\Cref{lem:num_disjoint_cycles}, we have that $\tr(P(\pi)) = d^{c(\pi)}$.

\begin{lemma}\label{lem:perm_PSD}
    For any PSD operator $0\preceq \sigma\preceq I$ and permutation $\pi\in S_k$, we have that
    \begin{enumerate}
        \item\label{item:always-real} $\tr\left(P(\pi) \cdot \sigma^{a_1}\otimes \cdots \otimes \sigma^{a_k}\right)$ is always real and nonnegative for any $a_i\in\R$. 

        \item\label{item:increasing} $\tr\left(P(\pi) \cdot \sigma^{a_1}\otimes \cdots \otimes \sigma^{a_k}\right) \leq \tr\left(P(\pi) \cdot \sigma^{b_1}\otimes \cdots \otimes \sigma^{b_k}\right)$ whenever $a_i \geq b_i\geq 0$ for all $i\in [k]$.
    \end{enumerate}
\end{lemma}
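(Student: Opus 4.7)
The plan is to reduce both statements to a simple factorization of the trace over the disjoint-cycle decomposition of $\pi$. Write $\pi = \tau_1 \cdots \tau_r$ as a product of disjoint cycles, say with cycle $\tau_s$ supported on the index set $C_s \subseteq [k]$. Since disjoint cycles act on disjoint tensor registers, $P(\pi)$ factorizes as a tensor product across the groups of registers indexed by $C_1, \ldots, C_r$, so
\begin{equation*}
    \tr\!\bigl(P(\pi)\cdot \sigma^{a_1}\!\otimes\!\cdots\!\otimes\! \sigma^{a_k}\bigr)
    \;=\; \prod_{s=1}^{r} \tr\!\Bigl(P(\tau_s)\cdot \bigotimes_{i\in C_s} \sigma^{a_i}\Bigr).
\end{equation*}
Applying \Cref{lem:num_disjoint_cycles} to each cycle factor (swapping $\tau_s$ for $\tau_s^{-1}$ is harmless because the cycle structure is the same), each factor becomes the trace of a product of matrices $\sigma^{a_i}$ for $i \in C_s$, taken in some cyclic order. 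But all of these matrices commute since they are functions of the same $\sigma$, so the product collapses to $\sigma^{A_s}$ where $A_s \coloneqq \sum_{i\in C_s} a_i$. Thus
\begin{equation*}
    \tr\!\bigl(P(\pi)\cdot \sigma^{a_1}\!\otimes\!\cdots\!\otimes\! \sigma^{a_k}\bigr)
    \;=\; \prod_{s=1}^{r} \tr\!\bigl(\sigma^{A_s}\bigr).
\end{equation*}

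From this factorization, both parts follow almost immediately. For Part 1, since $\sigma$ is PSD we may diagonalize $\sigma = \sum_j \lambda_j \ketbra{v_j}$ with $\lambda_j \geq 0$, and $\tr(\sigma^{A_s}) = \sum_j \lambda_j^{A_s}$ is a sum of nonnegative real numbers (interpreting $\sigma^a$ via functional calculus on the support of $\sigma$). A product of nonnegative reals is nonnegative and real, so the full expression is too.

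For Part 2, observe that when $a_i \geq b_i \geq 0$ on each $C_s$ we have $A_s \geq B_s \geq 0$, where $B_s \coloneqq \sum_{i\in C_s} b_i$. The function $t \mapsto \lambda^t$ is nonincreasing on $[0,\infty)$ for any $\lambda \in [0,1]$ (which is guaranteed by $0 \preceq \sigma \preceq I$), so $\lambda_j^{A_s} \leq \lambda_j^{B_s}$ for every eigenvalue, giving $\tr(\sigma^{A_s}) \leq \tr(\sigma^{B_s})$. Each of these traces is nonnegative by Part 1, so the product over $s$ preserves the inequality. The only mild subtlety, which I would address with a sentence, is the edge case where $\sigma$ has zero eigenvalues and some $a_i$ (in Part 1) is negative: we interpret $\sigma^a$ via the functional calculus on the support of $\sigma$, so the formulas above hold with $\lambda_j^{A_s}$ summed over $\lambda_j > 0$; this does not affect Part 2 since all exponents there are nonnegative. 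No step here looks technically difficult, so the main ``obstacle'' is simply to write the cycle-factorization cleanly enough that both conclusions are transparent from the product form.
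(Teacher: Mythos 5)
Your proof takes essentially the same route as the paper's: decompose $\pi$ into disjoint cycles, apply \Cref{lem:num_disjoint_cycles} to each cycle to reduce the trace factor to $\tr(\sigma^{A_s})$ (using that the commuting powers of $\sigma$ collapse), and then read off both claims from the nonnegativity of $\sum_j \lambda_j^{A_s}$ and the monotonicity of $t \mapsto \lambda^t$ on $[0,1]$. The paper's proof is two sentences and leaves the cycle-factorization and commutativity step implicit, whereas you spell them out and also flag the functional-calculus edge case for negative exponents on a singular $\sigma$; that added care is correct but not needed downstream, since the lemma is only invoked with nonnegative exponents.
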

\begin{proof}
    Suppose $\tau \in S_m$ is a $m$-cycle. By \Cref{lem:num_disjoint_cycles}, for any $a_1, \cdots, a_m$, we have that $\tr(P(\tau) \cdot \sigma^{a_1}\otimes \cdots \otimes \sigma^{a_m}) = \tr(\sigma^{a_1+\cdots + a_m})$ is always real and nonnegative for any $a_i\in \R$. 
    Moreover, since every eigenvalue of $\sigma$ is in $[0,1]$, we have $\tr(\sigma^a)\leq \tr(\sigma^b)$ for any $a\geq b\geq 0$. 
    The lemma then follows from the fact that any permutation $\pi$ can be decomposed into a product of disjoint cycles. 
\end{proof}

\begin{lemma}\label{lem:cycles_swaps}
    Let $\tau$ be a $k$-cycle on the odd numbers in $[2k]$ and $\tau'$ be a $k$-cycle on the even numbers in $[2k]$. For any integer $j\in \{0,1,2,\cdots, k-1\}$ and a nonempty set of indices $T \subseteq [k-j]$, define the following permutation
    \begin{equation*}
        \mu \coloneqq (\tau\cdot \tau')^{-1} \cdot \parens[\Bigg]{\prod_{i\in T} \swap_{2i-1,2i}} \in S_{2k}. 
    \end{equation*}
    Then
    \begin{equation}\label{eq:cycles_swaps_sigma_not_full}
        \tr\left( P(\mu)
        \cdot I^{\otimes 2(k-j)}\otimes \sigma^{\otimes 2j} \right) \leq d^{k-j-1}\cdot \tr(\sigma^{2j}).
    \end{equation}
\end{lemma}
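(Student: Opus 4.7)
The plan is to compute the trace via the cycle decomposition of $\mu$, bound the resulting product of traces by an iterated scalar inequality, and then exploit the bipartite odd/even structure of $\tau,\tau'$ to control the number of cycles of $\mu$.

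First, since $\tr(P(\pi)\cdot M_1\otimes\cdots\otimes M_n)$ factors across the cycles of $\pi$ (by applying \Cref{lem:num_disjoint_cycles} to each cycle individually, which is legitimate because disjoint cycles act on disjoint registers), and since $I$ and $\sigma$ commute, the LHS of the lemma equals $\prod_{C}\tr(\sigma^{a_C})$, where $C$ ranges over the cycles of $\mu$ and $a_C$ is the number of $\sigma$-registers (indices in $\{2(k-j)+1,\dots,2k\}$) lying in $C$. Cycles with $a_C=0$ each contribute $\tr(I_d)=d$, and $\sum_C a_C = 2j$. I will then prove the elementary inequality
\[
\tr(\sigma^a)\cdot\tr(\sigma^b)\leq d\cdot\tr(\sigma^{a+b}) \qquad (a,b\geq 0),
\]
for any PSD $\sigma$: diagonalizing $\sigma$ with eigenvalues $\lambda_1,\dots,\lambda_d\geq 0$ reduces it to $\sum_{i,j}\lambda_i^a\lambda_j^b\leq d\sum_i\lambda_i^{a+b}$, which follows by pairing each off-diagonal term with its transpose and applying the rearrangement identity $(\lambda_i^a-\lambda_j^a)(\lambda_i^b-\lambda_j^b)\geq 0$. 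Iterating across the cycles with $a_C\geq 1$ collapses their product into a single $\tr(\sigma^{2j})$, at the cost of one factor of $d$ per collapse, while the ``pure-$I$'' cycles contribute their own factors of $d$. Altogether, this yields
\[
\tr\bigl(P(\mu)\cdot I^{\otimes 2(k-j)}\otimes\sigma^{\otimes 2j}\bigr)\leq d^{c(\mu)-1}\cdot\tr(\sigma^{2j}),
\]
where $c(\mu)$ denotes the total number of cycles of $\mu$.

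The remaining and main step is the combinatorial bound $c(\mu)\leq k-j$. Observe that $(\tau\tau')^{-1}$ has exactly two cycles: the $k$-cycle $\tau^{-1}$ on odd indices and $\tau'^{-1}$ on even indices. The plan is to apply the $|T|$ pairwise commuting transpositions $\swap_{2i-1,2i}$ one at a time and track the cycle count. Crucially, for any $i\in T$, the indices $2i-1$ (odd) and $2i$ (even) lie in different cycles of $(\tau\tau')^{-1}$, so the first transposition applied, in whatever order I choose, is necessarily a merge, dropping the cycle count from $2$ to $1$. Each of the remaining $|T|-1$ transpositions then changes the count by exactly $\pm 1$ (the standard fact that a transposition either merges two cycles or splits one), so
\[
c(\mu)\leq 1+(|T|-1)=|T|\leq k-j,
\]
using $T\subseteq[k-j]$ at the last step. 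Substituting into the previous bound gives the desired $d^{k-j-1}\cdot\tr(\sigma^{2j})$.

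The hardest part of the plan is this cycle-counting step. In particular, the upper bound $c(\mu)\leq|T|$ crucially relies on the first transposition being a merge, which is exactly where the bipartite odd/even structure of the two cycles of $(\tau\tau')^{-1}$ enters essentially; without the guarantee that the first swap straddles the two cycles, every transposition could \emph{a priori} add a cycle and the resulting bound would be far too weak to conclude.
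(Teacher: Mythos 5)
Your proposal is correct. The first half (factoring the trace across cycles of $\mu$ into $\prod_C \tr(\sigma^{a_C})$, then collapsing via the scalar inequality $\tr(\sigma^a)\tr(\sigma^b)\le d\,\tr(\sigma^{a+b})$, which is Chebyshev's sum inequality) is essentially identical to the paper's. The genuinely different part is the combinatorial bound $c(\mu)\le k-j$. The paper argues \emph{per cycle}: it shows that every cycle of $\mu$ must change parity at least twice, and since parity changes only occur at the $\swap$ positions (registers $\le 2(k-j)$), every cycle contains at least two identity registers, of which there are only $2(k-j)$ total. You instead argue \emph{globally} by induction on transpositions: $(\tau\tau')^{-1}$ has exactly two cycles (one odd, one even), the first $\swap_{2i-1,2i}$ necessarily straddles them and merges to one, and each remaining $\swap$ changes the cycle count by $\pm 1$, giving $c(\mu)\le |T|\le k-j$. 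Your bound is in fact slightly sharper (it tracks $|T|$ rather than $k-j$), though this extra strength is not used downstream. The paper's per-cycle argument reveals a bit more structure (which is what one would need if one wanted to sharpen the $d$-powers further), while your merge/split argument is shorter and avoids any case analysis on cycle structure. One small point you glossed over: when $j=0$ there are no cycles with $a_C\ge 1$, so the collapse yields $d^{c(\mu)}$, but this equals $d^{c(\mu)-1}\tr(\sigma^0)$ since $\tr(\sigma^0)=d$, so the claimed bound still holds in that case.
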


\begin{proof}
This follows from some observations about the cycle decomposition of $\mu$.
For a simple example, consider when $j = 0$; then, the statement reduces to proving that
\begin{equation}
    \tr \parens{P(\mu)} \leq d^k.
\end{equation}
We know by \Cref{lem:num_disjoint_cycles} that $\tr(P(\mu)) = d^{c(\mu)}$ where $c(\mu)$ is the number of cycles in $\mu$.
We know that $\mu$ is the composition of three permutations in $S_{2k}$: first, swap adjacent elements $(2i-1, 2i)$ for some subset of $i$'s between $1$ and $k-j$; then, permute the odd elements in a $k$-cycle $\tau^{-1}$; finally, permute the even elements in a $k$-cycle $(\tau')^{-1}$.
Then $c(\mu) \leq k$, since cycle lengths are at least two (even indices get mapped to odd indices, and vice versa).

For the general case, by \Cref{lem:perm_PSD} we have that
\begin{align*}
    \tr\left( P(\mu)
    \cdot I^{\otimes 2(k-j)}\otimes \sigma^{\otimes 2j} \right)
    = \tr(\sigma^{p_1} \cdot I^{q_1}) \cdots \tr(\sigma^{p_c} \cdot I^{q_c})
\end{align*}
where $p_s, q_s$ are non-negative integers such that $\sum_{s \in [c]} p_s = 2j$ and $\sum_{s \in [c]} q_s = 2(k-j)$.
Specifically, these numbers come from the cycle decomposition of $\mu$.
There is a trace for each cycle, $c = c(\mu)$, and for the $s$-th cycle, $p_s$ and $q_s$ are the number of elements in the cycle which correspond to $\sigma$'s and $I$'s, respectively.
Concretely, $p_s$ is the number of elements in the $s$-th cycle which are at least $2(k-j)+1$, and $q_s$ is the number of elements in the $s$-th cycle which are at most $2(k-j)$.

We will show that every $q_s \geq 2$.
This suffices to show the lemma, since then the number of cycles is bounded by half the number of identities, $c \leq k-j$, and so
\begin{align*}
    \tr \left(P_\pi \cdot I^{\otimes 2(k-j)} \otimes \sigma^{\otimes 2j}\right)
    &= \tr(\sigma^{p_1}) \cdots \tr(\sigma^{p_c}) \\
    &\leq d^{c-1}\cdot \tr(\sigma^{p_1 + \dots + p_c}) \leq d^{k-j-1}\cdot \tr(\sigma^{2j}).
\end{align*}
The first inequality follows from repeatedly using Chebyshev's sum inequality, which states that if $x_1\geq x_2\geq \cdots \geq x_d$ and $y_1\geq y_2\geq \cdots \geq y_d$, then $(\frac{1}{d}\sum_{i=1}^d x_i)(\frac{1}{d}\sum_{i=1}^d y_i)\leq \frac{1}{d}\sum_{i=1}^d x_iy_i$.
In other words, we have that $\tr(\sigma^{p_1}) \tr(\sigma^{p_2}) \leq d \tr(\sigma^{p_1 + p_2})$, and so on.
The second inequality uses that $c \leq k-j$.

It remains to show that every $q_s \geq 2$, meaning that every cycle in the cycle decomposition of $\mu$ contains two elements which are at most $2(k-j)$.
Recall that $\mu$ is the composition of three permutations: first, swap adjacent elements $(2i-1, 2i)$ for some subset of $i$'s between $1$ and $k-j$; then, permute the odd elements in a $k$-cycle $\tau^{-1}$; finally, permute the even elements in a $k$-cycle $(\tau')^{-1}$.
Consider a cycle in $\mu$'s cycle decomposition, $(i, \mu(i), \mu^2(i), \dots)$: we will show that every such cycle alternates parity at least twice.
If this is true, then $q_s \geq 2$, since $q_s$ is at least the number of parity changes in the cycle.
If $\mu(i)$ has different parity from $i$, then $i \leq 2(k-j)$, since the only way parity changes is through the SWAPs, which only operate on indices which are at most $2(k-j)$.

To see why the cycle alternates parity at least twice, consider an odd $i$ (the even case is identical).
If $\mu(i)$ is also odd, then $\mu(i) = \tau^{-1}(i)$.
So, if the cycle never changes parity, then the cycle consists of all $k$ odd elements.
But this cannot happen: there is at least one SWAP, so somewhere in the cycle, $\tau^{-1}$ will eventually take $i$ to this SWAP, and alternate parity.
Then, parity must flip twice, to get back from even to odd when looping back to the beginning of the cycle.
This completes the proof.
\end{proof}

\subsection{Proof of \texorpdfstring{\Cref{thm:generalmomentvar}}{Theorem 5.2}}\label{sec:variance_proof}
Since $\bZ_k\in\R$, we have $\Var[\bZ_k] = \E[\bZ_k^2] - (\E[\bZ_k])^2$.
To begin, recall that we defined $\bZ_k$ in the following way:
\begin{equation*}
    \bZ_k \coloneqq \frac{1}{n(n-1)\cdots (n-k+1)}\cdot\sum_{\text{distinct }i_1, i_2, \ldots, i_k \in [n]} \tr\left(\widehat{\bsigma}_{i_1} \widehat{\bsigma}_{i_2} \cdots \widehat{\bsigma}_{i_k}\right).
\end{equation*}
So, we can rewrite this as an expectation,
\begin{equation*}
    \bZ_k = \E_{\bj} \E_{\bpi \sim S_k} \tr\left(\widehat{\bsigma}_{\bj_{\bpi(1)}} \widehat{\bsigma}_{\bj_{\bpi(2)}} \cdots \widehat{\bsigma}_{\bj_{\bpi(k)}}\right), 
\end{equation*}
where $\bpi$ is a uniformly random permutation in $S_k$ and $\bj = \{\bj_1, \ldots, \bj_k\}$ is a uniformly random subset of $[n]$ of size $k$. We refer to $\bj$ as the sample indices. 
Then, 
\begin{align} \label{eq:k-moment-variance}
    \E[\bZ_k^2] 
    &= \E_{\bi,\bj} \ \E_{\bpi,\bpi' \sim S_k}  \ \E_{\widehat{\bsigma}}\left[ \tr\left(\widehat{\bsigma}_{\bi_{\bpi(1)}} \widehat{\bsigma}_{\bi_{\bpi(2)}} \cdots \widehat{\bsigma}_{\bi_{\bpi(k)}}\right) \cdot \tr\left(\widehat{\bsigma}_{\bj_{\bpi'(1)}} \widehat{\bsigma}_{\bj_{\bpi'(2)}} \cdots \widehat{\bsigma}_{\bj_{\bpi'(k)}}\right) \right].
\end{align}
We can write $\E[\bZ_k^2]$ in another way, as the expectation of the output of the following procedure:
\begin{enumerate}
    \item Sample a $\bt \in [k]$ where $t$ is sampled with probability $\binom{k}{t}\binom{n-k}{k-t} / \binom{n}{k}$, i.e.\ the probability that two random $k$-element subsets $\bi, \bj \sim [n]$ have $\abs{\bi \cap \bj} = t$.
    \item Sample disjoint subsets $\bh, \ba, \bb \sim [n]$ of size $\bt$, $k-\bt$, and $k-\bt$ respectively.
    In this way, $\bh \cup \ba$ and $\bh \cup \bb$ are uniformly random $k$-element subsets, conditioned on their intersection being $\bt$.
    Denote $\bi = (\bh_1,\dots,\bh_{\bt},\ba_1,\dots,\ba_{k-\bt})$ to be the first subset, ordered so that $\bh$ comes first, and similarly for $\bj$.
    \item Sample random permutations $\bpi, \bpi' \sim S_k$.
    \item Output $\tr(\widehat{\bsigma}_{\bi_{\bpi(1)}} \dots \widehat{\bsigma}_{\bi_{\bpi(k)}})\cdot \tr(\widehat{\bsigma}_{\bj_{\bpi'(1)}} \dots \widehat{\bsigma}_{\bj_{\bpi'(k)}})$.
\end{enumerate}
This produces the identical expectation, because steps 1 and 2 above produce an $\bi$ and a $\bj$ which are independent and uniformly sampled, as in \Cref{eq:k-moment-variance}.
They are ordered such that their intersection comes first, but this does not matter because $\bpi$ and $\bpi'$ fully randomize their ordering in the subsequent trace.
We can write this mathematically:
\begin{align} \label{eq:k-moment-variance-alt}
    \E[\bZ_k^2]
    &= \E_{\bt} \ \E_{\bh, \ba, \bb \mid \bt} \ \E_{\bpi, \bpi'} \ \E_{\widehat{\bsigma}}\left[ \tr(\widehat{\bsigma}_{\bi_{\bpi(1)}} \dots \widehat{\bsigma}_{\bi_{\bpi(k)}})\cdot \tr(\widehat{\bsigma}_{\bj_{\bpi'(1)}} \dots \widehat{\bsigma}_{\bj_{\bpi'(k)}}) \right].
\end{align}
Let $\tau_0\in S_k$ denote the $k$-cycle that maps $1\to 2\to 3\to \cdots \to k\to 1$. 
By~\Cref{lem:num_disjoint_cycles}, for any permutation $\pi\in S_k$ and $i = (i_1,\dots,i_k) \in [n]^k$,
\begin{align*}
    \tr(\widehat{\bsigma}_{i_{\pi(1)}} \widehat{\bsigma}_{i_{\pi(2)}} \dots \widehat{\bsigma}_{i_{\pi(k)}})
    &= \tr\left( P(\tau_0^{-1}) \cdot \widehat{\bsigma}_{i_{\pi(1)}} \otimes \widehat{\bsigma}_{i_{\pi(2)}} \otimes \dots \otimes \widehat{\bsigma}_{i_{\pi(k)}} \right) \\
    &= \tr\left( P(\tau_0^{-1}) \cdot P(\pi^{-1}) \cdot \widehat{\bsigma}_{i_1} \otimes \widehat{\bsigma}_{i_{2}} \otimes \dots \otimes \widehat{\bsigma}_{i_{k}} \cdot P(\pi) \right) \\
    &= \tr\left( P(\pi\tau_0^{-1}\pi^{-1}) \cdot \widehat{\bsigma}_{i_1} \otimes \widehat{\bsigma}_{i_{2}} \otimes \dots \otimes \widehat{\bsigma}_{i_{k}} \right)
\end{align*}
where the last equality follows from the cyclic property of the trace and that $P$ is a representation of $S_k$. 
Hence, taking one output of the above procedure and looking at its expectation over the POVM outcomes,
\begin{align}
    & \E_{\widehat{\bsigma}}\left[ \tr(\widehat{\bsigma}_{\bi_{\bpi(1)}} \dots \widehat{\bsigma}_{\bi_{\bpi(k)}})\cdot \tr(\widehat{\bsigma}_{\bj_{\bpi'(1)}} \dots \widehat{\bsigma}_{\bj_{\bpi'(k)}}) \right] \nonumber \\
    = \ & \E_{\widehat{\bsigma}}\left[ \tr \parens[\Big]{ P(\bpi\tau_0^{-1}\bpi^{-1}) \cdot \widehat{\bsigma}_{\bi_1} \otimes \widehat{\bsigma}_{\bi_{2}} \otimes \dots \otimes \widehat{\bsigma}_{\bi_{k}} } \cdot \tr \parens[\Big]{ P(\bpi'\tau_0^{-1}(\bpi')^{-1}) \cdot \widehat{\bsigma}_{\bj_1} \otimes \widehat{\bsigma}_{\bj_{2}} \otimes \dots \otimes \widehat{\bsigma}_{\bj_{k}} } \right] \nonumber \\
    = \ & \E_{\widehat{\bsigma}}\left[ \tr \parens[\Big]{ P(\bpi\tau_0^{-1}\bpi^{-1}) \otimes P(\bpi'\tau_0^{-1}(\bpi')^{-1}) \cdot (\widehat{\bsigma}_{\bi_1} \otimes \dots \otimes \widehat{\bsigma}_{\bi_{k}}) \otimes (\widehat{\bsigma}_{\bj_1} \otimes \dots \otimes \widehat{\bsigma}_{\bj_{k}} )} \right] \nonumber\\
    = \ & \tr \parens[\Big]{ P(\bpi\tau_0^{-1}\bpi^{-1}) \otimes P(\bpi'\tau_0^{-1}(\bpi')^{-1}) \cdot \E_{\widehat{\bsigma}}\left[ (\widehat{\bsigma}_{\bi_1} \otimes \dots \otimes \widehat{\bsigma}_{\bi_{k}}) \otimes (\widehat{\bsigma}_{\bj_1} \otimes \dots \otimes \widehat{\bsigma}_{\bj_{k}} )\right]}
    \label{eq:fix_sample_indices_new}
\end{align}
where the second equality follows from $\tr(A)\cdot \tr(B) = \tr(A\otimes B)$ and the last equality follows from the linearity of expectation and trace. 

Since all the $k$-cycles form a conjugacy class in $S_k$ and $\tau_0$ is a fixed $k$-cycle, if $\bpi$ is a uniformly random permutation in $S_k$, then $\bpi \tau_0 \bpi^{-1}$ is a uniformly random $k$-cycle in $S_k$. 
Therefore, 
$\btau \coloneqq \bpi\tau_0^{-1}\bpi^{-1}$ and $\btau' \coloneqq \bpi'\tau_0^{-1}(\bpi')^{-1}$ are two $k$-cycles sampled independently and uniformly at random from $S_k$.

Now, for notational convenience, within the trace we will reorder the tensor products from
\begin{equation*}
    (\widehat{\bsigma}_{\bi_1} \otimes \dots \otimes \widehat{\bsigma}_{\bi_{k}}) \otimes (\widehat{\bsigma}_{\bj_1} \otimes \dots \otimes \widehat{\bsigma}_{\bj_{k}} )
\end{equation*}
to interleave the $\bi$'s and $\bj$'s, as in
\begin{equation*}
    (\widehat{\bsigma}_{\bi_1} \otimes \widehat{\bsigma}_{\bj_1}) \otimes \dots \otimes (\widehat{\bsigma}_{\bi_{k}} \otimes \widehat{\bsigma}_{\bj_{k}}).
\end{equation*}
Recall that $\bi$ and $\bj$ agree on the first $\bt$ indices: they are both equal to $\bh$.
So, using\footnote{
    For this proof, we will not use that $\tr(\sigma) = 1$; we will only use that $\tr(\sigma) \leq 1$.
    We will later call this proof to show that a slight variant of this moment estimator, which is used in our spectrum estimation algorithm, also succeeds for estimating sub-normalized states.
    Discussion of this variant appears in~\Cref{sec:subnormalized-moment-estimation}.
    Its variance analysis proceeds identically to this one except for this normalization, so we present it here in the slightly more general setting.
}
\begin{align*}
    \E[\widehat{\bsigma}_i \otimes \widehat{\bsigma}_j] =
    \begin{cases}
        \frac{(d+1)\swap - I^{\otimes 2}}{d+2} \left(I\otimes \sigma + \sigma \otimes I + \tr(\sigma) \cdot I\otimes I\right), & \text{if }i=j\\
        \sigma\otimes \sigma, &\text{if }i\neq j
    \end{cases}
\end{align*}
we have that
\begin{equation} \label{eq:fix_sample_indices_final}
    \eqref{eq:fix_sample_indices_new} = \underbrace{\tr\left( P(\btau\cdot \btau') \cdot \left(\E\left[ \widehat{\bsigma} \otimes \widehat{\bsigma} \right]\right)^{\otimes \bt} \otimes \left(\sigma \otimes \sigma \right)^{\otimes (k-\bt)} \right)}_{\coloneqq Q(k-\bt, \btau, \btau')}, 
\end{equation}
where, because of the re-ordering, $\btau \in S_{2k}$ (respectively, $\btau' \in S_{2k}$) is a uniformly random $k$-cycle permuting the odd (respectively, even) integers in $[2k]$. 
Notice that now, $Q(k-\bt, \btau, \btau')$ only depends on $\bt$, the number of overlapping samples between $\bi$ and $\bj$, but not the specific values of $\bi$ or $\bj$. 
Therefore, we can write
\begin{align*}
    \E[\bZ_k^2]
    &= \E_{\bt} \ \E_{\bh, \ba, \bb \mid \bt} \ \E_{\bpi, \bpi'} \ \E_{\widehat{\bsigma}}\left[ \tr(\widehat{\bsigma}_{\bi_{\bpi(1)}} \dots \widehat{\bsigma}_{\bi_{\bpi(k)}})\cdot \tr(\widehat{\bsigma}_{\bj_{\bpi'(1)}} \dots \widehat{\bsigma}_{\bj_{\bpi'(k)}}) \right] \\
    &= \E_{\bt} \ \E_{\bh, \ba, \bb \mid \bt} \ \E_{\btau, \btau'}  Q(k-\bt, \btau, \btau')  \\
    &= \E_{\bt} \ \E_{\btau, \btau'} Q(k-\bt, \btau, \btau')  \\
    &= \frac{1}{\binom{n}{k}} \sum_{i=0}^k \binom{k}{k-i}\binom{n-k}{i} \cdot \E_{\btau,\btau'} Q(i, \btau, \btau'). 
\end{align*}
where we first used \Cref{eq:k-moment-variance-alt}; then \Cref{eq:fix_sample_indices_final}; then that $\btau, \btau'$, and $Q$ do not depend on $\bh$, $\ba$, and $\bb$; and finally, that we can expand the expectation over $\bt$, with $i$ denoting the number of elements not in the intersection (the complement of $\bt$).
When $i = k$, since $\tau,\tau'$ are two disjoint $k$-cycles, we have that
\begin{equation*}
    Q(k,\tau,\tau')= \tr\left( P(\tau\cdot\tau')\cdot \sigma^{\otimes 2k}\right) = (\tr(\sigma^k))^2.
\end{equation*}
Therefore, 
\begin{equation}
    \Var[\bZ_k] = \E[\bZ_k^2] - (\E[\bZ_k])^2 \leq \frac{1}{\binom{n}{k}} \sum_{i=0}^{k-1} \binom{k}{k-i} \binom{n-k}{i}\cdot \E_{\btau, \btau'} Q(i, \btau, \btau') \label{eq:estimator_squared_new} .
\end{equation}
The expression $Q(i, \tau, \tau')$ naturally scales with $\tr(\sigma)^{k+i}$, so subsequently we will work with $Q(i, \tau, \tau') / \tr(\sigma)^{k+i}$, and let $\widetilde{\sigma} \coloneqq \sigma / \tr(\sigma)$ denote $\sigma$ normalized to have unit trace.
For any $k$-cycles $\tau,\tau'\in S_k$ and $i= 0,1,\ldots, k-1$,
\begin{align*}
    & Q(i, \tau, \tau') / \tr(\sigma)^{k+i} \\
    =\ & \frac{1}{\tr(\sigma)^{k+i}}\cdot \tr\left( P(\tau\cdot \tau') \left(\frac{(d+1)\swap - I^{\otimes 2}}{d+2} \cdot \left(I\otimes \sigma + \sigma \otimes I + \tr(\sigma) \cdot I\otimes I\right)\right)^{\otimes (k-i)} \otimes \sigma^{\otimes 2i} \right)\\
    =\ & \tr\left( P(\tau\cdot \tau') \left(\frac{(d+1)\swap - I^{\otimes 2}}{d+2} \cdot \left(I\otimes \widetilde{\sigma} + \widetilde{\sigma} \otimes I + I\otimes I\right)\right)^{\otimes (k-i)} \otimes \widetilde{\sigma}^{\otimes 2i} \right)\\
    = \ & \tr\Bigg( P(\tau\cdot \tau') \cdot \underbrace{\left(\frac{(d+1)\swap - I^{\otimes 2}}{d+2}\right)^{\otimes (k-i)}}_{\text{a sum of permutations}}\otimes I^{\otimes 2i} \cdot \left[\left(I\otimes \widetilde{\sigma} + \widetilde{\sigma} \otimes I + I \otimes I\right)^{\otimes (k-i)} \otimes \widetilde{\sigma}^{\otimes 2i} \right] \Bigg). 
\end{align*}
Let us expand
$\big(\frac{(d+1)\swap - I\otimes I}{d+2}\big)^{\otimes (k-i)}$ into a sum of permutations with positive and negative coefficients. 
Each of the positive coefficients is a product of $\frac{d+1}{d+2}$ and $\frac{1}{d+2}$. 
We would like to apply \Cref{item:always-real} in~\Cref{lem:perm_PSD}: each term with a positive coefficient can be upper bounded by replacing $\frac{d+1}{d+2}$ with $1$ and $\frac{1}{d+2}$ with $\frac{1}{d}$, and each negative coefficient can simply be replaced with any positive number. 
As a result, 
\begin{equation*}
    \frac{Q(i, \tau, \tau')}{\tr(\sigma)^{k+i}}
    \leq \tr\left( P(\tau\cdot \tau')\cdot \left(\swap + \frac{I\otimes I}{d}\right)^{\otimes (k-i)}\otimes I^{\otimes 2i} \cdot \left(I\otimes \widetilde{\sigma} + \widetilde{\sigma}\otimes I + I\otimes I\right)^{\otimes (k-i)} \otimes \widetilde{\sigma}^{\otimes 2i} \right). 
\end{equation*}
Since now $\left(\swap + \frac{I\otimes I}{d}\right)^{\otimes (k-i)}$ is a sum of permutations with only positive coefficients, we can apply \Cref{item:increasing} in~\Cref{lem:perm_PSD} and get 
\begin{equation*}
    \frac{Q(i, \tau, \tau')}{\tr(\sigma)^{k+i}}
    \leq \tr\left( P(\tau\cdot \tau') \cdot \left(\swap + \frac{I\otimes I}{d}\right)^{\otimes (k-i)}\otimes I^{\otimes 2i} \cdot \left(3I\otimes I\right)^{\otimes (k-i)} \otimes \widetilde{\sigma}^{\otimes 2i} \right). 
\end{equation*}
Now let us apply~\Cref{lem:cycles_swaps} and separate out the term without any SWAPs, i.e. $(\frac{I\otimes I}{d})^{\otimes (k-i)}$, and get 
\begin{align*}
    \frac{Q(i, \tau, \tau')}{\tr(\sigma)^{k+i}}
    &\leq 3^{k-i}\cdot \left( (2^{k-i} - 1)\cdot d^{k-i-1}\cdot \tr(\widetilde{\sigma}^{2i}) + \frac{1}{d^{k-i}} \cdot \tr\left( P(\tau\cdot \tau') \cdot I^{\otimes 2(k-i)} \otimes \widetilde{\sigma}^{\otimes 2i} \right) \right) \\
    &= 3^{k-i}\cdot \left( (2^{k-i} - 1)\cdot d^{k-i-1}\cdot \tr(\widetilde{\sigma}^{2i}) + \frac{1}{d^{k-i}} \cdot  (\tr(\widetilde{\sigma}^i))^2 \right). 
\end{align*}
Since $(\tr(\widetilde{\sigma}^{i}))^2 \leq d\cdot \tr(\widetilde{\sigma}^{2i})$ and $k-i-1\geq 0$, we finally have
\begin{equation}\label{eq:individual_term_in_var_bound}
    Q(i, \tau, \tau') \leq 3^{k-i}\cdot 2^{k-i} \cdot d^{k-i-1}\cdot \tr(\sigma)^{k+i} \cdot \tr(\widetilde{\sigma}^{2i}) \leq 6^k \cdot d^{k-i-1} \cdot \tr(\sigma^{2i}).
\end{equation}
Plugging this into \Cref{eq:estimator_squared_new}, we have that
\begin{equation}\label{eq:for-use-in-intro?}
    \Var[\bZ_k] \leq \frac{1}{\binom{n}{k}} \sum_{i=0}^{k-1} \binom{k}{k-i} \binom{n-k}{i}\cdot 6^k \cdot d^{k-i-1} \cdot \tr(\sigma^{2i}).
\end{equation}
Now we bound the coefficients in this expression.
Since $\binom{k}{i}\leq 2^k$ and $k! / (i!)\leq k^{k-i}$, we have
\begin{equation}\label{eq:use-this-in-intro}
    \frac{1}{\binom{n}{k}}\cdot \binom{k}{k-i} \binom{n-k}{i} \leq \frac{k!}{(n-k)^k} \cdot \binom{k}{i} \frac{(n-k)^i}{i!}\leq \left(\frac{k}{n-k}\right)^{k-i} \cdot 2^k. 
\end{equation}
Since $k\leq n/2$, 
we can further simplify the variance of $\bZ_k$ as
\begin{equation*}
\eqref{eq:for-use-in-intro?} \leq \frac{12^k}{d} \sum_{i=0}^{k-1}  \left(\frac{dk}{n-k}\right)^{k-i} \tr(\sigma^{2i}) \leq \frac{24^k}{d} \sum_{i=0}^{k-1} \left(\frac{k d}{n}\right)^{k-i} \tr(\sigma^{2i}). 
\end{equation*}
This completes the proof of~\Cref{thm:generalmomentvar}.

\subsection{Moment estimation on a sub-normalized state}
\label{sec:subnormalized-moment-estimation}

For our spectrum learning algorithm, we use a slight variant of the moment estimation procedure detailed in~\Cref{def:moment_estimator}.
Let $\{\Pi, \overline{\Pi}\}$ be a projective measurement which approximately splits the spectrum of $\rho$ into the large and small buckets.
Throughout this section, we will write $\sigma \coloneqq \overline{\Pi} \rho \overline{\Pi}$ for the ``small bucket'' part of $\rho$.
We would like to estimate the moments of the small eigenvalues within $\sigma$.
To do so, we will estimate $\sigma$ using a natural variant of the uniform POVM tomography algorithm (\Cref{def:uniform-povm-alg}) which first conditions on the $\overline{\Pi}$ outcome.

\begin{definition}[Conditioned uniform POVM]
Given a projective measurement $\{\Pi, \overline{\Pi}\}$,
we define the \emph{conditioned uniform POVM} via the following algorithm,
which acts on a mixed state $\rho$.
\begin{enumerate}
    \item Perform the projective measurement $\{\Pi, \overline{\Pi}\}$ on $\rho$.
    \item If the $\Pi$ outcome is observed, output $\bot$.
    \item If the $\overline{\Pi}$ outcome is observed, the state collapses to $\sigma/\tr(\sigma)$.
    Perform the uniform POVM on the collapsed state and receive the outcome $\ket{\bu} \in \C^d$. Output $\ket{\bu}$.
\end{enumerate}
The output probabilities of this measurement can be described as follows.
First, $\Pi$ is observed with probability $\tr(\Pi \cdot \rho)$ and $\overline{\Pi}$ is observed with probability $\tr(\overline{\Pi} \cdot \rho) = \tr(\sigma)$.
Next, conditioned on observing $\overline{\Pi}$, a fixed unit vector $\ket{u} \in \C^d$ is observed with measure
\begin{equation*}
d \cdot \bra{u} \frac{\sigma}{\mathrm{tr}(\sigma)} \ket{u} \diff u,
\end{equation*}
where $\diff u$ is the Haar measure on unit vectors.
Hence, this measurement has the following output distribution:
\begin{itemize}
    \item[$\circ$] Output $\bot$ with probability $\tr(\Pi \cdot \rho)$, and;
    \item[$\circ$] Output a unit vector $\ket{u} \in \C^d$ with measure $d \cdot \bra{u} \sigma \ket{u} \diff u$.
\end{itemize}
We refer to this distribution over vectors in $\C^d$ (and $\bot$) as $A(\Pi, \rho)$.
\end{definition}

We note that this measurement (and the estimator of $\rho$ that we will define based on it) is essentially the same as the ``projected estimator defined on the subspace $\overline{\Pi}$'' from \cite[Definition 5.6]{CHL+23}, except that in step 3 of our algorithm, we are performing the uniform POVM on the whole space $\C^d$ rather than just the subspace $\overline{\Pi}$.
This is merely out of simplicity; since we are typically in the regime where $\Pi$ has rank $d\cdot (\log\log d)^2/\log^2(d) \ll d$ (at least when $\epsilon$ is a small constant), $\overline{\Pi}$ will consist of almost the entire space, and so there should be little difference between performing the uniform POVM on $\overline{\Pi}$ or on all of $\C^d$.

The conditioned uniform POVM
yields a natural unbiased estimator for $\overline{\Pi} \rho \overline{\Pi}$.

\begin{proposition}[An unbiased estimator from the conditioned uniform POVM] \label{prop:conditioned-uniform-POVM}
    Given $\rho$ and $\{\Pi, \overline{\Pi}\}$,
    suppose we measure $\rho$ with the conditioned uniform POVM and let $\ket{\bu} \sim A(\Pi, \rho)$ be the outcome.
    Set
    \begin{equation*}
        \widehat{\bsigma}
        =\left\{\begin{array}{cl}
        (d+1) \cdot \ketbra{\bu} - I & \text{if } \ket{\bu} \neq \bot,\\
        0
        & \text{if } \ket{\bu} = \bot.
        \end{array}\right.
    \end{equation*}
    Then $\widehat{\bsigma}$ is an unbiased estimator for $\sigma$, i.e.
    $\E[\widehat{\bsigma}] = \sigma$.
    Further, we can compute its second moment:
    \begin{equation*}
        \E[\widehat{\bsigma} \otimes \widehat{\bsigma}] = \frac{1}{d+2} 
         \cdot ((d+1) \cdot \swap - I) \cdot \Big(\tr(\sigma) \cdot I \otimes I + \sigma \otimes I + I \otimes \sigma\Big).
    \end{equation*}
\end{proposition}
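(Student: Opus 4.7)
The plan is to reduce both claims to the corresponding moment computations for the ordinary uniform POVM, namely \Cref{prop:uniform-POVM-first-moment} and \Cref{prop:uniform-povm-tomograph-second-moment}, by conditioning on the outcome of the first measurement $\{\Pi,\overline{\Pi}\}$. Write $\widetilde{\sigma} \coloneqq \sigma / \tr(\sigma)$, which is the properly normalized post-measurement state after observing $\overline{\Pi}$. The outcome of the conditioned uniform POVM is $\bot$ with probability $\tr(\Pi \rho) = 1 - \tr(\sigma)$, and otherwise $\ket{\bu}$ is distributed as the outcome of the ordinary uniform POVM applied to $\widetilde{\sigma}$, which occurs with probability $\tr(\sigma)$.

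For the first moment, on the $\bot$ event $\widehat{\bsigma}=0$, and conditional on the non-$\bot$ event, $\widehat{\bsigma}=(d+1)\ketbra{\bu}-I$ is exactly the unbiased estimator of $\widetilde{\sigma}$ coming from the uniform POVM. By \Cref{prop:uniform-POVM-first-moment} applied to $\widetilde{\sigma}$, its conditional expectation is $\widetilde{\sigma}$, and so by the law of total expectation
\begin{equation*}
    \E[\widehat{\bsigma}] = \tr(\sigma)\cdot \widetilde{\sigma} + (1-\tr(\sigma))\cdot 0 = \sigma.
\end{equation*}

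For the second moment, the same conditioning yields
\begin{equation*}
    \E[\widehat{\bsigma}\otimes \widehat{\bsigma}] = \tr(\sigma)\cdot \E\bigl[((d+1)\ketbra{\bu}-I)\otimes ((d+1)\ketbra{\bu}-I)\,\big|\,\text{non-}\bot\bigr],
\end{equation*}
and the conditional expectation on the right is exactly the quantity computed in \Cref{prop:uniform-povm-tomograph-second-moment} applied to $\widetilde{\sigma}$, namely $\frac{1}{d+2}\cdot((d+1)\cdot\swap-I)\cdot(I\otimes I+\widetilde{\sigma}\otimes I+I\otimes \widetilde{\sigma})$. Multiplying through by $\tr(\sigma)$ converts the $\widetilde{\sigma}$'s back to $\sigma$'s and the $I\otimes I$ coefficient to $\tr(\sigma)\cdot I\otimes I$, yielding the claimed identity.

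There is essentially no obstacle here: the proposition is a bookkeeping statement, since the conditioned uniform POVM is definitionally the uniform POVM on the post-measurement state weighted by the probability of reaching it. The only small care needed is to confirm that the $\bot$ branch contributes nothing to either moment (immediate, since $\widehat{\bsigma}=0$ there), so that both expectations cleanly factor as $\tr(\sigma)$ times the corresponding uniform-POVM moment of $\widetilde{\sigma}$.
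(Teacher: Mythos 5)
Your proof is correct and follows essentially the same route as the paper's: condition on the outcome of $\{\Pi,\overline{\Pi}\}$, note the $\bot$ branch contributes zero, and apply the first- and second-moment formulas for the ordinary uniform POVM to the normalized post-measurement state $\sigma/\tr(\sigma)$, then absorb the factor of $\tr(\sigma)$. The only cosmetic difference is that you invoke \Cref{prop:uniform-POVM-first-moment} directly where the paper cites the equivalent unbiasedness statement \Cref{prop:uniform-povm-is-an-unbiased-estimator}.
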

\begin{proof}
    We have
    \begin{equation*}
        \E[\widehat{\bsigma}]
        = \Pr[\Pi] \cdot \E[\widehat{\bsigma} \mid \text{outcome $\Pi$}] + \Pr[\overline{\Pi}] \cdot \E[\widehat{\bsigma} \mid \text{outcome $\overline{\Pi}$}].
    \end{equation*}
    The first expectation is 0 because $\ket{\bu} = \bot$ given outcome $\Pi$.
    Given outcome $\overline{\Pi}$,
    $\widehat{\bsigma}$ is an unbiased estimator for $\sigma/\tr(\sigma)$ by \Cref{prop:uniform-povm-is-an-unbiased-estimator}.
    Since $\Pr[\overline{\Pi}] = \tr(\sigma)$, this completes the proof.

    The second moment follows similarly: it is equal to $\Pr[\overline{\Pi}] \cdot \E[\widehat{\bsigma} \otimes \widehat{\bsigma} \mid \text{outcome $\overline{\Pi}$}]$.
    The probability is $\tr(\sigma)$, and the expectation is the second moment of the uniform POVM tomography algorithm applied to $\sigma / \tr(\sigma)$.
    The statement follows from~\Cref{prop:uniform-povm-tomograph-second-moment}.
\end{proof}

This motivates the following natural estimator for the $k$-th moment $\tr(\sigma^k)$ of $\sigma$, which is identical to \Cref{def:moment_estimator} except the uniform POVM is replaced by the conditioned uniform POVM.

\begin{definition}[Conditioned moment estimator]
    \label{def:conditioned-moment-estimator}
    Let $\{\Pi, \overline{\Pi}\}$ be a projective measurement.
    Suppose we have $n$ copies of $\rho$.
    For each $1 \leq i \leq n$, perform the conditioned uniform POVM on $\rho$, and let $\widehat{\bsigma}_i$ be the corresponding unbiased estimator of $\sigma$, as in \cref{prop:conditioned-uniform-POVM}.
    The \emph{conditioned $k$-th moment estimator} is defined as
    \begin{equation*}
        \bY_k \coloneqq \frac{1}{n(n-1)\cdots (n-k+1)}\cdot\sum_{\text{distinct }i_1, i_2, \ldots, i_k \in [n]} \tr\left(\widehat{\bsigma}_{i_1} \widehat{\bsigma}_{i_2} \cdots \widehat{\bsigma}_{i_k}\right).
    \end{equation*}
\end{definition}

As in the case for the normal $k$-th moment estimator, because each $\widehat{\bsigma}_i$ is an independent, unbiased estimator for $\sigma$, $\bY_k$ is an unbiased estimator for $\tr(\sigma^k)$.
We have the following bound on the variance of the conditioned moment estimator.

\begin{proposition}\label{prop:conditionedmomentvar}
    For any positive integer $k$ at most $n/2$, the variance of $\bY_k$ is at most
    \begin{align*}
        \frac{24^k}{d} \sum_{j=0}^{k-1} \left(\frac{k d}{n}\right)^{k-j} \tr(\sigma^{2j}).
    \end{align*}
\end{proposition}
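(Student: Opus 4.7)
The plan is to observe that the variance analysis for $\bY_k$ reduces directly to the one already performed for $\bZ_k$ in \Cref{thm:generalmomentvar}, with essentially no modifications. The reason is that the proof of \Cref{thm:generalmomentvar} only uses two properties of the single-copy estimators $\widehat{\bsigma}_i$: (i) they are i.i.d.\ and unbiased for $\sigma$, which gives $\E[\widehat{\bsigma}_i \otimes \widehat{\bsigma}_j] = \sigma \otimes \sigma$ when $i \neq j$; and (ii) their second moment $\E[\widehat{\bsigma}_i \otimes \widehat{\bsigma}_i]$ has the specific form
\begin{equation*}
    \frac{(d+1)\cdot \swap - I^{\otimes 2}}{d+2} \cdot \Big(\tr(\sigma)\cdot I\otimes I + \sigma \otimes I + I\otimes \sigma\Big).
\end{equation*}

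First, I would point out that $\bY_k$ is defined by exactly the same U-statistic expression as $\bZ_k$, only with the single-copy estimators $\widehat{\bsigma}_i$ arising from the conditioned uniform POVM rather than the uniform POVM. Property (i) is immediate from \Cref{prop:conditioned-uniform-POVM}, which states $\E[\widehat{\bsigma}_i] = \sigma$ and, by independence across the $n$ copies of $\rho$, the factorization holds for $i \neq j$. Property (ii) is \emph{also} given directly by \Cref{prop:conditioned-uniform-POVM}, whose computed second moment matches precisely the expression above; note in particular that the $\tr(\sigma)\cdot I\otimes I$ term in front is exactly what one gets for a subnormalized $\sigma$.

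Next, I would invoke the footnote within the proof of \Cref{thm:generalmomentvar}, which explicitly notes that the argument there never uses $\tr(\sigma) = 1$ and only uses $\tr(\sigma) \leq 1$. Since in our setting $\sigma = \overline{\Pi} \rho \overline{\Pi}$ satisfies $\tr(\sigma) = \tr(\overline{\Pi} \rho) \leq \tr(\rho) = 1$, this condition is satisfied. Consequently, the entire chain of steps --- expanding $\E[\bY_k^2]$ as a sum indexed by $\bt = |\bi \cap \bj|$, rewriting each term via the representation $P(\cdot)$ of $S_{2k}$, bounding each $Q(i,\tau,\tau')$ through \Cref{lem:perm_PSD} and \Cref{lem:cycles_swaps} to get $Q(i,\tau,\tau') \leq 6^k \cdot d^{k-i-1}\cdot \tr(\sigma^{2i})$, and finally plugging into the variance sum with the binomial coefficient bound of \Cref{eq:use-this-in-intro} --- transfers verbatim.

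There is no real obstacle: the main conceptual content is the recognition that \Cref{prop:conditioned-uniform-POVM} was set up precisely so that $\widehat{\bsigma}_i$ has the same first and second moment structure as in the normalized case, with the $\tr(\sigma)$ factor now appearing explicitly in the identity term. In the write-up, I would therefore state the proof as a single paragraph that notes these two facts and then says ``the proof of \Cref{thm:generalmomentvar} applies verbatim, giving the same bound.''
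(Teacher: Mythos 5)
Your proposal is correct and matches the paper's own proof exactly: the paper also observes that the argument in \Cref{sec:variance_proof} uses only the first and second moments of the single-copy estimators, that \Cref{prop:conditioned-uniform-POVM} shows these coincide (up to the explicit $\tr(\sigma)$ factor) with the unconditioned case, and that the footnote already covers the subnormalized setting $\tr(\sigma) \leq 1$. Nothing further is needed.
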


The proof of this proposition proceeds identically to the proof of~\Cref{thm:generalmomentvar} given in~\Cref{sec:variance_proof}.
All that is used in this proof is the first and second moments of the $\widehat{\bsigma}$'s, which by \Cref{prop:conditioned-uniform-POVM}, are identical to the un-conditioned uniform POVM estimator (up to normalization, which the proof handles).

\section{The bucketing algorithm}\label{sec:bucket}

The goal of a bucketing algorithm is to find a projector $\bPi$ such that the two-outcome measurement $\{\bPi, \overline{\bPi}\}$, where $\overline{\bPi} = I - \bPi$, will split the spectrum of $\rho$ into a bucket of large eigenvalues and a bucket of small eigenvalues without incurring much disturbance to the original spectrum of $\rho$.
We suggest the following bucketing algorithm based on the uniform POVM.

\begin{definition}[Uniform POVM bucketing algorithm]
    Given a threshold $0 \leq B \leq 1$ and $n$ copies of $\rho$, the \emph{uniform POVM bucketing algorithm} acts as follows.
    \begin{enumerate}
        \item Run the uniform POVM tomography algorithm on $\rho^{\otimes n}$ to produce an estimator $\widehat{\brho}$ of $\rho$.
        \item Set $\bPi$ to be the projector onto the eigenvectors of $\widehat{\brho}$ with eigenvalues at least $B$.
        \item Output the estimator $\widehat{\brho}$ and the projective measurement $\{\bPi, \overline{\bPi}\}$.
    \end{enumerate}
\end{definition}

For convenience, we have chosen to have the uniform POVM bucketing algorithm additionally output the estimate $\widehat{\brho}$ as it turns out that this will already allow us to estimate the large eigenvalues of $\rho$,
saving us the step of separately estimating them later.
The following theorem describes the performance of the uniform POVM bucketing algorithm.

\begin{theorem}[Performance of the uniform POVM bucketing algorithm]\label{thm:bucket-algo}
Given a threshold $0 \leq B \leq 1$,
suppose we perform the uniform POVM bucketing algorithm on $n$ copies of $\rho$
and receive outputs $\widehat{\brho}$ and $\{\bPi, \overline{\bPi}\}$.
Let $\br$ be the rank of $\bPi$.
Then, when 
$n = C_2 dB^{-2}\eps^{-2}$ for a universal constant $C_2>0$,
with probability $0.99$, the following hold simultaneously:
\begin{enumerate} 
    \item(Learning the large eigenvalues):\label{item:split-large-pca} the large eigenvalues of $\rho$ can be estimated to $\eps$ error TV distance, i.e.,
    \begin{equation}\label{eq:split-large-pca}
        \dtv{\spec(\widehat{\brho})_{\leq \br}}{\spec(\rho)_{\leq \br}} \leq \eps
    \end{equation}
    and $\br\leq 3/(2B)$. 

    \item(Low misclassification error):\label{item:split-small-misclassification} the small eigenvalues of $\rho$ are classified into the small bucket, i.e.,
    \begin{equation}\label{eq:split-small-misclassification}
        \norm{\overline{\bPi}\rho\overline{\bPi} }_\infty \leq (1+\eps)B.
    \end{equation}

    \item(Low alignment error):\label{item:split-full-disturbance} the full spectrum of $\rho$ is disturbed by at most $\eps$ in TV distance, i.e.,
    \begin{equation}\label{eq:split-full-disturbance}
        \dtv{\spec(\bPi \rho \bPi + \overline{\bPi} \rho \overline{\bPi} )}{\spec(\rho)} \leq \epsilon.
    \end{equation}
\end{enumerate}
Here, we use $\spec(\cdot)$ to denote the eigenvalues of a matrix sorted from largest to smallest and $\spec(\cdot)_{\leq \br}$ to denote the $\br$ largest eigenvalues in sorted order.
\end{theorem}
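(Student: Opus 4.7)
The only probabilistic ingredient is~\Cref{thm:opnormconcentration}: with $n = C_2 dB^{-2}\epsilon^{-2}$ for a sufficiently large universal constant $C_2$, specifically one for which $\delta := (C_1/\sqrt{C_2})\cdot B\epsilon \leq \min\{B/3,\, 2B\epsilon/3\}$, we have $\lVert \widehat{\brho} - \rho \rVert_\infty \leq \delta$ with probability $0.99$. All three conclusions of the theorem will be derived deterministically from this single event, so no further union bound is needed.

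For~\Cref{item:split-large-pca}, let $\alpha_1 \geq \cdots \geq \alpha_d$ and $\widehat{\balpha}_1 \geq \cdots \geq \widehat{\balpha}_d$ denote the sorted eigenvalues of $\rho$ and $\widehat{\brho}$; Weyl's inequality then gives $|\widehat{\balpha}_i - \alpha_i| \leq \delta$ for every $i$. Because each summand $(d+1)\ketbra{\bu_i}-I$ has trace $1$, so does $\widehat{\brho}$, and thus $\br B \leq \sum_{i=1}^{\br}\widehat{\balpha}_i \leq \sum_{i=1}^{\br}\alpha_i + \br\delta \leq 1 + \br\delta$, which rearranges (using $\delta \leq B/3$) to $\br \leq 3/(2B)$. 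The TV bound then follows from $\frac{1}{2}\sum_{i=1}^{\br}|\widehat{\balpha}_i - \alpha_i| \leq \br\delta/2 \leq \epsilon$. For~\Cref{item:split-small-misclassification}, observe that $\overline{\bPi}$ is by construction the spectral projector of $\widehat{\brho}$ onto its eigenspaces with eigenvalue $<B$, so $\bra{v}\widehat{\brho}\ket{v} < B$ for every unit vector $\ket{v}$ in the range of $\overline{\bPi}$; since $\overline{\bPi}\rho\overline{\bPi}$ is PSD, its operator norm equals the maximum of $\bra{v}\rho\ket{v}$ over such $\ket{v}$, which is at most $\bra{v}\widehat{\brho}\ket{v} + \delta < B + \delta \leq (1+\epsilon)B$.

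The main obstacle is~\Cref{item:split-full-disturbance}, where the alignment error demands a trace-norm bound on off-diagonal blocks of $\rho$ starting from only an $\ell_\infty$ closeness guarantee. Writing $\rho' := \bPi\rho\bPi + \overline{\bPi}\rho\overline{\bPi}$, the fact that TV distance of sorted spectra is bounded by trace distance (see~\cite[Proposition 2.2]{OW15}) gives $\dtv{\spec(\rho')}{\spec(\rho)} \leq \dtr(\rho,\rho')$. Expanding $\rho = (\bPi+\overline{\bPi})\rho(\bPi+\overline{\bPi})$ shows that $\rho-\rho'$ is block anti-diagonal with $\bPi\rho\overline{\bPi}$ (and its adjoint) in the off-diagonal blocks, so $\dtr(\rho,\rho') = \lVert \bPi\rho\overline{\bPi} \rVert_1$. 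The crucial observation is that $\bPi$ is a spectral projector of $\widehat{\brho}$ and hence commutes with it, so $\bPi\widehat{\brho}\overline{\bPi} = 0$ and therefore $\bPi\rho\overline{\bPi} = \bPi(\rho - \widehat{\brho})\overline{\bPi}$. This matrix has rank at most $\br$, so applying $\lVert A \rVert_1 \leq \mathrm{rank}(A)\cdot \lVert A \rVert_\infty$ and $\lVert \bPi \rVert_\infty, \lVert \overline{\bPi} \rVert_\infty \leq 1$ gives
\begin{equation*}
\lVert \bPi\rho\overline{\bPi} \rVert_1 \leq \br\cdot \lVert \bPi(\rho - \widehat{\brho})\overline{\bPi} \rVert_\infty \leq \br\cdot \delta \leq \tfrac{3}{2B}\cdot \delta \leq \epsilon,
\end{equation*}
using $\delta \leq 2B\epsilon/3$ in the last step. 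This completes all three parts.
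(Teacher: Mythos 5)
Your proof is correct, and items 1 and 2 follow essentially the same route as the paper (apply the $\ell_\infty$ concentration bound to get $\delta$-closeness, then use Weyl's inequality and the definition of $\bPi$). Your rank argument for $\br \leq 3/(2B)$ — bounding $\br B$ by $\sum_{i\leq\br}\alpha_i + \br\delta \leq 1 + \br\delta$ directly from Weyl on $\rho$ and $\widehat{\brho}$ — is a bit more streamlined than the paper's, which applies Weyl to the compressed matrices $\bPi\rho\bPi$ and $\bPi\widehat{\brho}\bPi$ and then observes $\rho$ has at most $3/(2B)$ eigenvalues exceeding $2B/3$.

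The real divergence is in item 3, where you take a genuinely different and cleaner route. The paper proves the alignment bound by invoking Cauchy's interlacing theorem to pin down the signs of $\alpha_i - \beta_i$ on the two buckets, then reduces to a computation for $\dtv{\spec(\bPi\rho\bPi)_{\leq\br}}{\spec(\rho)_{\leq\br}}$ via a triangle inequality through $\spec(\widehat{\brho})_{\leq\br}$. You instead bound the TV distance of sorted spectra by $\dtr(\rho, \rho')$, observe that $\rho - \rho'$ is the block anti-diagonal $\bPi\rho\overline{\bPi} + \overline{\bPi}\rho\bPi$ so its trace norm is $2\|\bPi\rho\overline{\bPi}\|_1$, use that $\bPi$ commutes with $\widehat{\brho}$ to substitute $\bPi\rho\overline{\bPi} = \bPi(\rho - \widehat{\brho})\overline{\bPi}$, and finish with $\|A\|_1 \leq \mathrm{rank}(A)\cdot\|A\|_\infty$. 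This avoids the interlacing theorem entirely and makes the trace-distance structure of the alignment error completely transparent; both arguments arrive at the same $\br\delta \leq \eps$ bound. (One minor point: you implicitly use $\eps\leq 1$ when checking that $\delta\leq 2B\eps/3$ forces $\delta\leq B/3$; this is WLOG since TV distance is at most $1$, and the paper makes the same assumption.)
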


Let us interpret this theorem. 
Our goal is to bucket $\rho$ into 
the large bucket, with eigenvalues $\geq B$,
and the small bucket, with eigenvalues $< B$.
Since $\rho$ is a density matrix,
it can have at most $s = 1/B$ eigenvalues which are $\geq B$,
so what we would like to do is perform rank-$s$ PCA on $\rho$ to discover the best rank-$s$ approximation to $\rho$.
Indeed it is known that the uniform POVM tomography algorithm can give rank-$s$ PCA-style guarantees with $n = O(ds^2 \epsilon^{-2}) = O(dB^{-2} \epsilon^{-2})$ copies~\cite[Theorem 4]{GKKT20},
and we show that this many copies is also sufficient for it to perform bucketing well.
\Cref{item:split-large-pca} implies that the bucketing algorithm naturally achieves a PCA-style result,
in that it learns the eigenvalues of the largest rank-$\br$ part of the state.
\Cref{item:split-small-misclassification,item:split-full-disturbance} show that it has small misclassification error and alignment error, respectively;
note that the misclassification error guarantee in \Cref{item:split-small-misclassification} is only stated for the small bucket $\overline{\bPi} \rho \overline{\bPi}$, which is because via \Cref{item:split-large-pca} we have already learned the eigenvalues on the large bucket $\bPi \rho \bPi$.

Note that for our purposes with bucketing in~\Cref{sec:full_algo}, it suffices to relax \Cref{item:split-small-misclassification} to $\norm{\overline{\bPi}\rho\overline{\bPi} }_\infty \leq 2B$. However, bucketing must only incur a small disturbance to the original spectrum of $\rho$, so \Cref{item:split-full-disturbance} is the main bottleneck here.

To prove \Cref{thm:bucket-algo}, we will need the following two well-known facts about matrices.
Both of these use the notation $\lambda_i(\cdot)$, which refers to the $i$-th largest eigenvalue of a matrix. 
\begin{theorem}[Weyl's inequality]
    For any $d\times d$ Hermitian matrices $A$ and $B$ and $i\in [d]$,
    \begin{equation*}
        |\lambda_i(A+B) - \lambda_i(A)| \leq \norm{B}_\infty. 
    \end{equation*}
\end{theorem}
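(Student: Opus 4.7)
The plan is to derive Weyl's inequality from the Courant--Fischer min-max characterization of eigenvalues of a Hermitian matrix. For any Hermitian $M \in \C^{d \times d}$ with eigenvalues $\lambda_1(M) \geq \cdots \geq \lambda_d(M)$, this characterization states that
\begin{equation*}
    \lambda_i(M) = \max_{\substack{S \subseteq \C^d \\ \dim S = i}} \, \min_{\substack{v \in S \\ \|v\|=1}} v^\dagger M v.
\end{equation*}
I would take this as a standard fact (it is a textbook consequence of the spectral theorem applied to $M$).

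First I would record the pointwise observation that since $B$ is Hermitian, the Rayleigh quotient $v^\dagger B v$ is a real number bounded in absolute value by the largest absolute eigenvalue of $B$, i.e.\ $|v^\dagger B v| \leq \norm{B}_\infty$ for every unit vector $v$. Consequently, for every unit $v$,
\begin{equation*}
    v^\dagger A v - \norm{B}_\infty \;\leq\; v^\dagger(A+B)v \;\leq\; v^\dagger A v + \norm{B}_\infty.
\end{equation*}

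Next I would apply these pointwise inequalities inside the min-max formula. For the upper bound, for any $i$-dimensional subspace $S$,
\begin{equation*}
    \min_{v \in S,\, \|v\|=1} v^\dagger (A+B) v \;\leq\; \min_{v \in S,\, \|v\|=1} v^\dagger A v + \norm{B}_\infty,
\end{equation*}
and taking the max over all $i$-dimensional subspaces $S$ yields $\lambda_i(A+B) \leq \lambda_i(A) + \norm{B}_\infty$. The matching lower bound $\lambda_i(A+B) \geq \lambda_i(A) - \norm{B}_\infty$ follows by the symmetric argument (or by swapping the roles of $A$ and $A+B$, viewing $A = (A+B) + (-B)$ and using $\norm{-B}_\infty = \norm{B}_\infty$). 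Combining the two bounds gives $|\lambda_i(A+B) - \lambda_i(A)| \leq \norm{B}_\infty$.

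There is no real obstacle here --- the only nontrivial ingredient is the Courant--Fischer characterization itself, which is standard and I would cite rather than reprove. The entire argument is essentially a one-line monotonicity manipulation once min-max is available.
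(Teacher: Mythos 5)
Your proof is correct: the pointwise bound $|v^\dagger B v| \leq \norm{B}_\infty$ for Hermitian $B$, pushed through the Courant--Fischer max-min characterization (with eigenvalues in decreasing order, matching the paper's convention for $\lambda_i(\cdot)$), gives both directions of the inequality, and the symmetric argument via $A = (A+B) + (-B)$ closes the lower bound. The paper itself states Weyl's inequality as a well-known fact without proof, so there is no internal argument to compare against; your min-max derivation is the standard textbook proof and fills that gap correctly.
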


\begin{theorem}[Cauchy's interlacing theorem]\label{thm:cauchy_interlace}
    For any $d\times d$ Hermitian matrix $A$ and projection matrix $\Pi$ of rank $r$, 
    \begin{equation*}
        \lambda_i(A) \geq \lambda_i(\Pi A\Pi) \geq \lambda_{d-r+i}(A), \quad \text{for all } i \in \{1, \cdots, r\}.  
    \end{equation*}
\end{theorem}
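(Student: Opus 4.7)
The plan is to derive Cauchy's interlacing theorem from the Courant–Fischer min-max characterization of eigenvalues: for any $d \times d$ Hermitian matrix $M$ and $1 \leq i \leq d$,
\[
\lambda_i(M) = \max_{\substack{V \subseteq \mathbb{C}^d \\ \dim V = i}} \min_{\substack{\ket{v} \in V \\ \|\ket{v}\| = 1}} \bra{v} M \ket{v} = \min_{\substack{V \subseteq \mathbb{C}^d \\ \dim V = d - i + 1}} \max_{\substack{\ket{v} \in V \\ \|\ket{v}\| = 1}} \bra{v} M \ket{v}.
\]
First I would reduce to the restriction of $A$ to $W := \mathrm{range}(\Pi)$. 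Since $\Pi \ket{v} = \ket{v}$ for any $\ket{v} \in W$ and $\Pi \ket{v} = 0$ for $\ket{v} \in W^\perp$, the matrix $\Pi A \Pi$ annihilates $W^\perp$ and preserves $W$, with $\bra{v} \Pi A \Pi \ket{v} = \bra{v} A \ket{v}$ for $\ket{v} \in W$. In the intended application $A = \rho \succeq 0$, the restriction $A|_W$ is also PSD, so its $r$ eigenvalues are precisely the top $r$ eigenvalues $\lambda_1(\Pi A\Pi), \dots, \lambda_r(\Pi A\Pi)$ of $\Pi A \Pi$ (the remaining $d - r$ eigenvalues are all zero).

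Next I would establish the upper bound $\lambda_i(\Pi A \Pi) \leq \lambda_i(A)$ by the max-min form applied to $A|_W$:
\[
\lambda_i(\Pi A \Pi) \;=\; \max_{\substack{U \subseteq W \\ \dim U = i}} \min_{\substack{\ket{v} \in U \\ \|\ket{v}\|=1}} \bra{v} A \ket{v} \;\leq\; \max_{\substack{U \subseteq \mathbb{C}^d \\ \dim U = i}} \min_{\substack{\ket{v} \in U \\ \|\ket{v}\|=1}} \bra{v} A \ket{v} \;=\; \lambda_i(A),
\]
where the inequality simply reflects that the outer maximization is over a smaller collection of $i$-dimensional subspaces (those contained in $W$) on the left.

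For the lower bound $\lambda_i(\Pi A \Pi) \geq \lambda_{d-r+i}(A)$, I would apply the min-max form to $A|_W$, noting that within the $r$-dimensional space $W$ the $i$-th eigenvalue is characterized by subspaces of dimension $r - i + 1$:
\[
\lambda_i(\Pi A \Pi) \;=\; \min_{\substack{U \subseteq W \\ \dim U = r-i+1}} \max_{\substack{\ket{v} \in U \\ \|\ket{v}\|=1}} \bra{v} A \ket{v} \;\geq\; \min_{\substack{U \subseteq \mathbb{C}^d \\ \dim U = r-i+1}} \max_{\substack{\ket{v} \in U \\ \|\ket{v}\|=1}} \bra{v} A \ket{v} \;=\; \lambda_{d-r+i}(A),
\]
using that $d - (d - r + i) + 1 = r - i + 1$ matches the index arithmetic in the ambient $d$-dimensional characterization.

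There is no substantive obstacle: this is a standard linear-algebraic fact and the entire proof boils down to noting that restricting the max/min to subspaces of $W \subseteq \mathbb{C}^d$ respectively weakens the maximum and strengthens the minimum in the Courant–Fischer variational formulas. The only mild care needed is matching up the index arithmetic between the two min-max representations, and recognizing that the statement as written implicitly uses $A|_W \succeq 0$ so that $\lambda_i(\Pi A \Pi)$ for $i \leq r$ can be identified with the eigenvalues of $A|_W$ (which holds in all of the paper's applications where $A$ is a density operator).
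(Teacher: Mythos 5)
The paper does not prove this statement at all: it is invoked as a standard linear-algebra fact (alongside Weyl's inequality) and used as a black box in the proofs of \Cref{thm:bucket-algo} and \Cref{lem:pca-implies-bucketing}, so there is no proof in the paper to compare against. Your Courant--Fischer argument is the standard proof and is correct: applying the max-min form to the compression $A|_W$ with $W = \mathrm{range}(\Pi)$ gives $\lambda_i(A|_W) \leq \lambda_i(A)$, and the min-max form (with the index check $d-(d-r+i)+1 = r-i+1$) gives $\lambda_i(A|_W) \geq \lambda_{d-r+i}(A)$.

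You also correctly flagged the one genuine subtlety, which the paper glosses over: as literally stated for an arbitrary Hermitian $A$, with $\lambda_i(\Pi A \Pi)$ meaning eigenvalues of the $d\times d$ matrix $\Pi A \Pi$, the upper bound can fail, since $\Pi A\Pi$ carries $d-r$ extra zero eigenvalues that may outrank negative eigenvalues of the compression (e.g.\ $A=-I$, $\Pi$ of rank $1$ gives $\lambda_1(\Pi A\Pi)=0 > \lambda_1(A)=-1$); the lower bound, by contrast, survives because padding the spectrum with zeros can only increase each ranked eigenvalue. Your observation that in every application here $A=\rho\succeq 0$, so that the top $r$ eigenvalues of $\Pi A\Pi$ coincide with those of $A|_W$, is exactly what makes the paper's phrasing harmless. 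So the proposal is complete and, if anything, slightly more careful than the paper's unproved statement.
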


\begin{proof}[Proof of~\cref{thm:bucket-algo}]
    Using~\cref{thm:opnormconcentration} with $n = (3C_1)^2B^{-2}\eps^{-2}d$, 
    we have that with probability $0.99$,
    \begin{equation}\label{eq:splitopnormbound}
        \lVert \widehat{\brho} - \rho \rVert_\infty \leq C_1 \cdot B\epsilon/(3C_1) = B\epsilon/3.
    \end{equation}
    We will use this bound throughout the proof. 

    It is tempting to believe that the rank of $\bPi$ should satisfy $\br \leq 1/B$ because $\rho$ is a density matrix so it can only have at most $1/B$ eigenvalues which are $B$ or greater.
    However, $\bPi$ is defined as the projector onto the eigenvalues of $\widehat{\brho}$, not $\rho$, which are larger than $B$, and $\widehat{\brho}$ is not even necessarily a density matrix (in particular, it is not necessarily PSD).
    That said, we can still show that the rank satisfies the weaker bound $\br \leq 3/(2B)$,
    and this turns out to be sufficient for our purposes.
    To see this, let us use~\Cref{eq:splitopnormbound} and apply Weyl's inequality with $A = \bPi \rho \bPi$ and $B = \bPi (\widehat{\brho} - \rho) \bPi$: 
    \begin{equation*}
        \abs{\lambda_{\br}(\bPi \widehat{\brho} \bPi) - \lambda_{\br}(\bPi \rho \bPi)} \leq \norm*{\bPi (\widehat{\brho} - \rho) \bPi}_\infty \leq \norm*{\widehat{\brho} - \rho}_\infty \leq B\eps/3.
    \end{equation*}
    Since $\lambda_{\br}(\bPi \widehat{\brho} \bPi) \geq B$ by the definition of $\bPi$, we have
    \begin{equation*}
        \lambda_{\br}(\bPi \rho \bPi) \geq \lambda_{\br}(\bPi \widehat{\brho} \bPi) - B\eps/3 \geq (1 - \eps/3)B
        \geq 2B/3,
    \end{equation*}
    where we used $\epsilon \leq 1$ in the last step.
    But $\rho$ is a density matrix, and so it can only have at most $3/(2B)$ eigenvalues which are at least $2B/3$.
    Thus, we have $\br \leq 3/(2B)$.

    We are now ready to prove \Cref{item:split-small-misclassification}. 
    Note that the definition of $\overline{\bPi}$ directly implies that $\lVert \overline{\bPi} \cdot \widehat{\brho}\cdot\overline{\bPi}\rVert_\infty \leq B$. 
    Then, using the triangle inequality and~\Cref{eq:splitopnormbound}: 
    \begin{align*}
        \lVert \overline{\bPi}  \rho\overline{\bPi}\rVert_\infty
        &\leq\lVert \overline{\bPi}\cdot (\rho - \widehat{\brho})\cdot\overline{\bPi}\rVert_\infty + \lVert \overline{\bPi}\cdot \widehat{\brho}\cdot\overline{\bPi}\rVert_\infty \leq\lVert \rho - \widehat{\brho} \rVert_\infty + B \leq (1+\epsilon/3)B. 
    \end{align*}
    Next, applying Weyl's inequality with $A = \rho$ and $B = \widehat{\brho} - \rho$, we see that
    \begin{equation*}
        |\lambda_i(\widehat{\brho}) - \lambda_i(\rho)| \leq \Vert \widehat{\brho} - \rho \Vert_{\infty}
        \leq B\epsilon/3.
    \end{equation*}
    Summing this over all $1 \leq i \leq \br$ and using the fact that $\br\leq 3/(2B)$, we have
    \begin{equation}\label{eq:gonna-use-soon}
        \dtv{\spec(\widehat{\brho})_{\leq \br}}{ \spec(\rho)_{\leq \br}} \leq \tfrac{1}{2} \br\cdot B\epsilon/3 \leq \epsilon/4. 
    \end{equation} 
    This proves~\Cref{item:split-large-pca}. 
    A similar argument shows that
    \begin{equation}\label{eq:also-gonna-use-soon}
        \dtv{\spec(\bPi\rho\bPi)_{\leq \br}}{\spec(\bPi\cdot\widehat{\brho}\cdot\bPi)_{\leq \br}} \leq \tfrac{1}{2}\br\cdot  \left\lVert \bPi\rho\bPi - \bPi\cdot\widehat{\brho}\cdot\bPi\right\rVert_\infty
        \leq \tfrac{1}{2}\br\cdot \left\lVert \widehat{\brho} - \rho\right\rVert_\infty \leq \epsilon/4. 
    \end{equation}
    By the definition of $\bPi$, we know that $\spec(\bPi\cdot\widehat{\brho}\cdot\bPi)_{\leq \br} = \spec(\widehat{\brho})_{\leq \br}$. Then by the triangle inequality, we have
    \begin{align*}
        &\dtv{\spec(\bPi\rho\bPi)_{\leq \br}}{\spec(\rho)_{\leq \br}}\\
        \leq {}&
        \dtv{\spec(\bPi\rho\bPi)_{\leq \br}}{\spec(\widehat{\brho})_{\leq \br}} + \dtv{\spec(\widehat{\brho})_{\leq \br}}{\spec(\rho)_{\leq \br}}\\
        = {}&
        \dtv{\spec(\bPi\rho\bPi)_{\leq \br}}{\spec(\bPi\cdot\widehat{\brho}\cdot\bPi)_{\leq \br}} + \dtv{\spec(\widehat{\brho})_{\leq \br}}{\spec(\rho)_{\leq \br}}\\
        \leq{}&
        \epsilon/2.  \tag{by \Cref{eq:gonna-use-soon,eq:also-gonna-use-soon}}
    \end{align*}
    Finally, we shall prove~\Cref{item:split-full-disturbance}. 
    Let $\{\alpha_i\}_{i\in [d]}$ be the eigenvalues of $\rho$, and let $\{\bbeta_i\}_{i\in [\br]}$ be the eigenvalues of $\bPi\rho\bPi$.
    By Cauchy's interlacing theorem, we have $\alpha_i \geq \bbeta_i$ for $i\in \{1, \cdots, \br\}$. 
    Therefore,
    \begin{equation*}
        \dtv{\spec(\bPi\rho\bPi)_{\leq \br}}{\spec(\rho)_{\leq \br}}
        = \frac12 \sum_{i=1}^{\br} |\alpha_i - \bbeta_i| 
        = \frac12 \sum_{i=1}^{\br} (\alpha_i - \bbeta_i),
    \end{equation*}
    and we have shown above that this is at most $\epsilon/2$.
    Next, let $\{\bbeta_{i}\}_{i={\br + 1}}^d$ be the eigenvalues of $\overline{\bPi}\rho\overline{\bPi}$. Again by Cauchy's interlacing theorem, we have $\alpha_i \leq \bbeta_i$ for $i \in \{\br+1, \cdots, d\}$. 
    Note that it is not necessarily true that $\bbeta_{\br} \geq \bbeta_{\br+1}$, and so $\bbeta_1, \ldots, \bbeta_d$ are not necessarily in sorted order. But because the TV distance between two vectors is minimized when they are sorted~\cite[Proposition 2.2]{OW15}, we have 
    \begin{align*}
        \dtv{\spec(\bPi \rho \bPi + \overline{\bPi} \rho \overline{\bPi} )}{\spec(\rho)}
        &\leq \dtv{\spec(\bPi \rho \bPi)_{\leq \br}}{\spec(\rho)_{\leq \br}} + \dtv{\spec(\overline{\bPi} \rho \overline{\bPi})_{\leq d-\br}}{\spec(\rho)_{> \br}}\\
        &= \frac12 \left(\sum_{i=1}^{\br} |\alpha_i - \bbeta_i| + \sum_{i=\br+1}^d |\alpha_i - \bbeta_i|\right)\\
        &= \frac12 \left(\sum_{i=1}^{\br} (\alpha_i - \bbeta_i) + \sum_{i=\br+1}^d (\bbeta_i - \alpha_i)\right) = \sum_{i=1}^{\br} (\alpha_i - \bbeta_i) \leq \eps,
    \end{align*}
    where we use $\spec(\rho)_{>\br}$ to denote the $\br+1, \ldots, d$-th eigenvalues of $\rho$, sorted in descending order.
    In the last equality we used the fact that $\sum_{i=1}^d \alpha_i = \sum_{i=1}^d \bbeta_i = 1$, so that
    \begin{equation*}
        \sum_{i=\br+1}^d \bbeta_i - \sum_{i=\br+1}^d \alpha_i
        = \Big(1 - \sum_{i=1}^{\br} \bbeta_i\Big) - \Big(1 - \sum_{i=1}^{\br} \alpha_i\Big)
        = \sum_{i=1}^{\br} \alpha_i - \sum_{i=1}^{\br} \bbeta_i.
    \end{equation*}
    This completes the proof.
\end{proof}

\section{Local moment matching}\label{sec:local_moment_matching}

The main theorem of this section is the following.
\begin{theorem}[Error of local moment matching]\label{thm:smallestbucketLMM}
Let $\alpha = (\alpha_1, \cdots, \alpha_d)$ be a sorted vector such that $B\geq \alpha_1\geq \cdots\geq \alpha_d\geq 0$ and $\sum_{i=1}^d \alpha_i \leq 1$. 
Fix some $K \in \N$.
Suppose that for each $k \in [K]$ we have an estimate $\widehat{p}_k$ for $p_k(\alpha) = \sum_{i=1}^d \alpha_i^k$ with error $V_k$, i.e.\
\begin{equation*}
    \ABS{\widehat{p}_k - p_k(\alpha) } \leq V_k.
\end{equation*}
Then there is a randomized algorithm which produces a sorted estimate $\widehat{\balpha}$ of $\alpha$ such that
\begin{equation}\label{eq:lmm-error}
    \E_{\widehat{\balpha}} \dtv{\widehat{\balpha}}{\alpha} \leq O\Big( \frac1K\sqrt{Bd} + 2^{9K/2}B\sum_{k=1}^K B^{-k}V_k\Big).
\end{equation}
(In fact, although we will not use this, the first term can be replaced by the stronger $\sqrt{Bd(p_1(\alpha) + V_1)}/K$.)
\end{theorem}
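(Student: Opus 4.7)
The plan is to adapt the linear-programming based local moment matching scheme of Han, Jiao, and Weissman~\cite{HJW18}, whose classical guarantee is summarized in~\eqref{eq:lp-rounding}. At a high level, the algorithm discretizes $[0,B]$ into a fine grid $\{x_j\}$ and searches, among nonnegative ``multiplicity'' vectors $(y_j)$ with $\sum_j y_j = d$, for one that satisfies the $K$ approximate moment constraints $|\sum_j x_j^k y_j - \widehat{p}_k|\leq V_k$. This is an LP feasibility problem, and $\alpha$ itself is a feasible integer solution by assumption. One solves the LP and rounds the fractional solution to integer multiplicities, which encodes a sorted $\widehat{\balpha}\in[0,B]^d$; the randomness in the theorem statement comes from this rounding step.

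The heart of the proof is to show that \emph{any} feasible $\widehat{\balpha}$ is close to $\alpha$ in total variation. Identifying sorted $d$-vectors with counting measures on $[0,B]$, one has $2\dtv{\alpha}{\widehat{\balpha}} = W_1(\sum_i \delta_{\alpha_i},\,\sum_i \delta_{\widehat{\balpha}_i})$, so Kantorovich duality reduces the question to bounding
\[
    \sup_{f\colon\,\mathrm{Lip}(f)\leq 1,\,f(0)=0}\Big(\sum_i f(\alpha_i)-\sum_i f(\widehat{\balpha}_i)\Big).
\]
Fix such an $f$ and let $p(x) = \sum_{k=1}^{K} c_k x^k$ be a degree-$K$ polynomial approximation to $f$ on $[0,B]$. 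The polynomial part contributes $\big|\sum_k c_k(p_k(\alpha)-p_k(\widehat{\balpha}))\big| \leq 2\sum_k |c_k|V_k$ by the feasibility of both $\alpha$ and $\widehat{\balpha}$. Writing the best Jackson-type approximant in the Chebyshev basis on $[0,B]$ and converting to monomials yields $|c_k|\leq 2^{O(K)}B^{1-k}$; tracking constants produces the $2^{9K/2}B\sum_k B^{-k}V_k$ term in~\eqref{eq:lmm-error}.

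The main obstacle is controlling the residual $\sum_i (f-p)(\alpha_i) - \sum_i (f-p)(\widehat{\balpha}_i)$ sharply enough to recover the bias $\tfrac{1}{K}\sqrt{Bd}$, rather than the naive $O(Bd/K)$ bound obtained from $\|f-p\|_\infty \leq O(B/K)$ times $2d$. To do this, I would use a weighted Jackson-type estimate: since $f$ is $1$-Lipschitz with $f(0)=0$, one can choose the degree-$K$ polynomial $p$ so that $|f(x)-p(x)|\leq O(\sqrt{Bx}/K)$ pointwise on $[0,B]$. Then Cauchy--Schwarz gives
\[
    \sum_i |(f-p)(\alpha_i)| \leq \frac{O(1)}{K}\sqrt{Bd\cdot p_1(\alpha)},
\]
and the analogous bound for $\widehat{\balpha}$ follows from $p_1(\widehat{\balpha})\leq p_1(\alpha)+V_1$ by feasibility. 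This produces the stronger parenthetical variant of the bias, and collapses to $\sqrt{Bd}/K$ using $p_1(\alpha)\leq 1$. Summing the polynomial and residual contributions, and absorbing the LP rounding slack into the additive constants, yields~\eqref{eq:lmm-error}.
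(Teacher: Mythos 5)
Your overall strategy matches the paper's: set up a feasibility LP over measures on $[0,B]$ with $K$ moment constraints and total mass $d$, solve it, round to a sorted $d$-point vector, and then bound a Lipschitz integral-probability-metric between $\alpha$ and the estimate by splitting the test function into a degree-$K$ polynomial plus residual. Your weighted Jackson estimate $|f(x)-p(x)|\le O(\sqrt{Bx}/K)$ followed by Cauchy--Schwarz is exactly how the paper obtains the bias term via Lemma~\ref{lem:jackson}, and the coefficient bound $|c_k|\le 2^{O(K)}B^{1-k}$ (via Chebyshev-basis conversion) is exactly Lemma~\ref{lem:coefs}.

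There is, however, a genuine gap in how you tie the analysis to the rounded output $\widehat{\balpha}$. You bound $\sup_f\bigl(\sum_i f(\alpha_i)-\sum_i f(\widehat{\balpha}_i)\bigr)$ and, for the polynomial piece, invoke ``the feasibility of both $\alpha$ and $\widehat{\balpha}$.'' But $\widehat{\balpha}$ is the output of the rounding, not the LP solution: the LP returns a measure $\widehat{\mu}$, and neither sampling $d$ points from it nor rounding fractional multiplicities to integers preserves the moment constraints in any given realization (they are preserved only in expectation, and $\E\sup_f \neq \sup_f\E$). The paper sidesteps this by bounding the IPM between $\mu_\alpha$ and $\widehat{\mu}$ itself, which \emph{is} feasible by construction, and then invoking the HJW rounding guarantee (Lemma~\ref{lem:randomdiscretization}), namely the equality
$\E_{\widehat{\balpha}}\dtv{\widehat{\balpha}}{\alpha} = \tfrac12\sup_{\|f\|_{\mathrm{Lip}}\le1}\int f\,\bigl(\mu_\alpha(\diff x)-\widehat{\mu}(\diff x)\bigr)$,
whose proof handles an ordering caveat that makes the naive ``draw $d$ i.i.d.\ samples'' argument fail. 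You should restructure to bound the Lipschitz IPM against $\widehat{\mu}$ rather than $\widehat{\balpha}$ and cite a rounding guarantee of this form; ``absorbing LP rounding slack into additive constants'' is not a substitute for that step.
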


We refer to the first term in \Cref{eq:lmm-error} as the \emph{bias} and the second term as the \emph{variance}.
The bias term results from the fact that we are only using the first $K$ moments of $\alpha$, and it decreases as the number of moments $K$ grows.
The variance term results from the fact that we only have approximations to the moments, and it increases exponentially as $K$ grows.
This exponential growth means that we will typically only be able to approximate the first $K$ moments, where $K$ is at most logarithmic in the dimension~$d$.

\Cref{thm:smallestbucketLMM} essentially corresponds to the local moment matching algorithm from~\cite{HJW18} for the smallest bucket,
except that in their case $B$, $K$, and $V_k$ were taken to be some fixed values in terms of $n$ and $d$ specific to their task, rather than being treated as variables for more general purposes.
We note that the smallest bucket is handled separately from the remaining buckets in~\cite{HJW18}, and has a simpler analysis.

\subsection{The randomized algorithm}

The randomized algorithm in \Cref{thm:smallestbucketLMM}
uses a classic approach of solving a linear programming relaxation and rounding.
Using linear programming to solve for sorted distributions dates back to a work of Efron and Thisted from 1976~\cite{ET76} and was also used in the works of Valiant and Valiant~\cite{VV11a,VV13} (see also the works of \cite{KV17,TKV17}).

\paragraph{The linear program relaxation.}

Given the sorted vector
$\alpha$ that we want to estimate,
let $\mu_\alpha$ be the discrete measure that places weight one on each $\alpha_i$, i.e. for a set $S \subseteq \R$,
\begin{equation*}
    \mu_\alpha(S) \coloneqq \sum_{i = 1}^d\mathbbm{1}[\alpha_i \in S].
\end{equation*}
This measure satisfies the following two properties:
\begin{equation*}
    \mu_{\alpha}([0, B]) = \int_{0}^B 1\cdot \mu_{\alpha}(\diff x) = d,
    \qquad\text{and}\qquad
    \int_0^B x^k \cdot \mu_{\alpha}(\diff x) = p_k(\alpha),
\end{equation*}
where we know that $|\widehat{p}_k - p_k(\alpha)|\leq V_k$.
Therefore, we will consider the following feasibility linear program:
find a measure $\widehat{\mu}$ on $[0,B]$ which satisfies
    \begin{align*}
        \widehat{\mu}([0,B])&=d, \\
        \abs[\Big]{\widehat{p}_k - \int_0^B  x^k \cdot \widehat{\mu}(\diff x)} &\leq V_k, \quad \text{for all }k\in [K].
    \end{align*}
This linear program is feasible because $\widehat{\mu} = \mu_{\alpha}$ is feasible, and so we can solve it to find some feasible solution $\widehat{\mu}$.
This is a semi-infinite linear program---intuitively, we can treat the values $\widehat{\mu}(x)$ for all $x \in [0, B]$ as the variables of this linear program.
This can be solved to any desired accuracy by discretizing the domain $[0, B]$~\cite{GL98}, and we omit these details for simplicity.

\paragraph{Rounding the linear program solution.}

Let $\widehat{\mu}$ be a solution to the linear program,
which we would now like to round to a sorted vector $\widehat{\balpha}$.
Han, Jiao, and Weissman~\cite{HJW18} proposed a rounding algorithm that does so with the following stability guarantee: 
if $\widehat{\mu}$ cannot be distinguished from the true measure $\mu_\alpha$ via any $1$-Lipschitz function, then $\alpha$ and the returned vector $\widehat{\balpha}$ are also close in total variation distance, at least in expectation. 
For any function $f:\Omega\to\R$, its Lipschitz constant is given by $\|f\|_{\mathrm{Lip}} \coloneq \sup_{x\neq y} \frac{|f(x)-f(y)|}{|x-y|}$. 

\begin{lemma}\label{lem:randomdiscretization}
    There exists a randomized algorithm that takes as input a measure $\widehat{\mu}$ over $\R$ and outputs a $d$-dimensional sorted vector $\widehat{\balpha}$ such that for any $d$-dimensional sorted vector $\alpha$, 
    \begin{equation}\label{eq:rounding-error}
        \E_{\widehat{\balpha}} \dtv{\widehat{\balpha}}{\alpha} = \frac12 \sup_{f : \|f\|_{\mathrm{Lip}} \leq 1} \int_{\R} f(x)(\mu_\alpha(\diff x) - \widehat{\mu}(\diff x)). 
    \end{equation}
\end{lemma}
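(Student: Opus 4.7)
The plan is to produce $\widehat{\balpha}$ by a \emph{quantile rounding} of $\widehat{\mu}$ using a single uniform random offset, and then to observe that the expected $\ell_1$ error matches the Wasserstein-$1$ distance between $\widehat{\mu}$ and $\mu_\alpha$ on the nose. Kantorovich-Rubinstein duality will then rewrite that distance in the $1$-Lipschitz form appearing on the right-hand side of~\eqref{eq:rounding-error}.

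Concretely, let $G(t) \coloneqq \inf\{x \in \R : \widehat{\mu}((-\infty, x]) \geq t\}$ be the (generalized) quantile function of $\widehat{\mu}$, which is non-decreasing on $[0, d]$. The algorithm samples $\bu$ uniformly from $[0, 1]$ and sets $\widehat{\balpha}_i \coloneqq G(d - i + \bu)$ for each $i \in [d]$. Monotonicity of $G$ ensures $\widehat{\balpha}_1 \geq \cdots \geq \widehat{\balpha}_d$, so the output is automatically sorted.

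The key computation is
\begin{equation*}
    \E_{\bu} \norm*{\widehat{\balpha} - \alpha}_1
    = \sum_{i=1}^d \int_0^1 \abs*{G(d - i + u) - \alpha_i}\, du
    = \int_0^d \abs*{G(t) - G_{\alpha}(t)}\, dt,
\end{equation*}
where $G_{\alpha}(t) = \alpha_{d - \lceil t \rceil + 1}$ is the quantile function of $\mu_\alpha$ (using that $\alpha$ is sorted); the second equality follows by changing variables $t = d - i + u$ on each interval $[d - i, d - i + 1]$. The right-hand side is the standard integral formula for the Wasserstein-$1$ distance $W_1(\widehat{\mu}, \mu_\alpha)$ between two non-negative measures of equal total mass $d$. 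Since $\widehat{\balpha}$ and $\alpha$ are both sorted $d$-vectors, one has $\dtv{\widehat{\balpha}}{\alpha} = \tfrac{1}{2}\norm{\widehat{\balpha} - \alpha}_1$. Taking expectations yields $\E_{\widehat{\balpha}} \dtv{\widehat{\balpha}}{\alpha} = \tfrac{1}{2} W_1(\widehat{\mu}, \mu_\alpha)$. Finally, invoking Kantorovich-Rubinstein duality, which rewrites $W_1(\widehat{\mu}, \mu_\alpha) = \sup_{\|f\|_{\mathrm{Lip}} \leq 1} \int f(x)(\mu_\alpha - \widehat{\mu})(\diff x)$, gives exactly~\eqref{eq:rounding-error}.

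The only subtleties I anticipate are technical. First, I need to ensure measurability of $G$ and verify that the change-of-variables step remains correct when $\widehat{\mu}$ has atoms; the problematic offsets $\bu$ (for which $G$ jumps) form a measure-zero set and therefore do not affect the $\E_{\bu}$ integral. Second, I would justify Kantorovich-Rubinstein duality in the unnormalized setting by rescaling both measures by $1/d$ so as to reduce to the standard probability-measure version of the duality. Neither of these is a real obstacle; the main conceptual step is the initial observation that quantile rounding turns $\ell_1$ error into the $W_1$ distance.
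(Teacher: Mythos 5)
Your proof is correct, and it takes a genuinely different and more self-contained route than the paper. The paper proves this lemma by citation: it invokes the rounding procedure of Han--Jiao--Weissman (Definition 8 of~\cite{HJW18}, which morally samples $d$ points from $\widehat{\mu}$ and then adjusts for ordering) and combines their Lemmas 7, 9, and 10 to extract the Wasserstein-type guarantee. You instead construct the rounding directly as a common-dither quantile rounding of $\widehat{\mu}$: set $\widehat{\balpha}_i = G(d - i + \bu)$ for a single $\bu \sim \mathrm{Unif}[0,1]$, where $G$ is the (generalized) inverse CDF of $\widehat{\mu}$. The change of variables turns $\E_{\bu}\|\widehat{\balpha} - \alpha\|_1$ exactly into $\int_0^d |G(t) - G_\alpha(t)|\,dt$, the quantile-coupling formula for $W_1(\widehat{\mu}, \mu_\alpha)$, and Kantorovich--Rubinstein duality (applied after rescaling both mass-$d$ measures to probability measures) rewrites this as the $1$-Lipschitz supremum. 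This is a clean argument: it avoids importing black-box machinery, it makes the Wasserstein-$1$ meaning of the right-hand side explicit, and the output is automatically sorted by monotonicity of $G$, so you sidestep the ``caveat to handle ordering'' the paper alludes to. The only implicit assumption worth naming is that $\widehat{\mu}$ has total mass $d$ (otherwise $G(d - i + \bu)$ is ill-defined and the $W_1$ formula for equal masses does not apply); this is guaranteed by the LP constraint $\widehat{\mu}([0,B]) = d$ in the one place the lemma is used, so you should state it as a hypothesis of the lemma rather than leave it implicit. Your handling of atoms and of the unnormalized KR duality is fine as sketched.
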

\begin{proof}
    The algorithm is given as Definition 8 in~\cite{HJW18}: morally, it samples and outputs $d$ points drawn from $\widehat{\mu}$, but there is an additional caveat to handle ordering.
    The claim then follows by combining Lemmas 7, 9, and 10 in~\cite{HJW18}. 
\end{proof}

In the next section, we show that the error in \Cref{eq:rounding-error} is small when $\widehat{\mu}$ is a feasible solution to the linear program,
completing the proof of \Cref{thm:smallestbucketLMM}.

\subsection{Polynomial approximation and moment matching}

The proof of~\cref{thm:smallestbucketLMM} uses two standard facts about polynomials. 
The first is Jackson's inequality, which gives an upper bound on the quality of approximation to a Lipschitz function.
\begin{lemma}[{\cite[Lemma~22]{HJW18}}] \label{lem:jackson}
    For $f: [a, b] \to \R$ a 1-Lipschitz function, the best polynomial approximation $P$ of degree $K$, i.e.\ $P = \arg\min_Q \max_{x \in [a,b]} \abs{Q(x) - f(x)}$, satisfies
    \begin{equation*}
        \ABS{f(x) - P(x)} \leq \frac{C_3\sqrt{(b-a)(x-a)}}{K} \qquad \text{for all } x \in [a,b],
    \end{equation*}
    for a universal constant $C_3>0$.
\end{lemma}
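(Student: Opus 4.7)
The plan is to reduce the statement to the classical pointwise Jackson theorem on $[-1,1]$ via an affine change of variables, then translate the resulting bound back to $[a,b]$. The key observation is that the stated bound is not a uniform bound: it becomes stronger near the endpoint $x=a$, scaling as $\sqrt{x-a}$. This ``endpoint improvement'' is exactly what the refined Jackson-type estimates on $[-1,1]$ provide (scaling as $\sqrt{1-t^2}$), and the two scales line up after the change of variables.

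First I would introduce the affine map $\phi: [-1,1] \to [a,b]$ given by $\phi(t) = a + (b-a)(1+t)/2$ and define $g(t) \coloneqq f(\phi(t))$. Since $\phi'(t) = (b-a)/2$, the function $g$ is $L$-Lipschitz on $[-1,1]$ with $L = (b-a)/2$. Any degree-$K$ polynomial $Q$ in $t$ corresponds to a degree-$K$ polynomial $P(x) = Q(\phi^{-1}(x))$ in $x$, so approximating $f$ on $[a,b]$ by degree-$K$ polynomials is equivalent to approximating $g$ on $[-1,1]$ by degree-$K$ polynomials, with identical $L^\infty$ error.

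Next I would invoke the classical pointwise Jackson-type theorem (sometimes called the Timan--Dzjadyk--Freud estimate): for any $L$-Lipschitz function $g$ on $[-1,1]$, there exists a degree-$K$ polynomial $Q$ satisfying
\begin{equation*}
    |g(t) - Q(t)| \leq C \cdot L \cdot \frac{\sqrt{1-t^2}}{K} \qquad \text{for all } t \in [-1,1].
\end{equation*}
Finally, pulling back through $\phi$, a direct computation gives $1-t^2 = (1-t)(1+t) = \tfrac{4(b-x)(x-a)}{(b-a)^2} \leq \tfrac{4(x-a)}{b-a}$, hence $\sqrt{1-t^2} \leq 2\sqrt{(x-a)/(b-a)}$. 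Plugging in $L = (b-a)/2$ yields
\begin{equation*}
    |f(x) - P(x)| \leq C \cdot \frac{b-a}{2} \cdot \frac{2\sqrt{(x-a)/(b-a)}}{K} = C \cdot \frac{\sqrt{(b-a)(x-a)}}{K},
\end{equation*}
which is the claimed bound (with $C_3 = C$). Since the best approximation can only do better than this explicit construction, the lemma follows.

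The main obstacle is the pointwise Jackson estimate on $[-1,1]$ in the middle step, which is strictly stronger than the uniform Jackson theorem (the latter gives only $O(L/K)$). Its proof relies on the fact that Chebyshev-type interpolation nodes cluster near the endpoints $\pm 1$, so the approximation quality naturally improves there; a constructive proof can be given either by convolving $g$ with a suitably modified Jackson kernel whose mass concentrates near the endpoints, or by analyzing the Lebesgue function of interpolation at Chebyshev nodes. I would either cite a standard reference (e.g.\ DeVore--Lorentz, \emph{Constructive Approximation}, or the reference given in the statement to HJW18 Lemma 22) rather than reprove this classical fact from scratch.
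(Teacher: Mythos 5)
The paper gives no proof of this lemma at all (it is imported verbatim from HJW18), so the relevant comparison is with the classical approximation-theory fact behind that citation, and your route is exactly the standard one: rescale affinely to $[-1,1]$ and invoke a pointwise Jackson-type estimate. Your bookkeeping is correct: $g$ is $\frac{b-a}{2}$-Lipschitz, $1-t^2 = 4(b-x)(x-a)/(b-a)^2 \le 4(x-a)/(b-a)$, and the constants combine to give $\sqrt{(b-a)(x-a)}/K$. One attribution caveat: the estimate you need, $|g(t)-Q(t)| \le C L \sqrt{1-t^2}/K$ with \emph{no} additive $L/K^2$ term, is the Telyakovskii--Gopengauz refinement; the plain Timan/Dzjadyk-type pointwise theorem carries an extra $CL/K^2$ term and in particular does not force the error to vanish at the endpoints, so you must cite the refined version to get the bound exactly as stated (the statement at $x=a$ demands exact interpolation).

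The genuine gap is your closing inference ``since the best approximation can only do better than this explicit construction, the lemma follows.'' Optimality in the sup norm does not imply pointwise domination, and in fact the lemma as literally stated for $P = \arg\min_Q \max_{x\in[a,b]}|Q(x)-f(x)|$ is false: the bound forces $P(a)=f(a)$, but the best uniform approximant need not interpolate at $a$. For example, take $f(x) = |x-\tfrac{a+b}{2}|$ and $K=1$; by symmetry and uniqueness the best degree-$1$ approximant is the constant $\tfrac{b-a}{4}$, whose error at $x=a$ is $\tfrac{b-a}{4} > 0$, violating the claimed bound there (and in a neighborhood of $a$). What your construction does prove is the existential form of the lemma --- there \emph{exists} a degree-$K$ polynomial $P$ with $|f(x)-P(x)| \le C_3\sqrt{(b-a)(x-a)}/K$ on $[a,b]$ --- and this is all that is ever used downstream: the proof of \Cref{thm:smallestbucketLMM} only needs the pointwise error bound and the resulting bound $|P(x)| \le C_3 B/K + B$ feeding into \Cref{lem:coefs}, never optimality of $P$. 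So your argument is sound once you drop the final sentence and read (or restate) the lemma as an existence claim.
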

The second is a bound on the coefficients of a bounded polynomial.
\begin{lemma}[{\cite[Lemma~27]{HJW18}}] \label{lem:coefs}
    Let $P(x) = \sum_{k=0}^K a_k x^k$ be a polynomial of degree at most $K$ such that $\ABS{P(x)} \leq A$ for $x \in [a,b]$.
    Then if $a + b \neq 0$, for any $k=0,1,\cdots, K$,
    \begin{equation*}
        \ABS{a_k} \leq 2^{7K/2}\cdot A\cdot \ABS{\frac{a+b}{2}}^{-k}\left(\ABS{\frac{b+a}{b-a}}^K + 1\right).
    \end{equation*}
\end{lemma}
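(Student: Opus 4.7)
The plan is to reduce to the classical setting of polynomials bounded on $[-1,1]$ via an affine change of variables, bound the transformed coefficients via a Chebyshev expansion, and then transform back to extract a bound on $|a_k|$.

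First, I would substitute $x = c + h t$ where $c = \frac{a+b}{2}$ and $h = \frac{b-a}{2}$, obtaining a polynomial $Q(t) = P(c+ht) = \sum_{j=0}^K q_j t^j$ of degree at most $K$ with $|Q(t)| \leq A$ on $[-1,1]$. To bound the $q_j$'s, I expand $Q$ in the Chebyshev basis $Q = \sum_{n=0}^K c_n T_n$: orthogonality with respect to the Chebyshev weight, together with $|Q| \leq A$ and $|T_n| \leq 1$ on $[-1,1]$, gives $|c_n| \leq 2A$. The $\ell^1$-norms of the monomial coefficients of $T_n$ satisfy the recurrence $\|T_n\|_1 = 2\|T_{n-1}\|_1 + \|T_{n-2}\|_1$ (inherited from $T_n = 2xT_{n-1} - T_{n-2}$, since the signs of the monomial coefficients of $T_n$ alternate in a compatible way), yielding $\|T_n\|_1 \leq (1+\sqrt{2})^n \leq 2^{3n/2}$. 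Summing across the expansion gives $\max_j |q_j| \leq \sum_j |q_j| \leq O\bigl(A \cdot 2^{3K/2}\bigr)$.

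Next, I invert the change of variables. Writing $P(x) = Q\bigl((x-c)/h\bigr)$ and expanding by the binomial theorem,
\begin{equation*}
a_k = \sum_{j=k}^K q_j \binom{j}{k} h^{-j} (-c)^{j-k},
\end{equation*}
so taking absolute values and factoring out $|c|^{-k}$ gives
\begin{equation*}
|a_k| \leq \left|\frac{a+b}{2}\right|^{-k} \cdot \max_j |q_j| \cdot \sum_{j=k}^K \binom{j}{k} r^j,
\end{equation*}
where $r = |a+b|/|b-a|$. In both the $r \leq 1$ and $r \geq 1$ regimes the binomial sum is bounded by the hockey-stick identity by $\binom{K+1}{k+1}(r^K+1) \leq 2^{K+1}(r^K+1)$. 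Combining this with the Chebyshev bound on $\max_j |q_j|$ yields $|a_k| \leq O(1)\cdot A \cdot 2^{5K/2} \cdot \left|\frac{a+b}{2}\right|^{-k}(r^K + 1)$, which is absorbed into the stated $2^{7K/2}$ for all $K$ larger than an absolute constant, with the handful of small $K$ handled directly.

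The main obstacle is tracking constants sharply enough to land at the stated exponent $7K/2$: the Chebyshev step naturally costs $2^{3K/2}$ and the binomial sum costs another $2^K$, giving $2^{5K/2}$ with little obvious slack, so the geometry of $[a,b]$ must enter through precisely the two factors $|(a+b)/2|^{-k}$ and $|(a+b)/(b-a)|^K + 1$ with no extra losses. The hypothesis $a+b \neq 0$ is what makes both the factor $|(a+b)/2|^{-k}$ and the factoring above meaningful; without it the coefficients can indeed blow up (consider $P(x) = (2x/(b-a))^K$ on $[-h, h]$).
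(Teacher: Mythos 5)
This lemma is imported verbatim from \cite[Lemma~27]{HJW18}; the paper gives no internal proof, so there is nothing to compare against beyond the original source. Your argument is correct: rescaling to $[-1,1]$, expanding in the Chebyshev basis with $|c_n|\le 2A$, bounding the coefficient $\ell^1$-norm of $T_n$ via the recurrence by $(1+\sqrt{2})^n\le 2^{3n/2}$, and undoing the affine substitution with the hockey-stick bound yields $|a_k|\le C\, A\, 2^{5K/2}\,|{\textstyle\frac{a+b}{2}}|^{-k}\big(|{\textstyle\frac{a+b}{b-a}}|^K+1\big)$ for an absolute constant $C$, which sits inside the stated $2^{7K/2}$ once $K$ exceeds a small constant, with the remaining few values of $K$ checkable directly (e.g.\ $K=0,1$ are immediate). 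This is essentially the same Chebyshev-expansion route as in \cite{HJW18}, and if anything your bookkeeping gives a slightly sharper exponent, so there is no gap.
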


\begin{proof}[Proof of~\cref{thm:smallestbucketLMM}]
For simplicity, we will write the true measure $\mu_\alpha$ as $\mu$. 
Let $\widehat{\mu}$ be any feasible solution to the linear program, meaning that it satisfies
\begin{equation} \label{eqn:equal-mass}
    \int_0^B 1 \cdot \widehat{\mu}(\diff x) = \int_0^B 1 \cdot \mu(\diff x) = d
\end{equation}
and
\begin{equation*}
    \ABS{\widehat{p}_k - \int_0^B  x^k \cdot \widehat{\mu}(\diff x)} \leq V_k, \quad \text{for all }k \in [K]. 
\end{equation*}
By the triangle inequality, $\widehat{\mu}$ must be close to the true measure $\mu$ up to the first $K$ moments: 
\begin{equation} \label{eqn:moments-good}
    \Big|\int_0^B x^k\cdot \mu(\diff x) - \int_0^B x^k\cdot \widehat{\mu}(\diff x)\Big| \leq 2V_k, \quad \text{for all }k \in [K]. 
\end{equation}
Using the rounding algorithm in~\cref{lem:randomdiscretization}, we can discretize $\widehat{\mu}$ into a sorted $d$-dimensional vector $\widehat{\balpha}$ such that
\begin{equation*}
    \E_{\widehat{\balpha}} \dtv{\widehat{\balpha}}{\alpha} = \frac12 \sup_{f : \|f\|_{\mathrm{Lip}} \leq 1} \int_0^B f(x)\cdot(\mu(\diff x) - \widehat{\mu}(\diff x)). 
\end{equation*}
We can make the above supremum only over $1$-Lipschitz functions $f: \mathbb{R} \to \mathbb{R}$ satisfying $f(0) = 0$, since by \Cref{eqn:equal-mass}, $\int_0^B f(0)\cdot(\mu(\diff x) - \widehat{\mu}(\diff x)) = 0$.
Consider such an $f$; we take the best degree-$K$ polynomial approximation to it.
In other words, let $P(x) = \sum_{k=0}^K a_kx^k$ be the degree-$K$ polynomial promised by \Cref{lem:jackson}.
Then, 
\begin{equation*}
    \ABS{\int_0^B f(x)\cdot(\mu(\diff x) - \widehat{\mu}(\diff x))}  \leq \underbrace{\ABS{\int_0^B (f(x) - P(x))\cdot(\mu(\diff x) - \widehat{\mu}(\diff x))}}_{T_1:\text{ bias}} + \underbrace{\ABS{\int_0^B P(x)\cdot(\mu(\diff x) - \widehat{\mu}(\diff x))}}_{T_2:\text{ variance}}. 
\end{equation*}
Let us first bound the bias term $T_1$ using~\cref{lem:jackson} with $[a, b] = [0, B]$.
\begin{align*}
    T_1
    &\leq \int_0^B \ABS{f(x) - P(x)}\cdot (\mu(\diff x) + \widehat{\mu}(\diff x))\\
    &\leq \frac{C_3\sqrt{B}}{K}\int_0^B \sqrt{x}\cdot(\mu(\diff x) + \widehat{\mu}(\diff x)) \\
    &\leq \frac{C_3\sqrt{B}}{K} \sqrt{\Big(\int_0^B \sqrt{x}^2 \cdot(\mu(\diff x) + \widehat{\mu}(\diff x))\Big)\Big(\int_0^B 1^2 \cdot (\mu(\diff x) + \widehat{\mu}(\diff x))\Big)} \tag{by Cauchy-Schwarz} \\
    &= \frac{C_3\sqrt{2Bd}}{K} \sqrt{\int_0^B x \cdot (\mu(\diff x) + \widehat{\mu}(\diff x))} \\
    &= \frac{C_3\sqrt{2Bd}}{K} \sqrt{\int_0^B x \cdot (2\mu(\diff x)) + \int_0^B x\cdot(\widehat{\mu}(\diff x) - \mu(\diff x))} \\
    &\leq \frac{C_3\sqrt{2Bd}}{K} \sqrt{2+2V_1} . \tag{by \Cref{eqn:moments-good}} 
\end{align*}
We now bound the variance term $T_2$. 
To begin,
\begin{align*}
    \ABS{\int_0^B P(x)\cdot(\mu(\diff x) - \widehat{\mu}(\diff x))}
    &= \ABS{\int_0^B \Big(\sum_{k=0}^K a_k x^k\Big) \cdot(\mu(\diff x) - \widehat{\mu}(\diff x))}\\
    &\leq \sum_{k=0}^K |a_k| \cdot \ABS{\int_0^B  x^k \cdot(\mu(\diff x) - \widehat{\mu}(\diff x))}\\
    &= \sum_{k=1}^K |a_k| \cdot \ABS{\int_0^B  x^k \cdot(\mu(\diff x) - \widehat{\mu}(\diff x))}\\
    &\leq \sum_{k=1}^K |a_k| \cdot 2 V_k \tag{by~\Cref{eqn:moments-good}},
\end{align*}
where the second equality uses $\int_0^B \mu(dx) = \int_0^B \widehat{\mu}(dx) = d$ by \Cref{eqn:equal-mass}. 
Since $f$ is $1$-Lipschitz and $f(0)=0$, we have that $|f(x)|\leq |x|$. 
It then follows from~\cref{lem:jackson} that for any $x\in [0, B]$, 
\begin{equation*}
    |P(x)| \leq |P(x) - f(x)| + |f(x)|
    \leq \frac{C_3B}{K} + B. 
\end{equation*}
Using \cref{lem:coefs}, the coefficient $|a_k|$ for each $k \in [K]$ is bounded by
\begin{equation*}
    |a_k| \leq 2^{7K/2 + 1}B\Big(1 + \frac{C_3}{K}\Big)\Big(\frac{B}{2}\Big)^{-k} \leq 2^{9K/2 + 1}\Big(1 + \frac{C_3}{K} \Big)B^{1-k}. 
\end{equation*}
Therefore,
\begin{equation*}
    T_2 \leq 2\sum_{k=1}^K 2^{9K/2 + 1}\Big(1 + \frac{C_3}{K}\Big)B^{1-k}V_k \leq (1+C_3)2^{9K/2 + 2}\sum_{k=1}^K B^{1-k}V_k. 
\end{equation*}
Putting everything together, we have that
\begin{equation*}
    \E_{\widehat{\balpha}} \dtv{\widehat{\balpha}}{\alpha} \leq \frac12 \sup_{f : \|f\|_{\text{Lip}} \leq 1} (T_1+T_2) \leq O\Big( \frac1K\sqrt{Bd(1+V_1)} +  2^{9K/2}\sum_{k=1}^K B^{1-k}V_k\Big).
\end{equation*}
This is the claimed bound, except for the factor of $(1 + V_1)$ under the square root in the first term.
However, since $p_1(\alpha) = \alpha_1 + \cdots + \alpha_d$ is between 0 and 1 by assumption, if $\widehat{p}_1$ is outside the interval $[0, 1]$, we can always move it to this interval while only decreasing $V_1$. But in this case $V_1 \leq 1$,
completing our proof.
\end{proof}

\section{The spectrum learning algorithm}\label{sec:full_algo}

We now state our full spectrum learning algorithm and prove its correctness. 

\begin{definition}[Spectrum learning algorithm]
    Let $\rho$ be an unknown $d$-dimensional quantum state with sorted eigenvalues $\alpha$. 
    Given $2n = O(d^3\cdot (\log\log d)^4/(\log^{4}(d)\cdot \eps^{6}))$ copies of $\rho$, our spectrum learning algorithm works as follows.
    \begin{enumerate}
    \item {\bf Bucketing}~(\cref{thm:bucket-algo}): 
    Use the first $n$ copies of $\rho$
    to perform the uniform POVM bucketing algorithm
    with threshold $B = O(\epsilon^2 \log^2(d)/((\log \log d)^2\cdot d))$.
    Let $\widehat{\brho}$ and $\{\bPi, \overline{\bPi}\}$ be its outputs.
    Let $\br$ be the rank of $\bPi$
    and let $\widehat{\balpha}_1, \cdots, \widehat{\balpha}_{\br}$ be the largest $\br$ eigenvalues of $\widehat{\brho}$. 

    \item {\bf Moment estimation}~(\cref{prop:conditionedmomentvar}): Set $K = c\log d/\log \log d$ for some small constant $c\in (0, 2/19)$ to be chosen later. 
    Use the remaining $n$ copies of $\rho$ and the two-outcome measurement $\{\bPi, \overline{\bPi}\}$ to run the conditioned moment estimator (\Cref{def:conditioned-moment-estimator}) to estimate the $k$-th moment $\tr((\overline{\bPi}\rho\overline{\bPi})^k)$ in parallel for each $1 \leq k \leq K$.
    Let $\bY_1, \ldots, \bY_K$ be the resulting estimators.

    \item {\bf Local moment matching}~(\cref{thm:smallestbucketLMM}): Use local moment matching to convert the estimators $\bY_1, \ldots, \bY_K$ into estimates $\widehat{\bbeta}_1\geq \cdots\geq \widehat{\bbeta}_{d-\br}$ of the eigenvalues of $\overline{\bPi}\rho\overline{\bPi}$. 
    Let $\widehat{\balpha}_{\br+i} = \widehat{\bbeta}_{i}$ for each $i\in \{1, \cdots, d-\br\}$.  
\end{enumerate}
\end{definition}

Our main result is the following, which characterizes the behavior of our spectrum learning algorithm.

\begin{theorem}[\Cref{thm:main} restated]
    Given $n = O(d^3\cdot (\log\log d)^{4}/(\log^{4}(d)\cdot \eps^{6}))$
    copies of a mixed state $\rho$ with spectrum $\alpha$, the spectrum learning algorithm uses only unentangled measurements and outputs an estimator $\widehat{\balpha}$ such that $\dtv{\alpha}{\widehat{\balpha}} \leq \epsilon$ with probability 99\%.
\end{theorem}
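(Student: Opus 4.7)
The plan is to decompose the TV error between $\alpha$ and $\widehat{\balpha}$ into three additive contributions and bound each by $O(\eps)$: the \emph{alignment error} introduced by the bucketing measurement, the estimation error of the top $\br$ eigenvalues from the tomograph $\widehat{\brho}$, and the LMM error on the small-bucket spectrum. Write $\gamma$ for the sorted spectrum of $\bPi\rho\bPi + \overline{\bPi}\rho\overline{\bPi}$ and $\beta$ for the sorted eigenvalues of $\bsigma \coloneqq \overline{\bPi}\rho\overline{\bPi}$. Since $\bPi$ and $\overline{\bPi}$ project onto orthogonal subspaces, $\gamma$ is the sort of $\spec(\bPi\rho\bPi)_{\leq \br}$ concatenated with $\beta$, while $\widehat{\balpha}$ is the sort of $\spec(\widehat{\brho})_{\leq \br}$ concatenated with $\widehat{\bbeta}$. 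Using that sorting does not increase TV distance~\cite[Proposition~2.2]{OW15}, the triangle inequality gives
\begin{equation*}
    \dtv{\alpha}{\widehat{\balpha}} \leq \dtv{\alpha}{\gamma} + \dtv{\spec(\bPi\rho\bPi)_{\leq \br}}{\spec(\widehat{\brho})_{\leq \br}} + \dtv{\beta}{\widehat{\bbeta}}.
\end{equation*}

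The first two contributions follow from \cref{thm:bucket-algo} applied to the first half of the copies with threshold $B = \Theta(\eps^2 K^2 / d)$. Item~3 gives $\dtv{\alpha}{\gamma} \leq \eps$ directly. For the second term, observe that $\spec(\bPi \widehat{\brho} \bPi)_{\leq \br} = \spec(\widehat{\brho})_{\leq \br}$ by the definition of $\bPi$, so \cref{thm:opnormconcentration} together with the rank bound $\br \leq 3/(2B)$ also shown inside the proof of \cref{thm:bucket-algo} yields
\begin{equation*}
    \dtv{\spec(\bPi\rho\bPi)_{\leq \br}}{\spec(\widehat{\brho})_{\leq \br}} \leq \tfrac{\br}{2}\,\lVert \widehat{\brho} - \rho \rVert_{\infty} \leq \tfrac{\eps}{4}.
\end{equation*}
The theorem simultaneously yields the structural inequality $\lVert \bsigma \rVert_\infty \leq 2B$, which drives the remaining moment-estimation analysis.

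Conditioning on bucketing success, I run the conditioned moment estimator of \cref{def:conditioned-moment-estimator} on the remaining independent copies to produce $\bY_1, \ldots, \bY_K$, whose variances are controlled by \cref{prop:conditionedmomentvar}. Substituting $\tr(\bsigma^{2j}) \leq \lVert \bsigma \rVert_\infty^{2j-1}\tr(\bsigma) \leq (2B)^{2j-1}$ for $j \geq 1$, the trivial bound $\tr(\bsigma^0) \leq d$, and the parameter identity $kd/n = O(kB^2\eps^2)$ reduces the variance bound to
\begin{equation*}
    \Var[\bY_k] \leq B^{2k} \cdot k^{O(k)} \cdot \eps^2.
\end{equation*}
Concretely, the $j = 0$ contribution equals $(O(kB^2\eps^2))^k = B^{2k}k^{O(k)}\eps^{2k} \leq B^{2k}k^{O(k)}\eps^2$ using $\eps \leq 1$, while the $j \geq 1$ contributions sum to $O(B^{2k}k^{O(k)}\eps^2/(Bd))$, which is absorbed into $B^{2k}k^{O(k)}\eps^2$ since $Bd = \Theta(\eps^2 K^2) \gtrsim 1$ in the regime where our improvement over tomography is meaningful. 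Applying Chebyshev's inequality with $t = \sqrt{K}$ and a union bound over $k \in [K]$ gives $V_k \coloneqq |\bY_k - \tr(\bsigma^k)| \leq \sqrt{K}\,B^k k^{O(k)} \eps$ for every $k$ with probability $0.99$.

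Finally, I feed these moment estimates into \cref{thm:smallestbucketLMM} applied to $\beta$ (whose entries satisfy $\beta_1 \leq 2B$ and $\sum_i \beta_i \leq 1$, fitting the hypotheses up to a harmless constant rescaling of $B$), yielding
\begin{equation*}
    \E\,\dtv{\widehat{\bbeta}}{\beta} = O\!\left(\tfrac{1}{K}\sqrt{Bd}\right) + 2^{9K/2}\cdot B\sum_{k=1}^{K} B^{-k} V_k = O\!\left(\tfrac{1}{K}\sqrt{Bd}\right) + K^{O(K)}\cdot B \eps.
\end{equation*}
The choice $B = \Theta(\eps^2 K^2/d)$ makes the first term $O(\eps)$, and $K = c\log(d)/\log\log(d)$ with a sufficiently small constant $c$ ensures $K^{O(K)} = d^{O(c)}$, so that $K^{O(K)}B\eps = d^{O(c)-1}K^2\eps^3 = o(\eps)$. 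A Markov step converts the expectation bound into a $0.99$-probability bound (at the cost of a universal constant factor), and a union bound with the previous two success events, combined with rescaling $\eps$ by a constant, yields $\dtv{\alpha}{\widehat{\balpha}} \leq \eps$ with probability $\geq 0.99$. Plugging $B$ into $n = O(dB^{-2}\eps^{-2})$ gives the claimed sample complexity $O(d^3(\log\log d)^4/(\log^4(d)\eps^6))$. The main technical obstacle is the variance calculation: one must simultaneously control the $j = 0$ contribution (which carries a factor of $d$ from $\tr(\bsigma^0)$ but compensates through $\eps^{2k}$) and the $j \geq 1$ contributions (which gain $B^{2j}$ factors via the operator-norm bound), which requires the delicate relationship $Bd \gtrsim 1$ enforced by our choice of $B$.
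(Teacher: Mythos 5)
Your proof follows essentially the same route as the paper's: it uses the same three-part decomposition (alignment error via \Cref{thm:bucket-algo} item 3, large-bucket estimation via item 1, small-bucket LMM), the same conditioned moment estimator and variance bound from \Cref{prop:conditionedmomentvar}, the same LMM theorem, and the same parameter choices $B = \Theta(\eps^2K^2/d)$, $K = c\log d/\log\log d$, $n = \Theta(dB^{-2}\eps^{-2})$. The one place where you deviate is in bounding $\tr(\bsigma^{2j})$: you use $\tr(\bsigma^{2j}) \leq \|\bsigma\|_\infty^{2j-1}\tr(\bsigma) \leq (2B)^{2j-1}$ for $j \geq 1$, while the paper uses the uniform bound $\tr(\bsigma^{2j}) \leq d\cdot(2B)^{2j}$ for all $j \geq 0$. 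These differ by a factor of $2Bd$: yours is tighter when $Bd > 1/2$, the paper's when $Bd < 1/2$.

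This choice creates a small gap, which you explicitly acknowledge. Your bound $\Var[\bY_k] \leq B^{2k}k^{O(k)}\eps^2$ only follows when $Bd = \Theta(\eps^2 K^2)\gtrsim 1$, i.e.\ when $\eps \gtrsim \log\log d/\log d$. The theorem statement does not carry this restriction; the paper's uniform bound handles all $\eps \leq 1$ because $(d/(nB^2))^{k-j} = (\eps^2/C_2)^{k-j} \leq \eps^2$ without any constraint on $Bd$. Your argument can be patched in two easy ways: either (a) take the minimum of your bound and $d(2B)^{2j}$, which reduces to the paper's in the regime $Bd < 1/2$; or (b) track the $1/(Bd)$ factor carefully through the LMM step, which turns out to still yield $O(\eps)$ because the extra $1/\sqrt{Bd}$ in each $V_k$ is absorbed by the $B \propto 1/d$ prefactor in \Cref{eq:lmm-error}. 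But as written, the claim that the $j\geq 1$ contribution is "absorbed" is justified only for the stated sub-regime, so the proof is incomplete for the small-$\eps$ end of the theorem.
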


\begin{proof}
We will show how to achieve an error of $O(\eps)$ with probability at least $0.98$; then, the theorem follows from rescaling $\eps$ and using standard success amplification.

Set $K = c\log d/ \log\log d$ for some small constant $c\in (0, 2/19)$. Moreover, set $B = O(\eps^2K^2/d) = O(\eps^2 \log^2(d) /((\log \log d)^2 \cdot d))$. 
The bucketing algorithm in~\cref{thm:bucket-algo} takes $n$ copies of $\rho$ and returns an estimate $\widehat{\brho}$ and a projector $\bPi$ of rank $\br\leq 3/(2B)$.
Recall that $\bPi$ is the projector onto the eigenvectors of $\widehat{\brho}$ with eigenvalues at least $B$. 
With 
\begin{equation*}
    n = O(dB^{-2}\eps^{-2}) = O(d^3 / (K^4 \eps^6)) = O(d^3 \cdot (\log \log d)^4 / (\log^4(d)\cdot \eps^6) ), 
\end{equation*}
copies,
it follows from~\Cref{item:split-large-pca} of \Cref{thm:bucket-algo} that
the $\widehat{\balpha}_1, \ldots, \widehat{\balpha}_{\br}$ approximate the largest $\br$ eigenvalues of $\rho$ up to $\epsilon$ error in TV distance. 
We also know from~\Cref{item:split-full-disturbance} from~\Cref{thm:bucket-algo} that the full spectrum of $\rho$ is disturbed by at most $\eps$ in TV distance by the measurement $\{\bPi, \overline{\bPi}\}$. 
Therefore, it suffices to estimate the eigenvalues of $\bsigma = \overline{\bPi}\rho\overline{\bPi}$ up to $\eps$ error. 

Let $\bbeta_1 \geq \cdots \geq \bbeta_d$ be the eigenvalues of $\bsigma$. 
By~\Cref{item:split-small-misclassification} from \Cref{thm:bucket-algo}, we know that $B(1+\eps)\geq \bbeta_1$. 
Therefore, for all integers $1 \leq j \leq k$,
\begin{equation*}
\tr(\sigma^{2j}) = \sum_{i=1}^d \bbeta_i^{2j}\leq d \cdot (B(1+\epsilon))^{2j}
\leq d \cdot (2B)^{2j}
\leq d 2^{2k} B^{2k} \cdot B^{2(j-k)},
\end{equation*}
where we have used the trivial bound $\eps \leq 1$. As a result, by \Cref{prop:conditionedmomentvar}, each $\bY_k$ is an unbiased estimator of $\tr(\bsigma^k)$ with variance at most
\begin{align*}
    \Var[\bY_k]
    &=\frac{24^k}{d} \sum_{j=0}^{k-1} \left(\frac{k d}{n}\right)^{k-j} \tr(\sigma^{2j}) \\
    &\leq 96^k B^{2k}k^k \sum_{j=0}^{k-1} \left(\frac{d}{nB^2}\right)^{k-j} \\
    &\leq 96^k B^{2k}k^k \cdot k\epsilon^2 \tag{because $n = O(dB^{-2}\epsilon^{-2})$} \\
    &\leq 96^k B^{2k} k^{2k} \epsilon^2 \leq 96^k B^{2k}d^{2c}\epsilon^2.
\end{align*}
where the last inequality follows because $k \leq K = c \log d / \log \log d$.

Recall that our goal is to show that $\bY_k$ is close to $\tr(\bsigma^k)$ for all $k \in [K]$ with probability $0.99$. Towards applying Chebyshev's inequality, we choose $V_k = \sqrt{100 K}\cdot \sqrt{\Var[\bY_k]} = O(\sqrt{K}\cdot 10^k B^{k}d^{c}\epsilon)$ such that
\begin{equation*}
    \Pr\left[\ABS{\bY_k - \tr(\sigma^k)} \geq V_k\right] \leq \frac{1}{100K}, \quad \text{for all } k\in [K].
\end{equation*}
Applying the union bound over all $K$ moments, we conclude that the following holds with probability $0.99$:
\begin{equation*}
    \ABS{\bY_k - \tr(\bsigma^k)} < V_k , \quad \text{for all } k\in [K].
\end{equation*}
By~\cref{thm:smallestbucketLMM}, we can find an estimate $\widehat{\bbeta}$ via a randomized algorithm which satisfies
\begin{align*}
    \E_{\widehat{\bbeta}} \dtv{\widehat{\bbeta}}{\bbeta}
     &\leq O\Big( \frac1K\sqrt{Bd} + 2^{9K/2}B\sum_{k=1}^K B^{-k}V_k\Big) \\
     &\leq O\Big( \frac1K\sqrt{
     \frac{\epsilon^2 K^2}{d} \cdot d} + 2^{9K/2}B\sum_{k=1}^K B^{-k} \cdot \sqrt{K} \cdot 10^k B^{k}d^{c}\epsilon\Big) \\
     &\leq O\Big( \eps + 2^{9K/2}B K^{3/2} 10^K d^{c}\epsilon\Big) \\
     &= O\Big( \eps + 2^{17K/2}BK^{3/2} d^c\epsilon\Big) \\
     &= O\Big( \eps + \frac{2^{17K/2}K^{7/2} \eps^3}{d^{1-c}}\Big) \tag{because $B = O(\epsilon^2 K^2 / d)$} \\
     &= O\Big( \eps + \frac{(c \log d/\log \log d)^{7/2}\eps^3}{d^{(2 - 19c)/2}}\Big) \leq O(\eps),
\end{align*}
where the last equality is due to $K = c\log d / \log \log d \leq c\log d$, and the last inequality is because $c\in (0, 2/19)$. 
The claim then follows from applying Markov's inequality. 
\end{proof}

\section{Bucketing, alignment error, and tomography} \label{sec:lmm-fails}

Recall that if $\{\Pi, \overline{\Pi}\}$ is a projective measurement which approximately splits the spectrum of~$\rho$ into the large and small buckets, the alignment error is the uniquely quantum error resulting from $\rho$ being disturbed by the measurement $\{\Pi, \overline{\Pi}\}$,
measured by the distance between the spectrum of $\rho$ and the spectrum of $\Pi \rho \Pi + \overline{\Pi} \rho \overline{\Pi}$.
In this section, we show that learning a good bucketing of $\rho$ essentially requires learning $\rho$, i.e.\ performing tomography of $\rho$,
and moreover that the relationship goes both ways.
In particular, we will show the following two results.

\begin{enumerate}
    \item First, we will show that if we have a tomography algorithm that can perform fidelity principal component analysis (PCA) tomography, then we can use it to perform bucketing with a small alignment error.
    \item Second, we will consider a natural family of quantum states and show that a good bucketing algorithm for this family can be used to design a good tomography algorithm for this family of quantum states.
    The family of quantum states we consider is those that are maximally mixed on a subspace of rank $r$.
    As there are known lower bounds for performing tomography on states of this form,
    this implies a lower bound for bucketing.
\end{enumerate}

\subsection{Fidelity PCA tomography implies bucketing with small alignment error}\label{sec:pca}

Perhaps the most natural method for learning a bucketing $\{\bPi, \overline{\bPi}\}$ of $\rho$ is to run a tomography algorithm to produce an estimate $\widehat{\brho}$ of $\rho$ and set $\bPi$ to be the projection onto $\widehat{\brho}$'s top $r$ eigenvalues, for some number $r$.
Letting $\widehat{\brho}_{\leq r} = \bPi \cdot \widehat{\brho} \cdot \bPi$ be the projection of $\widehat{\brho}$ onto its top $r$ eigenvectors, we will show that this bucketing has low alignment error if $\widehat{\brho}$ is a good approximation to the top $r$ eigenspace of $\rho$.
In particular, we want $\widehat{\brho}$ to satisfy the following \emph{principal component analysis (PCA)} guarantee.

\begin{definition}[Fidelity PCA error]
    Let $\rho \in \C^{d \times d}$ be a mixed state with eigenvalues $\alpha_1 \geq \cdots \geq \alpha_d$.
    Let $\widehat{\rho}_{\leq r}$ be a rank-$r$ PSD matrix. Then $\widehat{\rho}_{\leq r}$ has \emph{rank-$r$ fidelity PCA error $\epsilon$} with respect to~$\rho$ if
    \begin{equation*}
        \sum_{i=1}^r \alpha_i + \tr(\widehat{\rho}_{\leq r}) - 2\cdot \fidelity(\rho, \widehat{\rho}_{\leq r}) \leq  \epsilon.
    \end{equation*}
\end{definition}

To understand this fidelity PCA measure, suppose $\widehat{\rho}_{\leq r}$ were equal to the projection of $\rho$ onto its top $r$ eigenvalues.
Then
\begin{equation*}
    \fidelity(\rho, \widehat{\rho}_{\leq r}) = \alpha_1 + \cdots + \alpha_r,
    \qquad
    \text{and so}
    \qquad 
    \sum_{i=1}^r \alpha_i + \tr(\widehat{\rho}_{\leq r}) - 2\cdot \fidelity(\rho, \widehat{\rho}_{\leq r}) = 0,
\end{equation*}
meaning that $\widehat{\rho}_{\leq r}$ has a rank-$r$ fidelity PCA error of $0$ with respect to $\rho$.
More generally, the quantity $\sum_{i=1}^r \alpha_i + \tr(\widehat{\rho}_{\leq r}) - 2\cdot \fidelity(\rho, \widehat{\rho}_{\leq r})$ is actually minimized by this $\widehat{\rho}_{\leq r}$, meaning that
\begin{equation*}
    \sum_{i=1}^r \alpha_i + \tr(\widehat{\rho}_{\leq r}) - 2\cdot \fidelity(\rho, \widehat{\rho}_{\leq r}) \geq 0
\end{equation*}
for all $\widehat{\rho}_{\leq r}$ (which corresponds to every $\widehat{\rho}_{\leq r}$ have nonnegative fidelity PCA error).
To see this, if $\Pi$ is the projection onto $\widehat{\rho}_{\leq r}$'s $r$ nonzero eigenvalues, we have
\begin{align}
    \fidelity(\rho, \widehat{\rho}_{\leq r})
    & = \fidelity(\Pi\rho\Pi, \widehat{\rho}_{\leq r})
    \tag{by \cite[Proposition 3.12 (4.)]{Wat18}}\nonumber\\
    & \leq \sqrt{\tr(\Pi\rho\Pi) \cdot \tr(\widehat{\rho}_{\leq r})} \tag{by \cite[Proposition 3.12 (6.)]{Wat18}}\nonumber\\
    & \leq \frac{1}{2} \cdot \tr(\Pi \rho \Pi) + \frac{1}{2} \cdot \tr(\widehat{\rho}_{\leq r}) \label{eq:gonna-use-this-later}\\
    & \leq\frac{1}{2} \cdot (\alpha_1 + \cdots + \alpha_r) + \frac{1}{2} \cdot \tr(\widehat{\rho}_{\leq r}),\nonumber
\end{align}
where the second inequality is because $2 a b \leq a^2 + b^2$.
Thus,
\begin{equation*}
    \sum_{i=1}^r \alpha_i + \tr(\widehat{\rho}_{\leq r}) - 2\cdot \fidelity(\rho, \widehat{\rho}_{\leq r})
    \geq \sum_{i=1}^r \alpha_i + \tr(\widehat{\rho}_{\leq r}) - 2\cdot\Big(\frac{1}{2} \cdot (\alpha_1 + \cdots + \alpha_r) + \frac{1}{2} \cdot \tr(\widehat{\rho}_{\leq r})\Big)
    = 0.
\end{equation*}
Finally, when $r = d$ and $\widehat{\rho} \coloneqq \widehat{\rho}_{\leq d}$ is a density matrix (i.e.\ it has trace 1), then the rank-$d$ fidelity PCA error is just $2 \cdot (1 - \fidelity(\rho, \widehat{\rho}))$, twice the infidelity.
We note that a related fidelity PCA measure was studied in \cite[Theorem 1.19]{OW17a}, with quantum affinity used in place of the fidelity.

We now show that a small fidelity PCA error 
implies a bucketing with a small alignment error.

\begin{lemma}[PCA implies bucketing]
    \label{lem:pca-implies-bucketing}
    Let $\rho$ be a $d$-dimensional quantum state and let $\widehat{\rho}_{\leq r}$ have rank-$r$ fidelity PCA error $\epsilon$ with respect to $\rho$. Setting $\Pi$ to be the projector onto $\widehat{\rho}_{\leq r}$'s nonzero eigenspace, we have that
    \begin{equation*}
        \dtv{\spec(\Pi \rho \Pi + \overline{\Pi} \rho \overline{\Pi} )}{\spec(\rho)}
    \leq \epsilon.
    \end{equation*}  
\end{lemma}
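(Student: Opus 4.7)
My plan is to reduce the total variation distance bound to a simple algebraic quantity via Cauchy's interlacing (as was done in the proof of \Cref{thm:bucket-algo}), and then to control that quantity using the hypothesized fidelity PCA error.

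First, I would mimic the endgame of the proof of \Cref{thm:bucket-algo}: let $\alpha_1 \geq \cdots \geq \alpha_d$ be the eigenvalues of $\rho$, let $\beta_1, \ldots, \beta_r$ be the eigenvalues of $\Pi \rho \Pi$ restricted to the range of $\Pi$, and let $\beta_{r+1}, \ldots, \beta_d$ be the eigenvalues of $\overline{\Pi} \rho \overline{\Pi}$ restricted to the range of $\overline{\Pi}$. Cauchy's interlacing theorem gives $\alpha_i \geq \beta_i$ for $i \leq r$ and $\alpha_i \leq \beta_i$ for $i > r$, and the fact that $\tr(\rho) = \tr(\Pi \rho \Pi + \overline{\Pi} \rho \overline{\Pi})$ lets me collapse the two sums. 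Following the same telescoping argument as in \Cref{thm:bucket-algo}, I obtain
\begin{equation*}
    \dtv{\spec(\Pi\rho\Pi + \overline{\Pi}\rho\overline{\Pi})}{\spec(\rho)} \leq \sum_{i=1}^r (\alpha_i - \beta_i) = \sum_{i=1}^r \alpha_i - \tr(\Pi \rho \Pi).
\end{equation*}

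Second, I would connect $\tr(\Pi \rho \Pi)$ to the fidelity. The key observation is the inequality already derived just before the statement of the lemma, namely $2\,\fidelity(\rho, \widehat{\rho}_{\leq r}) \leq \tr(\Pi \rho \Pi) + \tr(\widehat{\rho}_{\leq r})$, which comes from $\fidelity(\rho, \widehat{\rho}_{\leq r}) = \fidelity(\Pi \rho \Pi, \widehat{\rho}_{\leq r}) \leq \sqrt{\tr(\Pi \rho \Pi) \cdot \tr(\widehat{\rho}_{\leq r})}$ followed by AM-GM. Rearranging,
\begin{equation*}
    -\tr(\Pi \rho \Pi) \leq \tr(\widehat{\rho}_{\leq r}) - 2\,\fidelity(\rho, \widehat{\rho}_{\leq r}).
\end{equation*}

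Combining these two steps gives
\begin{equation*}
    \sum_{i=1}^r \alpha_i - \tr(\Pi \rho \Pi) \leq \sum_{i=1}^r \alpha_i + \tr(\widehat{\rho}_{\leq r}) - 2\,\fidelity(\rho, \widehat{\rho}_{\leq r}) \leq \epsilon,
\end{equation*}
where the final inequality is exactly the hypothesized rank-$r$ fidelity PCA error bound. Chaining everything produces the desired TV bound of $\epsilon$. I do not expect a real obstacle here: the fidelity PCA definition was essentially engineered so that its ``slack'' upper-bounds the one-sided sum $\sum_{i\leq r}\alpha_i - \tr(\Pi\rho\Pi)$ that naturally controls alignment error. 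The only item that requires care is that the argument for turning the TV distance into this one-sided sum uses both halves of Cauchy's interlacing together with the trace-preservation identity, rather than just the ``$\alpha_i \geq \beta_i$'' half.
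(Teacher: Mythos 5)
Your proposal is correct and follows essentially the same route as the paper's proof: both halves of Cauchy's interlacing plus trace preservation reduce the TV distance to $\sum_{i\le r}\alpha_i - \tr(\Pi\rho\Pi)$, and the AM-GM consequence $2\,\fidelity(\rho,\widehat{\rho}_{\le r}) \le \tr(\Pi\rho\Pi)+\tr(\widehat{\rho}_{\le r})$ (the paper's displayed inequality just before the lemma) upper-bounds that quantity by the fidelity PCA error. The paper also explicitly invokes the fact that TV distance is minimized on sorted vectors to handle that the $\beta_i$ need not be globally sorted, which you gesture at by referencing the earlier theorem's argument but should state outright.
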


\begin{proof}
    Let $\alpha_1 \geq \cdots \geq \alpha_d$ be the eigenvalues of $\rho$.
    Let us denote the eigenvalues of $\Pi \rho\Pi$ as $\beta_1, \dots, \beta_r$ and the eigenvalues of $\overline{\Pi} \rho \overline{\Pi}$ as $\beta_{r+1}, \dots, \beta_d$. 
    It follows from Cauchy's interlacing theorem (\cref{thm:cauchy_interlace}) that
    \begin{align*}
        \alpha_i &\geq \beta_i, \quad \text{for }i\in \{1,\ldots, r\}, \\
        \alpha_i &\leq \beta_i, \quad \text{for }i\in \{r+1,\ldots, d\}. 
    \end{align*}
    As we have seen in the proof of~\Cref{thm:bucket-algo}, it is not necessarily true that $\beta_{r} \geq \beta_{r+1}$, and so $\beta_1, \ldots, \beta_d$ are not necessarily in sorted order. But because the TV distance between two vectors is minimized when they are sorted~\cite[Proposition 2.2]{OW15}, we have
    \begin{align*}
        \dtv{\spec(\Pi\rho \Pi + \overline{\Pi} \rho \overline{\Pi})}{\spec(\rho)} &\leq \frac12 \left(\sum_{i=1}^r (\alpha_i - \beta_i) + \sum_{i=r+1}^d (\beta_i - \alpha_i)\right) \\
        &= \sum_{i=1}^r (\alpha_i - \beta_i) \tag{because $\sum_{i=1}^d \alpha_i = \sum_{i=1}^d \beta_i = 1$}\\
        &= \sum_{i=1}^r \alpha_i - \tr(\Pi \rho \Pi)\\
        &\leq \sum_{i=1}^r \alpha_i + \tr(\widehat{\rho}_{\leq r}) - 2\fidelity(\rho, \widehat{\rho}_{\leq r}). \tag{by \Cref{eq:gonna-use-this-later}}
    \end{align*}
    But this is at most $\epsilon$ since $\widehat{\rho}_{\leq r}$ has rank-$r$ fidelity PCA error $\epsilon$.
\end{proof}

How many copies are actually needed to perform rank-$r$ fidelity PCA?
Prior to answering this, let us first consider the related problem of rank-$r$ fidelity tomography, in which $\rho$ is promised to have rank $r$ (rather than in the PCA setting, where we make no such assumption).
The best known rank-$r$ fidelity tomography algorithms with entangled measurements use $\widetilde{O}(dr/\epsilon)$ copies~\cite{HHJ+16}
to achieve infidelity $\epsilon$,
and the best known algorithms with unentangled measurements use $\widetilde{O}(dr^2/\epsilon)$ copies~\cite{CHL+23,FO24}.
We expect that the best rank-$r$ fidelity PCA algorithms should be able to match the copy complexity of these best known tomography algorithms, although this is not yet known.
The closest existing result is \cite[Theorem 1.19]{OW17a}, which  gives a rank-$r$ PCA-style algorithm with entangled measurements using $n = \widetilde{O}(dr/\epsilon)$ copies; however, the precise guarantee is for quantum affinity rather than quantum fidelity, and it is slightly weaker than the best PCA-type bound one would hope for.
We do believe it might be possible to show that the unentangled measurement fidelity tomography algorithm from~\cite{CHL+23} might also have a fidelity PCA result.
However, we have chosen not to explore this as their bound for the simpler rank-$r$ tomography case comes with additional log factors that we can't afford to lose.

We can also obtain a bound on the number of copies needed for fidelity PCA by instead performing trace distance PCA.
To begin, let us define trace distance PCA.

\begin{definition}[Trace distance PCA]
    \label{def:trace-distance-pca}
    Let $\rho \in \C^{d \times d}$ be a mixed state with eigenvalues $\alpha_1 \geq \cdots \geq \alpha_d$.
    Let $\widehat{\rho}_{\leq r}$ be a rank-$r$ PSD matrix. Then $\widehat{\rho}_{\leq r}$ has \emph{rank-$r$ trace distance PCA error $\epsilon$} with respect to~$\rho$ if
    \begin{equation*}
        2\cdot \dtr(\rho, \widehat{\rho}_{\leq r}) - \sum_{i=r+1}^d \alpha_i \leq  \epsilon.
    \end{equation*}
\end{definition}

Just as in the case of fidelity PCA,
if $\widehat{\rho}_{\leq r}$ is equal to the projection of $\rho$ onto its top $r$ eigenvalues, then $\widehat{\rho}_{\leq r}$ has rank-$r$ trace distance PCA error $\epsilon = 0$,
and otherwise its error is $> 0$.
The following lemma shows that 
an algorithm for trace distance PCA can be converted to an algorithm for fidelity PCA.

\begin{lemma}[Trace distance PCA implies fidelity PCA]
    \label{lem:trace-pca-implies-fidelity-pca}
    Let $\widehat{\rho}_{\leq r}$ be a rank-$r$ PSD matrix. If $\widehat{\rho}_{\leq r}$ has rank-$r$ trace distance PCA error $\epsilon$ with respect to~$\rho$, it also has rank-$r$ fidelity PCA error at most $\epsilon$.
\end{lemma}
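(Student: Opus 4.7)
The plan is to reduce this to a sub-normalized analogue of the Fuchs--van de Graaf inequality. Specifically, I want to show that for any two PSD operators $A, B$,
\begin{equation*}
    \tr(A) + \tr(B) - 2\,\fidelity(A, B) \;\leq\; \|A - B\|_1.
\end{equation*}
Granted this, the lemma follows immediately by plugging in $A = \rho$ and $B = \widehat{\rho}_{\leq r}$: since $\tr(\rho) = \sum_{i=1}^r \alpha_i + \sum_{i=r+1}^d \alpha_i$ and $\|A-B\|_1 = 2\,\dtr(\rho, \widehat{\rho}_{\leq r})$, rearranging gives
\begin{equation*}
    \sum_{i=1}^r \alpha_i + \tr(\widehat{\rho}_{\leq r}) - 2\,\fidelity(\rho, \widehat{\rho}_{\leq r}) \;\leq\; 2\,\dtr(\rho, \widehat{\rho}_{\leq r}) - \sum_{i=r+1}^d \alpha_i \;\leq\; \epsilon,
\end{equation*}
which is exactly the fidelity PCA guarantee with error $\epsilon$.

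For the generalized Fuchs--van de Graaf inequality, I would use the Powers--Stormer inequality $\|\sqrt{A} - \sqrt{B}\|_2^2 \leq \|A - B\|_1$ (valid for all PSD $A, B$). Expanding the left-hand side gives
\begin{equation*}
    \|\sqrt{A} - \sqrt{B}\|_2^2 = \tr(A) + \tr(B) - 2\,\tr(\sqrt{A}\sqrt{B}).
\end{equation*}
Since $|\tr(X)| \leq \|X\|_1$ for any operator $X$, we have $\tr(\sqrt{A}\sqrt{B}) \leq |\tr(\sqrt{A}\sqrt{B})| \leq \|\sqrt{A}\sqrt{B}\|_1 = \fidelity(A, B)$. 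Chaining these two bounds yields the desired inequality.

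The main step is really just invoking Powers--Stormer, which is standard but may not have been stated in the preliminaries. If a self-contained argument is preferred, one can instead use optimal sub-normalized purifications $\ket{\psi_A}, \ket{\psi_B}$ with $|\langle \psi_A | \psi_B \rangle| = \fidelity(A,B)$ (Uhlmann's theorem extends to the PSD setting), observe that $\| \ket{\psi_A} - \ket{\psi_B}\|_2^2 = \tr(A) + \tr(B) - 2\,\fidelity(A, B)$ after suitable phase choice, and apply contractivity of the trace norm under partial trace together with the elementary bound $\|\ketbra{u} - \ketbra{v}\|_1 \geq \|\ket{u}-\ket{v}\|_2^2$ for vectors with real nonnegative inner product. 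Either route finishes the proof in a few lines.
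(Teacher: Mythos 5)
Your proof is correct, and it takes a genuinely different route from the paper's. The paper pads both matrices up to $(d{+}1)$-dimensional \emph{normalized} density matrices---embedding $\rho$ unchanged and adding a final coordinate with weight $1 - \tr(\widehat{\rho}_{\leq r})$ to $\widehat{\rho}_{\leq r}$---then applies the standard Fuchs--van de Graaf inequality to the padded pair and unwinds the traces. You instead go directly to a subnormalized Fuchs--van de Graaf inequality, $\tr(A) + \tr(B) - 2\fidelity(A,B) \leq \norm{A-B}_1$ for PSD $A,B$, obtained by chaining the Powers--St\o rmer inequality $\norm{\sqrt{A}-\sqrt{B}}_2^2 \leq \norm{A-B}_1$ with the expansion $\norm{\sqrt{A}-\sqrt{B}}_2^2 = \tr(A)+\tr(B)-2\tr(\sqrt{A}\sqrt{B})$ and the elementary bound $\tr(\sqrt{A}\sqrt{B}) \leq \norm{\sqrt{A}\sqrt{B}}_1 = \fidelity(A,B)$. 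Both routes produce the same inequality and the final rearrangement using $\tr(\rho) = \sum_{i\leq r}\alpha_i + \sum_{i>r}\alpha_i$ is identical. Your version has a real advantage: the paper's padding implicitly needs $\tr(\widehat{\rho}_{\leq r}) \leq 1$ so that $\widehat{\sigma}_{\leq r}$ is PSD and so that the trace-distance identity picks up $+\bigl(1 - \tr(\widehat{\rho}_{\leq r})\bigr)$ with the right sign, whereas your argument works for arbitrary PSD $\widehat{\rho}_{\leq r}$ with no side conditions. The cost is importing Powers--St\o rmer, which the paper's preliminaries do not state; your alternate purification-based sketch is a reasonable fallback but is essentially re-deriving Powers--St\o rmer and is neither shorter nor more elementary than just citing it.
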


\begin{proof}
    Since $\widehat{\rho}_{\leq r}$ is not normalized, we cannot directly use the Fuchs--van de Graaf inequalities from~\Cref{lem:trace-vs-fidelity} to lower bound the trace distance in terms of fidelity. Instead, we define the related density matrices $\sigma, \widehat{\sigma}_{\leq r} \in \mathbb{C}^{(d+1)\times (d+1)}$ that satisfy
    \begin{equation*}
        \sigma = \rho, \qquad\widehat{\sigma}_{\leq r} = \widehat{\rho}_{\leq r} + (1 - \tr(\widehat{\rho}_{\leq r}))\cdot \ketbra{d+1}.  
    \end{equation*}
    The trace distance and fidelity of these two mixed states relate to those of $\rho$ and $\widehat{\rho}_{\leq r}$ as below
    \begin{equation*}
        2\cdot \dtr(\sigma, \widehat{\sigma}_{\leq r}) = 2\cdot \dtr(\rho, \widehat{\rho}_{\leq r}) + 1 - \tr(\widehat{\rho}_{\leq r}),\qquad
        \fidelity(\sigma, \widehat{\sigma}_{\leq r}) = \fidelity(\rho, \widehat{\rho}_{\leq r}).
    \end{equation*}
    Thus we can use the Fuchs--van de Graaf inequalities (\Cref{lem:trace-vs-fidelity}) to deduce 
    \begin{align*}
        2\cdot\dtr(\rho, \widehat{\rho}_{\leq r})
        &= 2\cdot\dtr(\sigma, \widehat{\sigma}_{\leq r}) - 1 + \tr(\widehat{\rho}_{\leq r}) \nonumber \\
        &\geq 2 - 2\cdot \fidelity(\sigma, \widehat{\sigma}_{\leq r}) - 1 + \tr(\widehat{\rho}_{\leq r}) \\
        &=1 + \tr(\widehat{\rho}_{\leq r}) - 2\fidelity(\rho, \widehat{\rho}_{\leq r}).
    \end{align*}
    If we rearrange this by writing $1 = \sum_{i=1}^r \alpha_i + \sum_{i=r+1}^d \alpha_i$, we have that
    \begin{equation*}
        \sum_{i=1}^r \alpha_i + \tr(\widehat{\rho}_{\leq r}) - 2\fidelity(\rho, \widehat{\rho}_{\leq r})
        \leq 2\cdot \dtr(\rho, \widehat{\rho}_{\leq r}) - \sum_{i=r+1}^d \alpha_i.
    \end{equation*}
    But this is at most $\epsilon$ because $\widehat{\rho}_{\leq r}$ has rank-$r$ trace distance PCA.
\end{proof}

\cite[Corollary 1.6]{OW16} shows that it is possible to perform rank-$r$ trace distance PCA with entangled measurements up to error $\epsilon$ using $O(dr/\epsilon^2)$ copies.
This is essentially tight, as Haah et al.~\cite{HHJ+16} showed that $\widetilde{\Omega}(dr/\epsilon^2)$ copies are necessary to perform trace distance tomography on a state $\rho$, promised that it is rank $r$ (which is a special case of rank-$r$ trace distance PCA).
\Cref{lem:trace-pca-implies-fidelity-pca} then implies that $O(dr/\epsilon^2)$ samples also suffice to perform rank-$r$ fidelity PCA.
From our above discussion, we believe that this has a suboptimal $\epsilon$ dependence but an optimal dependence on $d$ and $r$.

\subsection{Bucketing implies tomography with small infidelity}

We consider the class of states $\rho = \frac{1}{r} \cdot P$, where $P$ is a rank-$r$ projector, and show that a good enough bucketing algorithm implies a tomography algorithm for this class of states.
This implies a lower bound for the number of copies needed to perform bucketing, as there are known lower bounds for the number of copies needed to perform tomography on this class of states.
First, however, we must answer the question: how to formally define a good enough bucketing algorithm?
 Even though bucketing may be complicated to define in full generality, when we restrict our attention to the family of states described above, our requirements for bucketing are simpler to state.

\begin{definition}[Simple bucketing for maximally mixed states over subspace]
    \label{def:bucketing}
    Given a state of the form $\rho = \frac{1}{r} \cdot P$, a projective measurement $\{\Pi, \overline{\Pi} = I - \Pi\}$ defines a \emph{simple bucketing with error $\epsilon$} if it satisfies the following two properties.
    \begin{itemize}
        \item[$\circ$] \textbf{Classification of eigenvalues.} $\Pi \rho \Pi$ contains all the large eigenvalues, and $\overline{\Pi}\rho\overline{\Pi}$ the small eigenvalues. Formally,
        \begin{equation*}
            \Pi \rho \Pi \succcurlyeq \frac{1}{2r}\cdot \Pi,
            \quad \text{and} \quad \overline{\Pi}\rho\overline{\Pi} \preccurlyeq \frac{1}{2r}\cdot \overline{\Pi}.
        \end{equation*}

        \item[$\circ$] \textbf{Small alignment error.} Measuring $\rho$ using $\{\Pi, \overline{\Pi}\}$ disturbs its spectrum by at most $\epsilon$ in total variation distance:
        \begin{equation*}
            \dtv{\spec\left(\Pi \rho \Pi + \overline{\Pi} \rho \overline{\Pi} \right)}{\spec(\rho)} \leq \epsilon.
        \end{equation*}
    \end{itemize}
\end{definition}

We refer to this as a ``simple'' bucketing because a more general bucketing scheme need not look as simple as this; for example, it might involve more than just two buckets, or it might allow for some (slight) overlap between the buckets.
(Indeed, even our bucketing scheme from \Cref{thm:bucket-algo} does not precisely fit this mold.)
Thus, although we do not claim that this definition captures all possible bucketing schemes, we use it as a simple proof-of-concept to demonstrate the challenges that a bucketing scheme must overcome.
The following theorem shows that a simple bucketing with $\{\Pi, \overline{\Pi}\}$ for a quantum state $\rho$ drawn from the family of states described above can be converted to a state $\widehat{\rho}$ that is close to $\rho$.

\begin{theorem}[Bucketing implies learning]
    \label{thm:bucketing-implies-tomography-fid}
    Let $P$ be a rank-$r$ projector and $\rho = \frac{1}{r}\cdot P$ be a mixed state. Let $\{\Pi, \overline{\Pi}=I-\Pi\}$ be a simple bucketing with error $\epsilon$ for $\rho$.
    Set $\widehat{\rho} = \Pi/\tr(\Pi)$. Then it holds that
    \begin{equation*}
        1 - \fidelity(\rho, \widehat{\rho})
        \leq \epsilon.
    \end{equation*}
\end{theorem}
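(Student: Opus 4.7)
The plan is to exploit the rigid structure that the projection $P$ imposes on the block decomposition with respect to the split $\{\Pi, \overline{\Pi}\}$. Writing $P$ block-wise as
\begin{equation*}
    P = \begin{pmatrix} A & B \\ B^\dagger & D \end{pmatrix},
\end{equation*}
where $A$ and $D$ act on the supports of $\Pi$ and $\overline{\Pi}$ respectively, the identity $P^2 = P$ yields $B B^\dagger = A(I - A)$ and $B^\dagger B = D(I - D)$. Since these two matrices share their nonzero spectra, the nonzero eigenvalues of $A(I - A)$ and $D(I - D)$ are in bijection. Denote the eigenvalues of $\Pi\rho\Pi$ and $\overline{\Pi}\rho\overline{\Pi}$ (on the supports of $\Pi$ and $\overline{\Pi}$) by $\mu_1,\dots,\mu_s$ and $\nu_1,\dots,\nu_{d-s}$, where $s = \tr(\Pi)$. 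The simple bucketing conditions, combined with $\rho \preceq I/r$, force $r\mu_i \in [1/2, 1]$ and $r\nu_j \in [0, 1/2]$, so analyzing $x(1 - x) = y(1 - y)$ on these disjoint intervals shows that every paired $(\mu_i, \nu_j)$ satisfies $\mu_i + \nu_j = 1/r$.

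Next I will leverage this pairing to establish the rank identity $s = r$. Unpaired $\mu_i$'s correspond to $r\mu_i(1 - r\mu_i) = 0$, and since $\mu_i \geq 1/(2r) > 0$ this forces $\mu_i = 1/r$; similarly unpaired $\nu_j$'s must equal $0$. Letting $t$ denote the number of pairs, the pairing relation gives $\sum_{\text{pairs}}(\mu_i + \nu_j) = t/r$, hence
\begin{equation*}
    1 = \tr(\rho) = \sum_i \mu_i + \sum_j \nu_j = (s - t)/r + t/r = s/r,
\end{equation*}
so $s = r$.

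With $s = r$ in hand, I derive the fidelity from $\rho^{1/2}\widehat{\rho}\rho^{1/2} = P\Pi P/(rs)$, whose nonzero eigenvalues equal $r\mu_i$: this yields $\fidelity(\rho, \widehat{\rho}) = (1/\sqrt{r})\sum_i \sqrt{\mu_i}$, which after grouping the $r - t$ unpaired terms $\mu_i = 1/r$ (each contributing $1$ to the sum) becomes
\begin{equation*}
    1 - \fidelity(\rho, \widehat{\rho}) = \frac{1}{r}\sum_{\text{pairs}} \bigl(1 - \sqrt{r\mu_i}\bigr).
\end{equation*}
A direct computation of the sorted total-variation distance between $\spec(\rho) = (1/r,\dots,1/r,0,\dots,0)$ and the block spectrum (whose first $s = r$ sorted entries are the $\mu_i$'s and whose remaining entries are the $\nu_j$'s), combined with $\sum_i \mu_i + \sum_j \nu_j = 1$, reduces the alignment error to
\begin{equation*}
    \dtv{\spec(\Pi\rho\Pi + \overline{\Pi}\rho\overline{\Pi})}{\spec(\rho)} = \sum_j \nu_j = \frac{1}{r}\sum_{\text{pairs}}(1 - r\mu_i).
\end{equation*}

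The proof will conclude via a termwise comparison: since $r\mu_i \in [1/2, 1]$ implies $\sqrt{r\mu_i} \geq r\mu_i$, each summand satisfies $1 - \sqrt{r\mu_i} \leq 1 - r\mu_i$, giving $1 - \fidelity(\rho, \widehat{\rho}) \leq \dtv{\spec(\Pi\rho\Pi + \overline{\Pi}\rho\overline{\Pi})}{\spec(\rho)} \leq \epsilon$. The main obstacle in this plan is spotting the pairing structure enforced by $P^2 = P$ and extracting the surprisingly rigid conclusion $s = r$; once these structural facts are in place, both $1 - \fidelity$ and the alignment error collapse into matching sums over the paired indices, and the comparison is elementary.
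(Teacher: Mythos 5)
Your proof is correct but takes a more structural route than the paper's, which is quite a bit shorter. The paper establishes the rank identity directly (the property $\Pi\rho\Pi \succeq \frac{1}{2r}\Pi$ forces the rank of $\Pi$ to be at most $r$, and if it were less than $r$ then $\overline{\Pi}\rho\overline{\Pi}$ would have an eigenvalue $1/r$), and then finishes in a few lines via $\fidelity(\rho,\widehat{\rho}) = \frac{1}{r}\tr\sqrt{\Pi P\Pi} \geq \frac{1}{r}\tr(\Pi P\Pi) = \tr(\Pi\rho\Pi)$, using only $\Pi P\Pi \preceq I$, so that $1 - \fidelity(\rho,\widehat{\rho}) \leq \tr(\overline{\Pi}\rho\overline{\Pi})$, which it shows equals the TV distance between the disturbed and true spectra. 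Your block-decomposition route --- deriving $BB^\dagger = A(I-A)$ and $B^\dagger B = D(I-D)$ from $P^2 = P$, pairing nonzero eigenvalues so that $\mu_i + \nu_j = 1/r$, and recovering the rank identity as a corollary --- is a genuinely different way to get there: at bottom both proofs invoke the same elementary fact $\sqrt{x}\geq x$ for $x\in[0,1]$, but the paper applies it at the level of $\tr\sqrt{\Pi P\Pi} \geq \tr(\Pi P\Pi)$ while you apply it term-by-term to the paired eigenvalues after exposing the rigid structure the projector equation imposes. Your approach buys extra structural insight (the exact equality $\mu_i+\nu_j=1/r$ for paired eigenvalues, and the matching-sums picture relating $1-\fidelity$ to the alignment error); the paper's buys brevity. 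One small slip that does not affect the argument: the nonzero eigenvalues of $P\Pi P/(rs)$ are $\mu_i/r$, not $r\mu_i$ (the latter are the nonzero eigenvalues of $P\Pi P$ itself), though your subsequent formula $\fidelity(\rho,\widehat{\rho})=\frac{1}{\sqrt{r}}\sum_i\sqrt{\mu_i}$ is correct.
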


Hence, to perform simple bucketing, one must perform fidelity tomography, at least for this family of quantum states.
This implies a lower bound for the number of copies needed to perform bucketing because, as stated above, there are known lower bounds for the number of copies needed to perform tomography on this class of states;
in particular, it was shown by Haah et al.~\cite{HHJ+16} that $\widetilde{\Omega}(dr/\epsilon)$ copies are necessary to learn a state which is maximally mixed over a rank-$r$ subspace to infidelity $\epsilon$.
In fact, this follows from a stronger lower bound that $\widetilde{\Omega}(dr/\delta^2)$ copies are necessary to learn such a state to trace distance error $\delta$.
For this, we believe that the ``tilde'' is an artifact of their proof,
and that the right lower bound should be $\Omega(dr/\delta^2)$, which would be optimal since it is known that $O(dr/\delta^2)$ copies suffice to perform tomography on rank-$r$ states using entangled measurements~\cite{OW16,HHJ+16}.
This would imply that $\Omega(dr/\epsilon)$ copies are ncessary to learn these states to infidelity $\epsilon$.
In addition, we believe that $n = \Omega(dr^2/\delta^2)$ copies should be required to learn these states to trace distance error $\delta$ using unentangled measurements, which would match the known upper bound for rank-$r$ unentangled tomography~\cite{GKKT20}, although we stress that as far as we know, showing a lower bound of $\Omega(dr^2/\delta^2)$ for any family of rank-$r$ states is an open problem (the known lower bound of $\Omega(d^3/\delta^2)$ from~\cite{CHL+23} applies to states which are full rank).
Again, this would imply that $\Omega(dr^2/\epsilon)$ copies are necessary to learn these states to infidelity $\epsilon$ using unentangled measurements.
Together, however, we believe that these suggest that simple bucketing should require $n = \Omega(dr/\epsilon)$ copies for entangled measurements and $n = \Omega(dr^2/\epsilon)$ copies for unentangled measurements.

Let us now try to interpret this state of affairs.
Typically, as in \Cref{thm:bucket-algo},
a bucketing scheme picks a threshold $0 \leq B \leq 1$ and tries to bucket $\rho$'s eigenvalues into those which are bigger than $B$ and those which are smaller than $B$.
If, say, $B$ were to equal $1/2r$ for some integer $r$ and the provided $\rho$ was maximally mixed on a subspace of dimension $r$, then this would entail a simple bucketing of $\rho$, which the previous paragraph suggests would require $n = \Omega(dr/\epsilon) = \Omega(dB^{-1}/\epsilon)$ copies in the entangled case and
$n = \Omega(dB^{-2}/\epsilon)$ copies in the unentangled case.
(We note that our unentangled bucketing algorithm from \Cref{thm:bucket-algo} uses $O(dB^{-2}/\epsilon^2)$ copies, suggesting that it is optimal at least for constant $\epsilon$. That said, we stress again that this bucketing algorithm does not quite give a ``simple'' bucketing.)
On the flip side, any $\rho$ will have at most $r' = B^{-1}$ eigenvalues greater than $B$, and  a natural way to bucket them is to perform rank-$r'$ fidelity PCA, which \Cref{sec:pca} suggests might be achievable with $O(dr'/\epsilon) = O(dB^{-1}/\epsilon)$ copies using entangled measurements.
All in all, we believe that these results suggest that $\Theta(dB^{-1}/\epsilon)$ copies might be the optimal number of copies needed to bucket based on a threshold $B$ using entangled measurements (perhaps for any natural notion of ``bucketing''), and $\Theta(dB^{-2}/\epsilon)$ might be the optimal number of copies needed to bucket using unentangled measurements.

\begin{proof}[Proof of~\Cref{thm:bucketing-implies-tomography-fid}]
    First, we observe that the projector $\Pi$ must have rank exactly equal to $r$.
    To see why $\Pi$ cannot have rank greater than $r$,
    note that
     $\Pi \rho \Pi$ has rank at most $r$
     because $\rho$ is rank $r$.
     Thus, the ``classification of eigenvalues'' property $\Pi \rho \Pi \succeq \frac{1}{2r}\cdot \Pi$ of~\Cref{def:bucketing}
    cannot hold if $\Pi$'s rank is greater than $r$. 
    To see why $\Pi$ cannot have rank less than $r$,
    note that if does, then there exists an eigenvector of $\rho$ that is orthogonal to $\Pi$, which means that $\overline{\Pi} \rho \overline{\Pi}$ contains this eigenvector, with eigenvalue $\frac{1}{r}$. This again contradicts the property $\overline{\Pi} \rho \overline{\Pi} \preceq \frac{1}{2r}\cdot \overline{\Pi}$. This implies that $\Pi$ has rank $r$.

    Since $P$ and $\Pi$ are both rank $r$,
    \begin{align*}
        \fidelity(\rho, \widehat{\rho})
        &= \tr\sqrt{\sqrt{\widehat{\rho}}\rho\sqrt{\widehat{\rho}}}\\
        &= \frac{1}{r}\cdot \tr\sqrt{\sqrt{\Pi} P \sqrt{\Pi}}  \\
        &= \frac{1}{r}\cdot \tr(\sqrt{\Pi P \Pi})  \\
        &\geq \frac{1}{r}\cdot \tr(\Pi P \Pi) \tag{because $\Pi P \Pi \preceq I$} \\
        &= \tr(\Pi \rho \Pi).
    \end{align*}
    Let the eigenvalues of $\Pi \rho \Pi + \overline{\Pi} \rho \overline{\Pi}$ be $\alpha_1 \geq \dots \geq \alpha_d$. From the ``classification of eigenvalues'' property of~\Cref{def:bucketing}, we know that the top $r$ eigenvalues $\alpha_1, \dots, \alpha_r$ are the eigenvalues of $\Pi \rho \Pi$,
    and the bottom $(d-r)$ eigenvalues $\alpha_{r+1}, \ldots, \alpha_d$ are the eigenvalues of $\overline{\Pi} \rho \overline{\Pi}$.
    Moreover, we know that the $\alpha_i$'s are all $\leq 1/r$
    since $\rho$'s maximum eigenvalue is $1/r$.
    Thus,
    \begin{align*}
        1 - \fidelity(\rho, \widehat{\rho})
        \leq \tr(\overline{\Pi} \rho \overline{\Pi})
        = \sum_{i=r+1}^d \alpha_i
        &= \frac{1}{2} \cdot \Big(1 - \sum_{i=1}^r \alpha_i\Big) + \frac{1}{2} \cdot \sum_{i=r+1}^d \alpha_i \tag{because $\sum_i \alpha_i = 1$}\\
        &=\frac{1}{2} \cdot \sum_{i=1}^r\Big(\frac{1}{r} - \alpha_i\Big) + \frac{1}{2} \cdot \sum_{i=r+1}^d \alpha_i\\
        &=\frac{1}{2} \cdot \sum_{i=1}^r\Big|\frac{1}{r} - \alpha_i\Big| + \frac{1}{2} \cdot \sum_{i=r+1}^d |\alpha_i|.
    \end{align*}
    This is equal to the total variation distance between $\alpha$
    and the distribution $(\frac{1}{r}, \ldots, \frac{1}{r}, 0, \ldots, 0)$, which is the spectrum of $\rho$.
    Thus, we have shown that
    \begin{equation*}
        1 - \fidelity(\rho, \widehat{\rho})
        \leq \dtv{\spec\left(\Pi \rho \Pi + \overline{\Pi} \rho \overline{\Pi} \right)}{\spec(\rho)},
    \end{equation*}
    and this is at most $\epsilon$ because $\{\Pi, \overline{\Pi}\}$ is a simple bucketing with error $\epsilon$ for $\rho$.
\end{proof}

\section{Computational evidence for lower bounds} \label{sec:lower}

In this section, we perform numerical experiments to understand the optimal number of copies needed for spectrum estimation, in the setting where fully entangled measurements are allowed.
To do so, we consider the following two-point distinguishing game.
\begin{definition}[$\alpha$-versus-$\beta$ spectrum distinguishing game] \label{def:distinguishing-game}
    Let $\alpha = (\alpha_1, \ldots, \alpha_d)$ and $\beta = (\beta_1, \ldots, \beta_d)$ be two possible mixed state spectra.
    The \emph{$\alpha$-versus-$\beta$ spectrum distinguishing game} refers to the following task.
    A distinguisher is given $n$ copies of a random mixed state $\brho$ sampled from the following distribution.
    \begin{enumerate}
        \item Flip a fair $\{\mathrm{H}, \mathrm{T}\}$ coin and let $\bc$ be the outcome. If $\bc = \mathrm{H}$, set $\bgamma = \alpha$. If $\bc = \mathrm{T}$, set $\bgamma = \beta$.
        \item Sample a Haar random unitary $\bU \sim \mathrm{U}(d)$.
        \item Set $\brho = \bU \cdot \bgamma \cdot \bU^{\dagger}$.
    \end{enumerate}
    The distinguisher performs a measurement on $\brho^{\otimes n}$ and outputs a guess for whether $\brho$'s spectrum is equal to $\alpha$ or $\beta$, and it succeeds if it guesses correctly.
\end{definition}

\noindent
Suppose there is an algorithm $\calA$ for spectrum estimation which uses $f(d, \epsilon, \delta)$ copies;
in particular, given $n$ copies of a mixed state $\rho \in \C^{d \times d}$ with spectrum $\gamma$, $\calA$ outputs an estimator $\widehat{\bgamma}$ such that $\dtv{\gamma}{\widehat{\bgamma}} \leq \epsilon$ with probability $1-\delta$.
Then we can use it to design a distinguisher for the $\alpha$-versus-$\beta$ spectrum distinguishing game, as follows.
Suppose $\dtv{\alpha}{\beta} > 2\epsilon$.
Then given $\brho^{\otimes n}$, the distinguisher runs $\calA$ to produce an estimate $\widehat{\bgamma}$ of $\brho$'s spectrum;
if $\dtv{\alpha}{\bgamma} < \dtv{\beta}{\bgamma}$, the distinguisher guesses that $\brho$'s spectrum is equal to $\alpha$, and otherwise it guesses that it is equal to $\beta$.
We claim that the distinguisher succeeds with probability at least $1 - \delta$.
To see this, suppose without loss of generality that $\bgamma$ is selected to be $\alpha$.
Then with probability $1-\delta$,
we will have $\dtv{\alpha}{\widehat{\bgamma}} \leq \epsilon$.
This means that $\dtv{\beta}{\widehat{\bgamma}} > \epsilon$,
as otherwise we would have $\dtv{\alpha}{\beta} \leq \dtv{\alpha}{\widehat{\bgamma}} + \dtv{\beta}{\widehat{\bgamma}} \leq 2\epsilon$, a contradiction.
Thus, $\dtv{\alpha}{\widehat{\bgamma}} \leq \epsilon < \dtv{\beta}{\widehat{\bgamma}}$,
and so the algorithm will correctly guess that $\brho$'s spectrum is equal to $\alpha$ with probability at least $1-\delta$.

This means that a lower bound on the number of samples needed to win the $\alpha$-versus-$\beta$ spectrum distinguishing game translates to a lower bound on the number of samples $f(d, \epsilon, \delta)$ needed to perform spectrum estimation.
In the classical setting of sorted distribution estimation, the lower bound of $n = \Omega(d / \log(d))$ samples is proven using essentially a two-point distinguishing game of this form~\cite{WY16,HJW18}, 
so we expect this distinguishing task to capture most of the difficulty of spectrum estimation.
We will primarily consider the cases when $\epsilon$ and $\delta$ are constants,
in which case we are aiming to lower bound the ``$d$ dependence'' of spectrum estimation.
Below, we describe our approach for numerically simulating the optimal distinguisher.

\subsection{Defining the optimal distinguisher}

It is well-known that the optimal distinguisher using entangled measurements in the $\alpha$-versus-$\beta$ distinguishing game has a natural definition in terms of the representation theory of the groups $S_n$ and $\text{U}(d)$.
We will outline this distinguisher and our numerical simulations of it below.
We assume familiarity with representation theory,
and refer the reader to~\cite{Wri16} for a thorough treatment of this topic.

Write $\rho_{\alpha}$ for the average mixed state the distinguisher receives in the case when $\bgamma = \alpha$, i.e.
\begin{equation*}
    \rho_{\alpha} = \E_{\bU \sim \text{U}(d)} (\bU \cdot \alpha \cdot \bU^{\dagger})^{\otimes n}.
\end{equation*}
Define $\rho_{\beta}$ similarly. Then the distinguisher's task is to distinguish $\rho_{\alpha}$ from $\rho_{\beta}$, and its optimal success probability is given by
\begin{equation}\label{eq:opt-success}
    \frac{1}{2} + \frac{1}{2} \cdot \dtr(\rho_{\alpha},\rho_{\beta})
    = \frac{1}{2} + \frac{1}{2} \cdot \tr(Q \cdot (\rho_{\alpha} - \rho_{\beta}))
    = \frac{1}{2} \cdot \tr(Q \cdot \rho_{\alpha}) + \frac{1}{2} \cdot \tr(\overline{Q} \cdot \rho_{\beta}),
\end{equation}
where $Q$ is the projector onto the positive eigenvalues of the matrix $\rho_{\alpha} - \rho_{\beta}$.
In particular, the optimal distinguisher measures its input with the projective measurement $\{Q, \overline{Q}\}$,
guesses that $\brho$'s spectrum is $\alpha$ if it observes $Q$,
and guess that $\brho$'s spectrum is $\beta$ if it observes $\overline{Q}$.

To define this projector $Q$, we need to understand the eigenbases of $\rho_{\alpha}$ and $\rho_{\beta}$, and this can be done via representation theory.
In particular, the representation theoretic result known as \emph{Schur-Weyl duality} states that there is a unitary change of basis $U_{\mathrm{Schur}}$ on $(\C^d)^{\otimes n}$ such that
\begin{equation*}
    U_{\mathrm{Schur}} \cdot \brho^{\otimes n} \cdot U_{\mathrm{Schur}}^\dagger
    = \sum_{\lambda \vdash n, \ell(\lambda) \leq d} \ketbra{\lambda} \otimes I_{\dim(\lambda)} \otimes \nu_{\lambda}(\brho).
\end{equation*}
Here, given a Young diagram $\lambda$,
we write $(\kappa_{\lambda}, \mathrm{Sp}_{\lambda})$ for the corresponding irrep of $S_n$
and $(\nu_{\lambda}, V_{\lambda}^d)$ for the corresponding irrep of the general linear group $\mathrm{GL}(d)$ (which also serves as an irrep of the unitary group $\mathrm{U}(d)$).
Then $\dim(\lambda)$ is the dimension of the Specht module $\mathrm{Sp}_{\lambda}$, and so the $I_{\dim(\lambda)}$ term is just the identity over the symmetric group irrep corresponding to $\lambda$.
Given this, we can write
\begin{align*}
    \rho_{\alpha}
    &=\E_{\bU \sim \text{U}(d)} (\bU \cdot \alpha \cdot \bU^{\dagger})^{\otimes n} \\
    &= U_{\mathrm{Schur}}^{\dagger} \cdot \Big(\sum_{\lambda \vdash n, \ell(\lambda) \leq d} \ketbra{\lambda} \otimes I_{\dim(\lambda)} \otimes \E_{\bU \sim \text{U}(d)}\nu_{\lambda}(\bU \cdot \alpha \cdot \bU^{\dagger})\Big) \cdot U_{\mathrm{Schur}}\\
    &= U_{\mathrm{Schur}}^{\dagger} \cdot \Big(\sum_{\lambda \vdash n, \ell(\lambda) \leq d} \ketbra{\lambda} \otimes I_{\dim(\lambda)} \otimes \frac{s_{\lambda}(\alpha)}{\dim(V_{\lambda}^d)} \cdot I_{\dim(V_{\lambda}^d)}\Big) \cdot U_{\mathrm{Schur}}. \tag{by Schur's lemma}
\end{align*}
Similarly, we have that
\begin{equation*}
    \rho_{\beta} = U_{\mathrm{Schur}}^{\dagger} \cdot \Big(\sum_{\lambda \vdash n, \ell(\lambda) \leq d} \ketbra{\lambda} \otimes I_{\dim(\lambda)} \otimes \frac{s_{\lambda}(\beta)}{\dim(V_{\lambda}^d)} \cdot I_{\dim(V_{\lambda}^d)}\Big) \cdot U_{\mathrm{Schur}}
\end{equation*}
Hence, if we define
\begin{equation*}
    \Pi_{\lambda} = U^{\dagger}_{\mathrm{Schur}} \cdot \big(\ketbra{\lambda} \otimes I_{\dim(\lambda)} \otimes I_{\dim(V_{\lambda}^d)}\big) \cdot U_{\mathrm{Schur}}
\end{equation*}
to be the projector onto the $\lambda$-irrep space, then we have the following eigendecompositions for our two matrices:
\begin{equation*}
    \rho_{\alpha} = \sum_{\lambda \vdash n, \ell(\lambda) \leq d} \frac{s_{\lambda}(\alpha)}{\dim(V_{\lambda}^d)} \cdot \Pi_{\lambda},
    \quad \text{and} \quad 
    \rho_{\beta} = \sum_{\lambda \vdash n, \ell(\lambda) \leq d} \frac{s_{\lambda}(\beta)}{\dim(V_{\lambda}^d)} \cdot \Pi_{\lambda}.
\end{equation*}
Hence, the optimal distinguisher is defined in terms of the projector
\begin{equation*}
    Q = \sum_{\lambda: s_{\lambda}(\alpha) > s_{\lambda}(\beta)} \Pi_{\lambda}.
\end{equation*}
The set of matrices $\{\Pi_{\lambda}\}$ gives a projective measurement known as \emph{weak Schur sampling}. 
Given $\rho_{\alpha}$, weak Schur sampling produces the Young diagram $\lambda$ with probability $\dim(\lambda) \cdot s_{\lambda}(\alpha)$,
and given $\rho_{\beta}$ it produces this Young diagram with probability $\dim(\lambda) \cdot s_{\lambda}(\beta)$.
Putting everything together, we can view the optimal tester as being equal to the following maximum likelihood tester on the Young diagram produced by weak Schur sampling.
\begin{enumerate}
    \item Perform weak Schur sampling on $\brho^{\otimes n}$ obtain a random Young diagram $\blambda \vdash n$. \label{item:wss}
    \item Compare the probability of the measurement outcome being $\blambda$ if the underlying state $\brho$ has spectrum $\alpha$ or if it has spectrum $\beta$; output $\alpha$ if it gives the larger probability, and $\beta$ otherwise.
\end{enumerate}

\subsection{Numerically simulating the optimal distinguisher}

By \Cref{eq:opt-success}, we can now compute the success probability of the optimal distinguisher as
\begin{equation*}
    \frac{1}{2} \cdot \tr(Q \cdot \rho_{\alpha}) + \frac{1}{2} \cdot \tr(\overline{Q} \cdot \rho_{\beta}) = \frac{1}{2} \cdot \sum_{\lambda : s_{\lambda}(\alpha) > s_{\lambda}(\beta)} \dim(\lambda) \cdot s_{\lambda}(\alpha)
    + \frac{1}{2} \cdot \sum_{\lambda : s_{\lambda}(\alpha) \leq s_{\lambda}(\beta)} \dim(\lambda) \cdot s_{\lambda}(\beta).
\end{equation*}
This gives an explicit formula which can be computed in theory.
However, in practice, computing this formula is intractable because it involves a sum over the $2^{\Theta(\sqrt{n})}$ Young diagrams of size $n$.
Instead, we approximate this sum by sampling.
To begin, write $\mathrm{SW}^n(\bgamma)$ for the distribution on Young diagrams in \Cref{item:wss} above produced by performing weak Schur sampling on $\brho^{\otimes n}$, assuming that $\brho$ has spectrum $\bgamma$.
It was shown by O'Donnell and Wright~\cite{OW15} that the following classical algorithm is able to sample a Young diagram $\blambda$ from this distribution.
\begin{enumerate}
    \item Sample an $n$-letter $\bgamma$-random word $\bw = (\bw_1, \ldots, \bw_n) \in [d]^n$, meaning that each coordinate $\bw_i$ is sampled independently from the distribution $\bgamma$.
    \item Perform the the Robinson--Schensted--Knuth algorithm on $\bw$ to attain a Young diagram $\blambda = \mathrm{shRSK}(\bw)$.
\end{enumerate}
The RSK algorithm is an efficient, polynomial-time algorithm, and so this gives an efficient algorithm for sampling from $\mathrm{SW}^n(\bgamma)$.
With this in place, we use the following algorithm for approximating the success probability of the optimal distinguisher.
\begin{definition}[Approximating the success probability of the optimal distinguisher] \label{def:code-alg}
Let $m$ be a specified number of samples.
The following algorithm produces an estimate of the success probability of the optimal distinguisher in the $\alpha$-versus-$\beta$ spectrum distinguishing game.
\begin{enumerate}
    \item Sample $m$ Young diagrams from $\mathrm{SW}^n(\alpha)$. Let $\mathrm{succ}_\alpha$ be the number of Young diagrams $\blambda$ such that $s_{\blambda}(\alpha) > s_{\blambda}(\beta)$.
    \item Sample $m$ Young diagrams from $\mathrm{SW}^n(\beta)$. Let $\mathrm{succ}_\beta$ be the number of Young diagrams $\blambda$ such that $s_{\blambda}(\alpha) \leq s_{\blambda}(\beta)$.
    \item Output $(\mathrm{succ}_{\alpha} + \mathrm{succ}_{\beta})/(2m)$.
\end{enumerate}
\end{definition}

As above, producing samples from $\mathrm{SW}^n(\alpha)$ and $\mathrm{SW}^n(\beta)$ can be done efficiently.
In addition, the comparisons between $s_{\blambda}(\alpha)$ and $s_{\blambda}(\beta)$ in steps 1 and 2 can be made efficient as well.
For these, it suffices to compute $s_{\blambda}(\alpha)$ and $s_{\blambda}(\beta)$, and this can be done efficiently and stably due to the algorithm of~\cite{CDEKK19}.
Overall, then, this is an efficient algorithm.
Moreover, standard Chernoff bounds say that the estimate it produces is within $\epsilon$ of the true optimal success probability except with probability $2e^{-4m \epsilon^2}$.

\subsection{Hard to distinguish pairs of spectra}

Now we describe the spectra $\alpha$ and $\beta$ we run our numerical experiments on.
To motivate the spectra we choose, let us consider the classical analogue of the $\alpha$-versus-$\beta$ spectrum distinguishing game.
Doing so requires defining the following natural classical analogue of a mixed state's spectrum.
\begin{definition}[Classical spectrum]
Given a distribution $p = (p_1, \ldots, p_d)$, we say that \emph{$p$ has spectrum $\gamma = (\gamma_1, \ldots, \gamma_d)$} if $\mathrm{sort}(p) = \gamma$,
where we recall that $\mathrm{sort}(\cdot)$ is the function that sorts its input from highest to lowest.
\end{definition}
\noindent
In the classical analogue of the $\alpha$-versus-$\beta$ spectrum distinguishing game, a distribution $\bq = (\bq_1, \ldots, \bq_d)$ is chosen as follows: with probability $1/2$, it is a uniformly random distribution with spectrum $\alpha$,
and with probability $1/2$, it is a uniformly random distribution with spectrum $\beta$.
(This can be sampled by setting $\bq$ to a uniformly random permutation of $\alpha$ in the first case a uniformly random permutation of $\beta$ in the second case.)
The distinguisher is then given $n$ samples from $\bq$ and asked to guess whether $\bq$ has spectrum $\alpha$ or $\beta$.

We have already seen an example of this distinguishing game in the uniformity testing problem from \Cref{ex:uniformity}.
There, the the two spectra were $\alpha = (\tfrac1d, \ldots, \tfrac1d)$ and $\beta = (\tfrac2d, \ldots, \tfrac2d, 0, \ldots, 0)$, and we saw that $n = \Theta(d^{1/2})$ samples are necessary and sufficient to win the distinguishing game with high probability.
The intuition was that $\alpha$ and $\beta$ differ in their second moment, i.e.\ $p_2(\alpha) = 1/d$ and $p_2(\beta) = 2/d$, where $p_2(\cdot)$ is the power sum symmetric polynomial, and this difference in second moments can be noticed by looking at the pairwise collisions in a sample $\bx = (\bx_1, \ldots, \bx_n)$ drawn from these distributions.
In particular, we expect more pairwise collisions if $\bx$ is drawn from $\beta$ rather than $\alpha$. However, for this distinguisher to work, we have to see \emph{some} pairwise collisions in the sample,
and since both spectra have all probability values at most $O(1/d)$, we should only expect to see a pairwise collision when $n = \Omega(d^{1/2})$.
This is because in either case, the expected number of collisions is
\begin{equation*}
    \E_{\bx}\Big[\sum_{i < j} \mathbbm{1}[\bx_i = \bx_j]\Big]
    = \sum_{i < j} \Pr[\bx_i = \bx_j]
    = \sum_{i < j} \Big(\sum_{k=1}^d \bq_k^2\Big)
    \leq \sum_{i < j} \Big(\sum_{k=1}^d O(1/d^2)\Big)
    = O(n^2/d),
\end{equation*}
which is $o(1)$ if $n = o(d^{1/2})$.
Thus, we only expect to see collisions once $n = \Omega(d^{1/2})$.

This shows an example of a pair of spectra $\alpha$ and $\beta$ where $n = \Omega(d^{1/2})$ samples are necessary to win the distinguishing game.
It also suggests that to design spectra which require more samples to distinguish, we simply need to ensure that their second moments match so that they are not easily distinguished from the pairwise collisions of their samples.
In particular, we want two spectra $\alpha$ and $\beta$ such that $\dtv{\alpha}{\beta} = \Omega(1)$, $p_2(\alpha) = p_2(\beta)$, and all $\alpha_i$'s and $\beta_i$'s are $O(1/d)$.
Then $\alpha$ and $\beta$ might differ noticeably on their \emph{third} moments $p_3(\alpha)$ and $p_3(\beta)$, in which case we would hope to distinguish them by counting the number of 3-wise collisions in the sample $\bx$. (Equivalently, we want to guess which spectrum we're given by estimating $p_3(\bq)$, and as shown in \Cref{sec:learning-sorted}, the (normalized) number of 3-wise collisions in the sample $c_3(\bx)$ is an unbiased estimator for $p_3(\bq)$.) But the above reasoning also implies that because all the $\alpha_i$'s and $\beta_i$'s are $O(1/d)$, then the expected number of 3-wise collisions is $O(n^3/d^2)$, and so we need at least $n = \Omega(d^{2/3})$ samples to distinguish these two distributions. 
Extending this further, if $\alpha$ and $\beta$ agree on their first $k-1$ moments for any constant $k$,
i.e.\ $p_2(\alpha) = p_2(\beta)$, \ldots, $p_{k-1}(\alpha) = p_{k-1}(\beta)$, then we expect that $n = \Omega(d^{1-1/k})$ samples should be required to distinguish them.
Indeed, essentially all of the lower bounds for estimating various symmetric properties of a distribution (which only depend on that distribution's spectrum),
as well as for computing the entire distribution's spectrum,
proceed
by constructing pairs of spectra with matching moments along these lines~\cite{RRSS09,Val08,VV11a,WY16,HJW18}.

These are the pairs of spectra we will consider in the $\alpha$-versus-$\beta$ spectrum distinguishing game.
If $\alpha$ and $\beta$ agree on their first $k-1$ moments, then it is natural to distinguish them using their $k$-th moment, which one can do by estimating $p_{k}(\bgamma) = \tr(\brho^{k})$.
As in the classical case, there is a natural minimum variance unbiased estimator for this quantity.
It was first introduced by O'Donnell and Wright in~\cite{OW15} but it was given a much cleaner interpretation by Badescu, O'Donnell, and Wright in~\cite{BOW19}, who showed that it is a natural quantum analogue of the classical collision statistics.
When all the $\alpha_i$'s and $\beta_i$'s are $O(1/d)$, it can be shown to require $\Omega(d^{2-2/k})$ copies to produce a good enough estimate to distinguish $\alpha$ and $\beta$.
We believe that this should essentially be the best algorithm for distinguishing $\alpha$ and $\beta$,
which would mean that for any constant $k$, we expect a lower bound of $n = \Omega(d^{2-2/k})$ copies.
If this scaling is accurate, it would imply that spectrum estimation cannot be performed in $n = O(d^{2 - \gamma})$ copies for any constant $\gamma > 0$.

For our numerics, we will only focus on the three cases when $\alpha$ and $\beta$ agree on their first $k - 1$ moments, for $k = 2, 3, 4$.
For $k = 2$, we take the distributions
\begin{align}
    \alpha^{(2,d)} &= (1/d, \dots, 1/d) \label{eq:2-moments-matching} \\
    \beta^{(2,d)} &= (2/d, \dots, 2/d, 0, \dots, 0). \nonumber
\end{align}
The two distributions (trivially) have the same first moment but differ in second moments ($1/d$ versus $2/d$), and $\dtv{\alpha^{(2,d)}}{\beta^{(2,d)}} = \frac{1}{2}$.
From our heuristic scaling, these should require $\Theta(d^{2 - 2/2}) = \Omega(d)$ copies to distinguish,
and indeed it is a theorem of Childs, Harrow, and Wocjan~\cite{CHW07} that $n = \Theta(d)$ copies are necessary and sufficient to distinguish these spectra.
We use this pair of spectra to sanity check our numerics and show that they do match this theoretically predicted sample complexity.
Next, for $k = 3$, we use the following two spectra, defined when $d$ is a multiple of $3$. Let 
\begin{align}
    \alpha^{(3,d)} &= \frac{1}{d}\cdot \Big( \underbrace{\frac{3}{2}, \cdots, \frac{3}{2}}_{\frac{2}{3}d}, \underbrace{\vphantom{ \frac{3}{2} }0, \cdots, 0}_{\frac{1}{3}d} \Big), \label{eq:3-moments-matching} \\
    \beta^{(3,d)} &= \frac{1}{d}\cdot \Big( \underbrace{\vphantom{\frac{1}{3}}2, \cdots, 2}_{\frac{1}{3}d}, \underbrace{\frac{1}{2}, \cdots, \frac{1}{2}}_{\frac{2}{3}d} \Big) .  \nonumber
\end{align}
These distributions match on the first and second moments but differ on the third moments (their third moments are $9/(4d^2)$ and $11/(4d^2)$, respectively), and $\dtv{\alpha^{(3,d)}}{\beta^{(3,d)}} = \frac{1}{3}$.
Thus, our heuristic suggests that these should require $\Omega(d^{2 - 2/3}) = \Omega(d^{4/3})$ copies to distinguish.
Finally, for $k = 4$, we construct another family of distribution pairs which match on the first three moments (defined when $d$ is a multiple of $4$): 
\begin{align}
    \alpha^{(4,d)} &= \frac1d \cdot \Big( \underbrace{1 + \frac{1}{\sqrt{2}}, \cdots, 1 + \frac{1}{\sqrt{2}}}_{\frac{1}{2}d}, \underbrace{1 - \frac{1}{\sqrt{2}}, \cdots, 1 - \frac{1}{\sqrt{2}}}_{\frac{1}{2}d} \Big), \label{eq:4-moments-matching} \\
    \beta^{(4,d)} &= \frac1d \cdot \Big( \underbrace{2, \cdots, 2}_{\frac{1}{4}d}, \underbrace{1, \cdots, 1}_{\frac{1}{2}d}, \underbrace{0, \cdots, 0}_{\frac{1}{4}d} \Big). \nonumber
\end{align}
Their fourth moments are $17/(4d^3)$ and $18/(4d^3)$, and they satisfy $\dtv{\alpha^{(4,d)}}{\beta^{(4,d)}} = \frac{1}{4}$.
Our heuristic suggests that these should require $\Omega(d^{2 - 2/4}) = \Omega(d^{3/2})$ copies to distinguish.

\subsection{Results of our simulations}

We now describe the results of our simulations.
Our code can be found at \texttt{\href{https://github.com/ewin-t/spectrum-game}{github.com/ewin-t/spectrum-game}}.

Recall that our goal is to empirically estimate the sample complexity of the $\alpha$-versus-$\beta$ spectrum distinguishing game, as described in \Cref{def:distinguishing-game}.
The distinguishing game reduces to performing spectrum estimation up to error $\frac12 \cdot \dtv{\alpha}{\beta}$, so this sample complexity lower bounds the sample complexity of spectrum estimation.

For a given $\alpha$, $\beta$, and $n$, the optimal success probability for distinguishing can be estimated efficiently, as described in \Cref{def:code-alg}.
We wrote code to perform this estimator, with $m = 10^5$ samples.
We then consider this game for the classes of distributions which match on the first $k-1$ moments, $\alpha^{(k,d)}$ and $\beta^{(k,d)}$, for $k = 2$ (see \Cref{eq:2-moments-matching}), $k = 3$ (see \Cref{eq:3-moments-matching}), and $k = 4$ (see \Cref{eq:4-moments-matching}); then, we iterate over a range of $d$'s, and for each $d$ we find the smallest $n$ for which our estimated optimal success probability exceeds $0.7$.
Our hypothesis predicts that $n$ scales as $\Theta(d^{2 - 2/k})$, so for $k = 2$, $3$, and $4$, this corresponds to scalings of $\Theta(d)$, $\Theta(d^{4/3})$, and $\Theta(d^{3/2})$, respectively.

In the plots that follow, we graph the data, along with lines of best fit among the class of power law functions $a \cdot x^c + b$, and among functions with the predicted scaling---for example, functions of the form $a \cdot x^{4/3} + b$ when $k = 3$.
We use non-linear least squares, provided by \texttt{scipy.optimize.curve\_fit}, to find this line of fit for $n$ as a function of $d$.

\begin{figure}[!htp]
    \centering
    \includegraphics[width=0.95\linewidth]{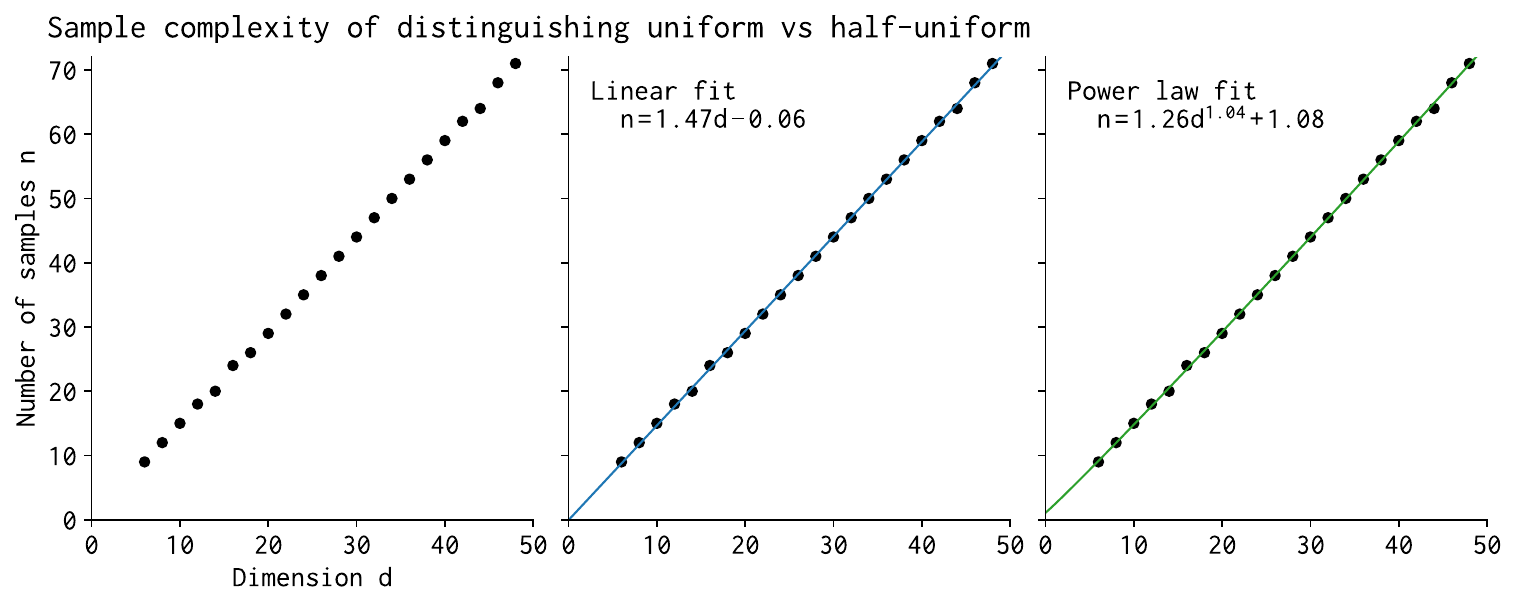}
    \caption{Testing uniformity: This plot displays, for a dimension $d$, the smallest number of samples $n$ necessary to correctly distinguish between $\alpha^{(2,d)}$ and $\beta^{(2,d)}$ with success probability $0.7$. Success probabilities are estimated by taking the empirical probability from $10^5$ trials. $d$ is taken be a multiple of $2$ ranging from $6$ to $48$. The corresponding $n$ values of the data points are 
    $9, 12, 15, 18, 20, 24, 26, 29, 32, 35, 38, 41, 44, 47, 50, 53, 56, 59, 62, 64, 68, 71$. }
    \label{fig:1-moment}
\end{figure}

We plot our results for $k = 2$ in \Cref{fig:1-moment}: the algorithm appears to have a rate $n = \Theta(d)$ for constant $\eps$, matching the heuristic as well as the theoretical upper and lower bounds in~\cite{CHW07}. 

\begin{figure}[!htp]
    \centering
    \includegraphics[width=0.95\linewidth]{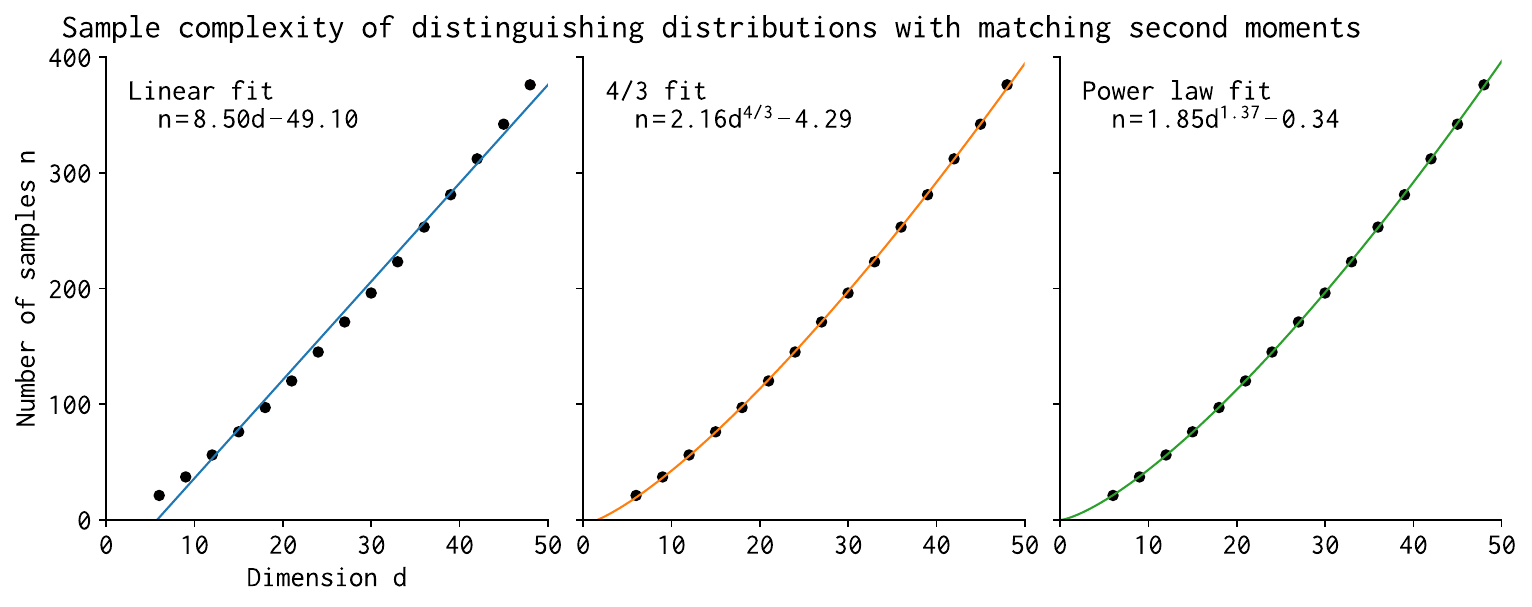}
    \caption{Testing distributions with matching second moments: This plot displays, for a dimension $d$, the smallest number of samples $n$ necessary to correctly distinguish between $\alpha^{(3,d)}$ and $\beta^{(3,d)}$ with success probability $0.7$. Success probabilities are estimated by taking the empirical probability from $10^5$ trials.
    $d$ is taken to be a multiple of $3$ ranging from $6$ to $48$. The corresponding $n$ values of the data points are $21, 37, 56, 76, 97, 120, 145, 171, 196, 223, 253, 281, 312, 342, 376$. }
    \label{fig:2-moments}
\end{figure}

We plot our results for $k = 3$ in \Cref{fig:2-moments}: the best power law fit is a scaling of $n = \Theta(d^{1.37})$.
We also see that the $4/3$ line of fit matches the data better than the linear fit.
This aligns closely with our predicted scaling of $n = \Theta(d^{4/3})$.

\begin{figure}[!htp]
    \centering
    \includegraphics[width=0.95\linewidth]{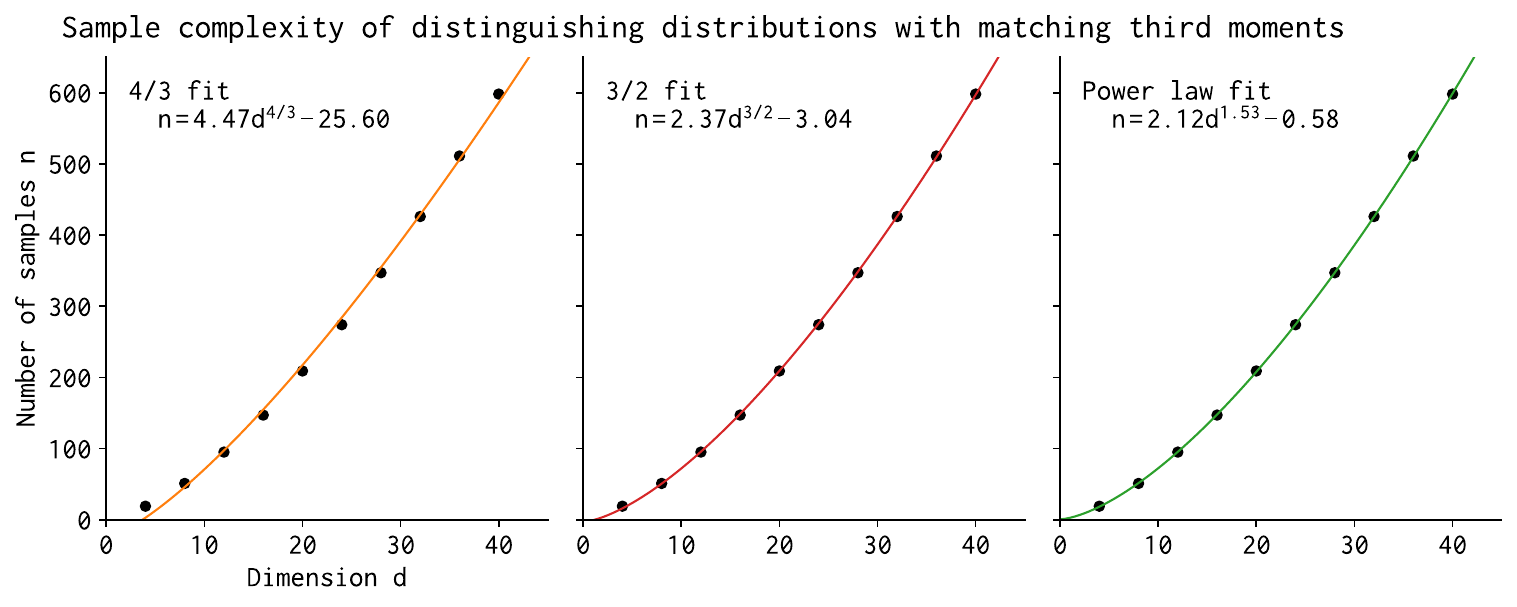}
    \caption{Testing distributions with matching third moments: This plot displays, for a dimension $d$, the smallest number of samples $n$ necessary to correctly distinguish between $\alpha^{(4,d)}$ and $\beta^{(4,d)}$ with success probability $0.7$. Success probabilities are estimated by taking the empirical probability from $10^5$ trials.
    $d$ is taken to be a multiple of $4$ ranging from $4$ to $40$. The corresponding $n$ values of the data points are $19, 51, 95, 147, 209, 274, 347, 426, 511, 598$.}
    \label{fig:3-moments}
\end{figure}

We plot our results for $k = 4$ in \Cref{fig:3-moments}: the best power law fit is a scaling of $n = \Theta(d^{1.53})$.
We also see that the $3/2$ line of fit matches the data better than the $4/3$ fit.
This aligns closely with our predicted scaling of $n = \Theta(d^{3/2})$.

Overall, the empirical scaling matches our hypothesis that the distinguishing task for $k-1$ matching moments has a scaling of $n = \Theta(d^{2 - 2/k})$, supposing that $k$ and the success probability are held constant.
This gives evidence for the hypothesis, which suggests that the scaling for spectrum estimation (with constant error and success probability) is larger than $d^{2 - \gamma}$ for any constant $\gamma > 0$.

\section*{Acknowledgments}

We thank Ryan O'Donnell for several insightful discussions over the course of this project and Yu Tang for providing his personal computing server.

A.P. is supported by DARPA under Agreement No. HR00112020023.
X.T. is supported by the U.S. Department of Energy, Office of Science, National Quantum Information Science Research Centers, Co-design Center for Quantum Advantage (C2QA) under contract number DE-SC0012704. 
This work was done in part while X.T. was visiting the Simons Institute for the Theory of Computing, supported by DOE QSA grant No. FP00010905.
E.T. is supported by the Miller Institute for Basic Research in Science, University of California Berkeley.

\bibliographystyle{alpha}
\bibliography{wright}

\end{document}